\theoremstyle{plain}
\newtheorem{thm}{Theorem}
\newtheorem{cor}[thm]{Corollary}
\theoremstyle{definition}
\newcommand{\eq}[1]{(\hyperref[eq:#1]{\ref*{eq:#1}})}
\renewcommand{\sec}[1]{\hyperref[sec:#1]{Section~\ref*{sec:#1}}}
\newcommand{\thrm}[1]{\hyperref[thrm:#1]{Theorem~\ref*{thrm:#1}}}
\newcommand{\lemm}[1]{\hyperref[lemm:#1]{Lemma~\ref*{lemm:#1}}}
\newcommand{\prop}[1]{\hyperref[prop:#1]{Proposition~\ref*{prop:#1}}}
\newcommand{\corr}[1]{\hyperref[corr:#1]{Corollary~\ref*{corr:#1}}}
\newcommand{\fig}[1]{\hyperref[fig:#1]{~\ref*{fig:#1}}}
\newcommand{\deff}[1]{\hyperref[deff:#1]{~\ref*{deff:#1}}}
\newcommand{\mA}{\mathcal{A}}
\newcommand{\mC}{\mathcal{C}}
\newcommand{\mE}{\mathcal{E}}
\newcommand{\mT}{\mathcal{T}}
\newcommand{\mF}{\mathcal{F}}
\newcommand{\mM}{\mathcal{M}}
\newcommand{\mO}{\mathcal{O}}
\newcommand{\mK}{\mathcal{K}}
\newcommand{\mbF}{\mathbb{F}}
\newcommand{\mbI}{\mathbb{I}}
\newcommand{\mbM}{\mathbb{M}}
\newcommand{\mV}{\mathcal{V}}
\newcommand{\mW}{\mathcal{W}}
\renewcommand{\*}{\textup{*}}
\DeclareMathOperator{\cone}{cone}
\newcommand{\cleq}{\preceq}
\newcommand{\cgeq}{\succeq}
\newcommand{\cle}{\prec}
\newcommand{\cge}{\succ}
\DeclareMathAlphabet{\matheu}{U}{eus}{m}{n}
\DeclareMathOperator{\Tr}{Tr}
\newcommand{\ketbra}[2]{|{#1}\rangle\!\langle{#2}|}
\newcommand{\ba}{\begin{eqnarray}}
\newcommand{\ea}{\end{eqnarray}}
\newcommand{\bann}{\begin{eqnarray*}}
\newcommand{\eann}{\end{eqnarray*}}
\newcommand{\bal}{\begin{equation}\begin{aligned}}
\newcommand{\eal}{\end{aligned}\end{equation}}
\newcommand{\dm}[1]{\ketbra{#1}{#1}}
\newcolumntype{L}[1]{>{\raggedright}p{#1}}
\newcolumntype{C}[1]{>{\centering}p{#1}}
\newcolumntype{R}[1]{>{\raggedleft}p{#1}}
\newcolumntype{D}{>{\centering\arraybackslash}X}
\renewcommand{\*}{\textup{*}}
\newcommand{\<}{\left\langle}
\renewcommand{\>}{\right\rangle}
\DeclarePairedDelimiter\abs{\lvert}{\rvert}%
\DeclarePairedDelimiter\norm{\|}{\|}%
\let\V\mV
\newcommand{\sbar}{\;\rule{0pt}{9.5pt}\right|\;}
\newcommand{\lset}{\left\{\left.}
\newcommand{\rset}{\right\}}
\begin{document}
\title{General Resource Theories in Quantum Mechanics and Beyond:\\Operational Characterization via Discrimination Tasks}

\begin{abstract}
We establish an operational characterization of general convex resource theories --- describing the resource content of not only states, but also measurements and channels, both within quantum mechanics and in general probabilistic theories (GPTs) --- in the context of state and channel discrimination. We find that discrimination tasks provide a unified operational description for quantification and manipulation of resources by showing that the family of robustness measures can be understood as the maximum advantage provided by any physical resource in several different discrimination tasks, as well as establishing that such discrimination problems can fully characterize the allowed transformations within the given resource theory.

Specifically, we introduce a quantifier of resourcefulness of a measurement in any GPT, the generalized robustness of measurement, and show that it admits an operational interpretation as the maximum advantage that a given measurement provides over resourceless measurements in all state discrimination tasks. In the special case of quantum mechanics, we connect discrimination problems with single-shot information theory by showing that the generalized robustness of any measurement can be alternatively understood as the maximal increase in one-shot accessible information when compared to free measurements. We introduce two different approaches to quantifying the resource content of a physical channel based on the generalized robustness measures, and show that they quantify the maximum advantage that a resourceful channel can provide in several classes of state and channel discrimination tasks. Furthermore, we endow another measure of resourcefulness of states, the standard robustness, with an operational meaning in general GPTs as the exact quantifier of the maximum advantage that a state can provide in binary channel discrimination tasks. Finally, we establish that several classes of channel and state discrimination tasks form complete families of monotones fully characterizing the transformations of states and measurements, respectively, under general classes of free operations. Our results establish a fundamental connection between the operational tasks of discrimination and core concepts of resource theories --- the geometric quantification of resources and resource manipulation --- valid for all physical theories beyond quantum mechanics with no additional assumptions about the structure of the GPT required. 
\end{abstract}

\author{Ryuji Takagi}
\email{rtakagi@mit.edu}
\affiliation{Center for Theoretical Physics and Department of Physics, Massachusetts Institute of Technology, Cambridge, Massachusetts 02139, USA}
\author{Bartosz Regula}
\email{bartosz.regula@gmail.com}
\affiliation{School of Physical and Mathematical Sciences, Nanyang Technological University, 637371, Singapore}
\affiliation{Complexity Institute, Nanyang Technological University, 637335, Singapore}

\maketitle


\section{Introduction}

The advantages provided by quantum phenomena in the transfer and processing of information allowed for the technological boom currently transforming areas such as communication, computation, cryptography, and sensing \cite{dowling_2003,acin_2018}. The realization that intrinsic physical properties of quantum mechanics can be regarded precisely as \textit{resources} in information processing tasks sparked an investigation of quantum information in the so-called resource-theoretic setting, aiming to establish the theoretical and practical methods to characterize both the advantages and the limitations associated with different physical properties of quantum systems, measurements, and transformations \cite{chitambar1806quantum}. Such resource theories are now commonplace in the study of a diverse range of phenomena, such as entanglement~\cite{plenio2007introduction,HOrodecki_review2009}, coherence~\cite{aberg2006quantifying,Baumgratz2014,Streltsov2017}, asymmetry~\cite{Gour2008,Marvian2016}, quantum thermodynamics~\cite{Brandao2013,Brandao_secondlaws2015}, steering~\cite{gallego_2015}, non-Markovianity~\cite{Rivas2010non-Markov,bhattacharya2018resource,wakakuwa2017operational}, magic~\cite{Veitch2014,howard_2017}, non-Gaussianity~\cite{Genoni2008,Takagi2018,albarelli2018resource}, measurement simulability and incompatibility~\cite{oszmaniec_2017,oszmaniec2018all,guerini_2017}, measurement informativeness~\cite{skrzypczyk2018robustness}, and quantum memory of channels~\cite{rosset_2018}.

Although applications of the resource-theoretic framework have enhanced systematic studies of many physical settings and significantly contributed to a deeper understanding of our capabilities in manipulating and exploiting such resources, one could wonder whether the generality of the framework allows one to go beyond specific examples and obtain results applicable to a broad class of settings, thus providing a unified picture of resources in general. 
A complete study of which features are universal among all resources, stemming from only the very foundations of quantum mechanics, therefore remains a major area of investigation, and such an approach of \textit{general resource theories} has recently gained much attention \cite{horodecki_2012,brandao_2015,delrio_2015,coecke_2016,Liu2017,Gour2017,anshu_2017,regula_2018,lami_2018,takagi2018operational,chitambar1806quantum,li_2018}. Indeed, although the framework of resource theories began with the characterization of the properties of quantum states, it has recently been adapted to the study of quantum channels~\cite{Pirandola2017fundamental,bendana_2017,gour_2018-1,wilde_2018,rosset_2018,Diaz2018usingreusing,theurer_2018,li_2018,Zhuang2018non-Gaussian,seddon2019quantifying} and measurements~\cite{heinosaari_2015,haapasalo_2015,guerini_2017,carmeli_2018,skrzypczyk2018robustness}, allowing for the description of dynamic resources on a similar footing to static ones and thus motivating the question of whether all such resources can be described by a unified formalism.

In fact, one can pose an even more fundamental question: can common features of resource theories be understood without relying on quantum mechanics at all? Despite the success of quantum mechanics, the ongoing search for an axiomatic theory of probability and correlations in nature has provided us with insight into physical theories beyond quantum, as well as allowed for a straightforward unification of the methods required to characterize physical theories including classical and quantum probability theory. The formalism of general probabilistic theories (GPTs) \cite{ludwig_1985,hartkamper_1974,davies_1970,lami_2018-2} lends itself perfectly to the investigation of states, measurements, and their transformations at a fully fundamental and general level. It is particularly suited to illuminate which assumptions and which basic features of a theory lead to operational consequences, allowing one to identify the exact axioms one has to accept in order to recover the features of quantum theory~\cite{barrett_2007,chiribella_2010,chiribella_2011,barnum_2011,masanes_2011,barnum_2014,lee_2015,lee_2018}. This leads us to extend the framework of resource theories to general probabilistic theories and investigate a unified characterization of general resources in the extensive formalism of GPTs rather than limiting it to quantum mechanics, as has been previously considered for specific examples of resource theories~\cite{Chiribella_2015entanglement,Chiribella_2015diagonalize,Chiribella2016entanglement,Chiribella_2017microcanonical,Scandolo_thesis}.

As the very word ``resource'' suggests, understanding the operational aspects of resources --- how they can be utilized for physical tasks, and what limitations a resource theory places on the conversion of physical resources --- has central importance both theoretically and practically. 
However, it frequently requires resource-specific approaches and does not easily generalize to encompass all physically relevant resource theories, and it is thus highly desired to find a fundamental class of operational tasks that would allow for the understanding of the resourcefulness of a given physical property in general settings. 
A promising candidate for such a class of operational tasks which, on the one hand, lie at the heart of the non-classical features of quantum theory~\cite{HOLEVO1973337,Helstrom,kitaev_1997,chefles_2000-1,childs_2000,acin_2001-1,bae2015quantum,jencova_2014,watrous_2018} as well as GPTs~\cite{ludwig_1985,hartkamper_1974,kimura_2010,bae_2016-1,lami_2017} and, on the other hand, have found relevance in several existing resource theories, are the tasks of state and channel discrimination. 
In particular, Piani and Watrous~\cite{Piani2009} first showed that for every entangled quantum state, there exists a channel discrimination task in which it is more useful than any separable state. 
Similar results were subsequently found in several different resource theories of states~\cite{Piani2015,Piani2018,Napoli2016,Piani2016,takagi2018operational} and measurements~\cite{carmeli_2018,skrzypczyk2018robustness,skrzypczyk_2019}, and the work of Takagi et al.~\cite{takagi2018operational} finally showed that this property is shared by any convex resource theory of quantum states. It remains to understand how general this property truly is, and whether all resources --- both static and dynamic, both within quantum mechanics and beyond --- can provide explicit advantages in such operational tasks.

A fundamental aspect of any resource theory is its \textit{quantification}, which aims to measure the amount of inherent resources contained in a given physical object and allows for a quantitative comparison with other objects within the resource theory.
This can be approached in many inequivalent ways, and a plethora of possible choices of resource measures exist~\cite{regula_2018,chitambar1806quantum}. A natural question which arises in this context is whether the given measure can be understood in an operational sense, assessing exactly the usefulness of a given object in some physical task; however, establishing such an interpretation for a given quantifier is often highly nontrivial. The family of so-called robustness measures~\cite{vidal_1999,regula_2018} stands out in this context, as two prominent members of the family have found several applications in operational settings: these are the \textit{generalized robustness} \cite{Napoli2016,Piani2016,Bu2017,Regula2018oneshot,datta_2009,brandao_2010,brandao_2011,zhao_2018,berta_2017-1,anshu_2017,Piani2015,Piani2018,takagi2018operational} and the \textit{standard robustness} \cite{brandao_2011,brandao_2007,howard_2017}. 
They are not only fundamental resource quantifiers faithfully capturing the resourcefulness of given objects with  clear geometric interpretations, but also significantly relevant to experiments --- they are directly observable, that is, can be obtained in an experiment by measuring a single, suitably chosen observable, rather than requiring complicated and expensive methods such as state tomography, allowing for the experimental quantification of resources \cite{Brandao2005,eisert_2007}.
In particular, the investigation of discrimination tasks in resource theories of states~\cite{Piani2015,Piani2018,Napoli2016,Piani2016,takagi2018operational} revealed a notable similarity: the advantage that a given resource provides in such discrimination tasks is often quantified precisely by the generalized robustness. The generality of this quantitative relation was unveiled in Ref.\,\cite{takagi2018operational}, which showed that it is true in every convex resource theory of quantum states. This, together with recent progress in the resource theories of measurements which showed a similar interpretation of robustness quantifiers for measurements in specific settings~\cite{skrzypczyk2018robustness,skrzypczyk_2019}, suggests that a unified operational interpretation of generalized robustness which could account for dynamic resources in addition to static ones might be possible. However, no such universal interpretation of any of the robustness measures has been obtained thus far --- and in fact, despite several known applications of the generalized robustness in the context of discrimination, it has not been known whether the standard robustness plays a role in understanding such problems whatsoever.

The other predominant problem which resource theories are expected to tackle is the \textit{manipulation} of resources, which asks whether it is possible to transform one resource to another when constrained to only employ the transformations allowed within the given resource theory. It is particularly insightful to understand this question in relation with the operational and quantitative aspects of resource theories --- does there exist a family of resource measures or operational tasks which completely characterizes resource manipulation? The work of Skrzypczyk and Linden~\cite{skrzypczyk2018robustness} indicated a potential of discrimination-type problems in this respect by showing that a family of state discrimination tasks fully characterize the simulability of measurements by classical post-processing. A comprehensive extension of this type of an operational characterization to more general settings which, together with quantification, would complete an operational characterization of general resource theories, has hitherto remained elusive.

\subsection{Summary of results}

In this work, we solve the problems raised above under the umbrella of operational tasks of discrimination --- specifically, we characterize general convex resource theories of states, measurements, and channels, establishing tools for their quantification, endowing the class of robustness measures with an explicit operational interpretation as the advantage that a physical object can provide in various discrimination tasks, and showing that such discrimination tasks fully characterize the conversion between states or measurements with free operations of the given resource theory. The generality of our methods and results establishes a universal operational description of resources in general probabilistic theories (GPTs), revealing strong connections between several aspects of general resources and showing that the underlying convex structures can provide deep insight into the properties of physical systems also in an operational setting. We stress that all of our results are immediately applicable to broad classes of physically relevant quantum resource theories of states (including entanglement, coherence, magic, asymmetry, athermality\ldots), measurements (informativeness, simulability, separable and positive partial transpose measurements\ldots), and channels (quantum memories, free channels in the resource theories of states\ldots).

We begin by providing an introduction to the main concepts of general probabilistic theories, discrimination problems, and resource quantification in Sec.\,\ref{sec:prelim}.

Our investigation starts in Sec.\,\ref{sec:gen_rob_states} by extending the results of Ref.\,\cite{takagi2018operational} beyond quantum mechanics, and showing that the generalized robustness of states in any convex resource theory and any GPT can be understood as the quantifier of the advantage that a given resourceful state can provide in channel discrimination problems.

We then introduce the measurement robustness in Sec.\,\ref{sec:gen_rob_measurements}, which quantifies the resource content of any measurement in convex resource theories of measurements, generalizing the recent approach of Ref.\,\cite{skrzypczyk2018robustness} to arbitrary resources and probabilistic theories. We in particular establish a direct operational interpretation of the measurement robustness in any resource theory by showing that it quantifies exactly the maximum advantage that a measurement can provide in all state discrimination tasks compared to all resourceless measurements. 

Having established the connection between discrimination tasks and generalized robustness of measurements, in Sec.\,\ref{sec:accessible} we further extend this connection to single-shot information theory within the setting of quantum mechanics.
We specifically show that the increase in min-accessible information, a single-shot variant of accessible information, of an ensemble of states effected by applying resourceful measurements as compared to free measurements is exactly quantified by the generalized robustness.

The considerations extend beyond the case of states and measurements. In Sec.\,\ref{sec:gen_rob_channels}, we formalize the quantification of the resource content of channels in two different ways: by measuring the amount of a resource that a channel can generate by acting on a free state, as well as in a more abstract formalism of convex resource theories of channels. 
In particular, we introduce resource generating power, as well as generalized robustness of channels defined in general resource theories.
We establish operational interpretations of these different robustness measures by showing that the resource generating power of a channel exactly characterizes the advantage that the channel can enable in resourceless-state discrimination tasks, and by showing that, within quantum theory, the robustness of a given channel or the maximum robustness of a given ensemble of channels quantifies the advantage it provides in a class of state and channel discrimination tasks, respectively.

We additionally consider in Sec.\,\ref{sec:standard_rob} another resource measure in the robustness family, the standard robustness, and show that it admits a universal operational interpretation in any convex resource theory of states in the context of quantifying the advantage that a state provides over resourceless states in all balanced binary channel discrimination tasks.  This gives a general operational meaning to this quantity for the first time, extending the link between resource quantification and operational advantages in discrimination tasks to the standard robustness of states defined in general settings.

Finally, in Sec.\,\ref{sec:complete_monotones} we show that different classes of discrimination tasks can form complete sets of monotones in any resource theory, in the sense that a state or a measurement can be transformed into another state or measurement using only free operations of the given resource theory if and only if the former performs better than the latter in a family of channel or state discrimination tasks. We therefore reveal explicitly that discrimination tasks can be useful not only in the context of quantifying the resource strength, but also in fully characterizing the conversions between resource states or measurements. We also extend the results to describe the transformations of state ensembles.

Besides operationally characterizing two main concepts of resource theories, quantification and manipulation of resources, our results not only endow the robustness measures with direct operational interpretations in general resource theories of states, measurements, and channels, but also explicitly demonstrate strong relations between several seemingly unrelated notions --- discrimination-type tasks, geometric resource quantification, resource transformations, and one-shot quantum information-theoretic quantities (see Fig.\,\ref{fig:concept}) --- as conjectured in Ref.\,\cite{skrzypczyk2018robustness} for quantum mechanics. The majority of our results apply to every single physical probabilistic theory in finite dimensions, relying only on the convex structure of the underlying resources and requiring no assumptions about the structure of the GPT beyond the basic axiom referred to as the no-restriction hypothesis~\cite{ludwig_1985,barrett_2007,chiribella_2010}. We make use of methods in convex analysis and in particular conic programming.

\begin{figure}[htbp]
    \centering
    \includegraphics[width=0.75\columnwidth]{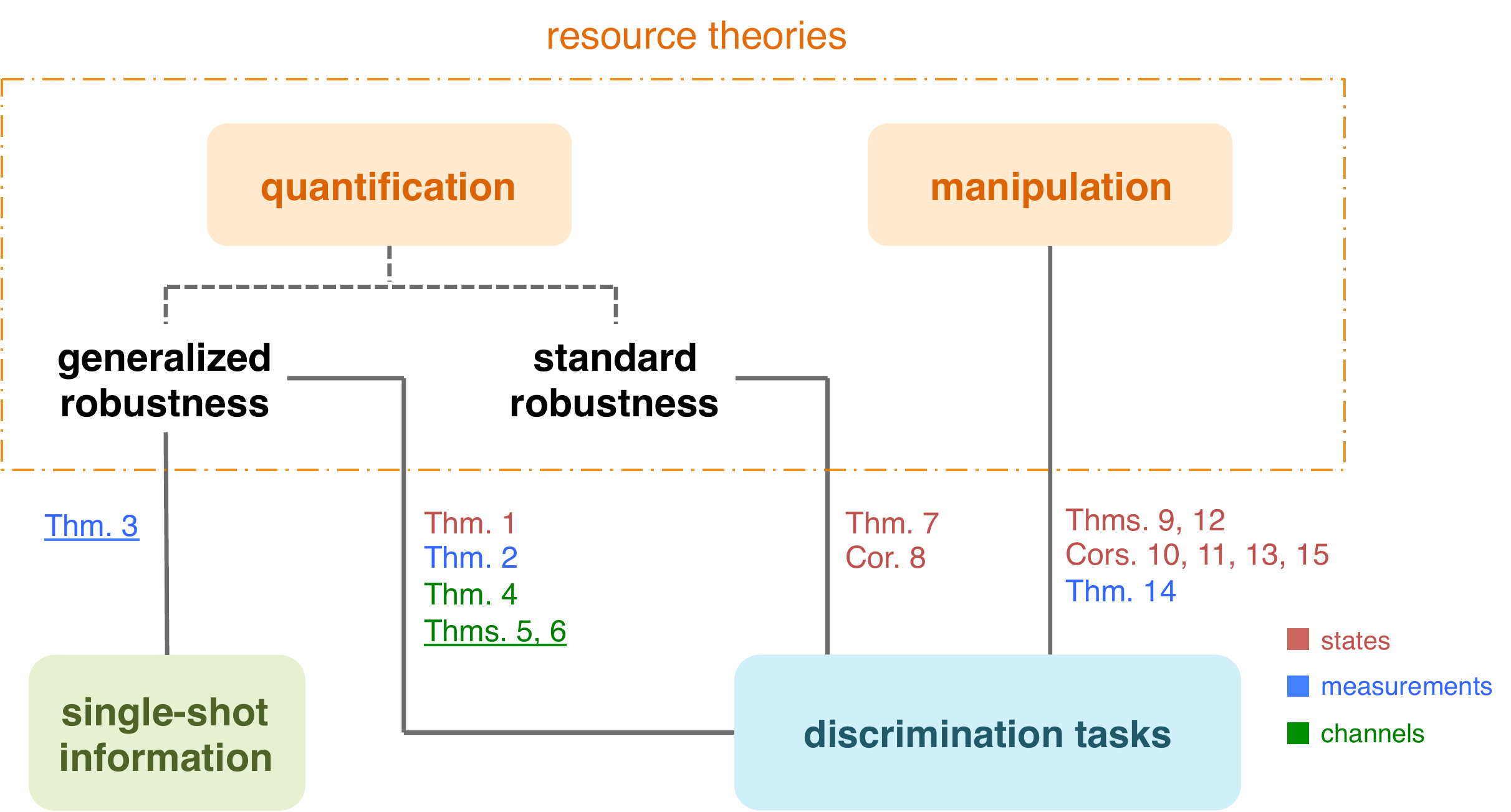}
    \caption{Schematics showing the connections established by the results in this work. The colors of the Theorem labels indicate the types of resources relevant to the results (red: states, blue: measurements, green: channels). The underlined Theorems (Thms.\,\ref{thm:accessible measurement}, \ref{thm:gen_rob_single_channel}, and \ref{thm:gen_rob_channel_ensemble}) are shown specifically for quantum theory, while the other results are valid in general GPTs.}
    \label{fig:concept}
\end{figure}


\section{Preliminaries}\label{sec:prelim}


\subsection{General probabilistic theories}


We will briefly outline the basic formalism of general probabilistic theories. We refer the interested reader to Ref.~\cite[Ch.\,1-2]{lami_2018-2} which provides a modern introduction to the topic and a detailed discussion, derivation, and justification of the concepts.

The physical setting of a GPT can be identified with a convex and closed set of \textit{states} $\Omega(\mV)$ in a finite-dimensional real complete normed vector space $\mV$ and a set of \textit{effects} contained in the dual vector space $\mV\*$, which correspond to the results of physically implementable measurements. 

A crucial role will be played by the cone generated by $\Omega(\mV)$, $\mC \coloneqq \lset \lambda\, \omega \sbar \lambda \in \mathbb{R}_+,\; \omega \in \Omega(\mV) \rset$, which we will further assume to be pointed (i.e. $\mC \cup (-\mC) = \{0\}$) and generating (i.e. $\operatorname{span} \mC = \mV$), such that it induces a partial order on the space $\mV$ given by $x \cleq_\mC y \iff y - x \in \mC$, with $x \cle_\mC y \iff y - x \in \operatorname{int}(\mC)$. The dual cone $\mC\* \coloneqq \lset E \in \mV\* \sbar \< E, \omega \> \geq 0 \; \forall \omega \in \mC \rset$, where we write $\< E, \omega \>$ for $E(\omega)$, then similarly induces a partial order $E \cleq_{\mC\*} F \iff F - E \in \mC\*$ on the dual space.

Associated with each GPT is a fixed \textit{unit effect} $U \cge_{\mC\*} 0$, defined as the unique element of the dual space satisfying $\< U, \omega \> = 1$ for all $\omega \in \Omega(\mV)$; equivalently, this allows one to understand the set of valid states as the set of normalized elements of the cone $\mC$ in the sense that $\Omega(\mV) = \lset \omega \in \mC \sbar \<U, \omega \> = 1 \rset$. Under the so-called no-restriction hypothesis, which we will hereafter take as an assumption, the effects are then all functionals $E$ such that $0 \cleq_{\mC\*} E \cleq_{\mC\*} U$, which are precisely all linear functionals $E: \Omega(\mV) \to [0,1]$. Any finite collection of effects $\{M_i\}_i$ such that $\sum_i M_i = U$ will be called a \textit{measurement}, with $\< M_i, \cdot \>$ identifying the probability of measuring the $i$\,th outcome. We will denote by $\mM$ the set of all possible measurements.

Given two GPTs defined by the spaces $\mV$ and $\mV'$ with the corresponding sets of states $\Omega(\mV)$ and $\Omega(\mV')$, one can then study the transformations betweeen them.
The question of physically allowed transformations $\Lambda: \mV \to \mV'$ between different states, which we will refer to as \textit{channels}, is in general heavily dependent on the physical setting of the given theory and additional assumptions placed on it~\cite{edwards_1971,edwards_1972}. However, since we require the output to be a valid state, two general assumptions can be made: (1) any valid channel $\Lambda$ is state cone-preserving, that is, $\Lambda[\mC] \subseteq \mC'$ where $\mC'$ is the cone defined by $\Omega(\mV')$ in the output space; and (2) $\Lambda$ is normalization-preserving, that is $\< U, x \> = \< U', \Lambda(x) \>$ for any $x \in \mC$ where $U'$ is the unit effect in the output dual space. Any valid set of physical transformations in the given GPT, which we will denote by $\mT(\mV, \mV')$, will necessarily be a subset of all state cone- and normalization-preserving operations, although often the inclusion will be strict. To allow for full generality of our results, we will place no further restrictions on the set of physically allowed transformations at this point, except for the trivial assumption that the identity map $\mathrm{id}: x \mapsto x$ is physically implementable.

For any $\Lambda: \mV \to \mV'$, define the dual map $\Lambda\*: \mV'\* \to \mV\*$ by
\begin{equation}\begin{aligned}\label{eq:map_duality}
  \< E, \Lambda(x) \> = \< \Lambda\*(E), x \> \;\, \forall E \in \mV'\*,\, x \in \mV.
\end{aligned}\end{equation}
We will, without loss of generality, identify the bidual $\V\*\*$ with $\V$ and say that $\Lambda\*\* = \Lambda$.

Consider now transformations between measurements, that is, maps $\Gamma: \mV'\* \to \mV\*$ such that for any measurement $\{M_i\}_i$, $\{\Gamma(M_i)\}_i$ is also a measurement. It is not difficult to see that two conditions need to be satisfied for such a map to always result in valid measurements: one, it needs to preserve the effect cone in the sense that $\Gamma[\mC'\*] \subseteq \mC\*$, and two, it needs to preserve the unit effect, $\Gamma(U')=U$, so that $\sum_i \Gamma(M_i) = U$. We will refer to maps obeying the two conditions as effect cone-preserving and unital, respectively.  Using the duality relation \eqref{eq:map_duality}, it is then not difficult to see that the set of all maps dual to the set of state cone- and normalization-preserving maps are precisely the effect cone-preserving unital maps; more specifically, a map $\Lambda$ is state-cone preserving iff its dual preserves the effect cone, and normalization-preserving iff its dual is unital.

The concepts of quantum theory can be intuitively understood in this setting. To help translate the notation of the quantum mechanical setting to more general GPTs, we have included a basic comparison in Table \ref{tab:gpts}.


\subsection{Discrimination tasks}


Consider a finite ensemble of the form $\{p_i, \sigma_i\}_{i}$, where $\sigma_i \in \Omega(\mV)$ and $p_i$ are the probabilities corresponding to each state, such that $\sum_i p_i = 1$. The task of \textit{state discrimination} is concerned with the scenario where a state is sampled from the ensemble and one aims to determine which one of the states is in one's possession by measuring it. Specifically, given a measurement $\{M_i\}_{i}$, we associate the $i$\,th measurement outcome with the guess that the sampled state is $\sigma_i$. The average probability of successfully obtaining the correct guess is then given by
\begin{equation}\begin{aligned}
  p_{\rm succ} \left( \{p_i, \sigma_i\}, \{M_i\} \right) = \sum_i p_i \< M_i, \sigma_i \>.
\end{aligned}\end{equation}

In the context of minimum-error discrimination, one is in particular interested in choosing an optimal measurement which maximizes this quantity. It is a fundamental fact in any GPT that, generalizing the approach in the seminal Holevo--Helstrom theorem \cite{HOLEVO1973337,Helstrom}, in the case of discriminating between two states the problem can be expressed as~\cite{kimura_2010}
\begin{equation}\begin{aligned}
  \max_{\{M_i\} \in \mM} p_{\rm succ} \left( \{p_i, \sigma_i\}_{i=0}^{1}, \{M_i\}_{i=0}^1 \right) = \frac12 \left( \norm{p_0 \sigma_0 - p_1 \sigma_1}_{\Omega} + 1 \right)
\end{aligned}\end{equation}
where $\norm{\cdot}_{\Omega}$ is the so-called \textit{base norm}, given by
\begin{equation}\begin{aligned}
  \norm{x}_{\Omega} &\coloneqq \min \lset \lambda_+ + \lambda_- \sbar x = \lambda_+ \omega_+ - \lambda_- \omega_-,\; \lambda_{\pm} \in \mathbb{R}_+,\; \omega_{\pm} \in \Omega(\mV) \rset\\
  &= \max \lset \< E, x \> \sbar -U \cleq_{\mC\*} E \cleq_{\mC\*} U \rset,
\end{aligned}\end{equation}
where the equality follows by convex duality \cite{rockafellar2015convex}. Notice that there exists a measurement which distinguishes an ensemble of two states perfectly if and only if $\norm{p_0 \sigma_0 - p_1 \sigma_1}_\Omega = 1$. In the case that $p_0 = p_1 = \frac12$, we will refer to this task as balanced binary discrimination.

\begin{table}
\caption{A comparison between the concepts and standard nomenclature used in finite-dimensional quantum mechanics and in more general GPTs.}
\label{tab:gpts}
\begin{tabular}{l @{\qquad} l}
\toprule
\textrm{GPTs} & \textrm{Quantum mechanics} \\
\colrule
Real vector space $\mV$ & Self-adjoint operators acting on a Hilbert space $\mathcal{H}$ \\
States $\Omega(\mV)$ & Density operators \\
State cone $\mC$ & Positive semidefinite operators \\
Effect cone $\mC\*$ & Positive semidefinite operators \\
Unit effect $U$ & Identity operator $\mbI$ \\
Canonical bilinear form $\< E, x \>$ & Hilbert-Schmidt inner product $\Tr(E x)$\\
Measurement & Positive operator-valued measure (POVM) \\
Effect & POVM element\\
State cone-preserving maps & Positive maps \\
Effect cone-preserving maps & Positive maps \\
Normalization-preserving maps & Trace-preserving maps \\
Physical transformations $\mT(\mV, \mV')$ & Completely positive trace-preserving maps\\
Unital maps & Unital maps\\
Base norm $\norm{\cdot}_\Omega$ & Trace norm $\norm{\cdot}_1$\\
Order unit norm $\norm{\cdot}^\circ_\Omega $ & Operator norm $\norm{\cdot}_\infty$\\
\botrule
\end{tabular}
\end{table}

We will furthermore make frequent use of the norm dual to the base norm $\norm{\cdot}_\Omega$, called the order unit norm $\norm{\cdot}^\circ_\Omega$, which can be obtained as
\begin{equation}\begin{aligned}
  \norm{Y}^\circ_{\Omega} &= \max \lset \< Y, x \> \sbar \norm{x}_{\Omega} \leq 1 \rset\\
  &= \max \lset \abs{\< Y, \omega \>} \sbar \omega \in \Omega(\mV) \rset. 
\end{aligned}\end{equation}
Notice in particular the set of effects is precisely $\lset Y \in \mC\* \sbar \norm{Y}^\circ_\Omega \leq 1 \rset$.

Another setting often encountered in the task of state discrimination is where one is constrained to use only a restricted set of allowed measurements $\{M_i\} \in \mM_\mF \subseteq \mM$. This scenario has been found to be of fundamental importance due to the phenomenon of data hiding, that is, the existence of states which can be distinguished perfectly with general measurements but not with local measurements supplemented with classical communication \cite{terhal_2001,divincenzo_2002}. Provided that the set $\mM_\mF$ is informationally complete, that is, the effects contained in $\mM_\mF$ together span the whole space $\mV\*$, the best success probability in this setting can be expressed as \cite{matthews_2009-1,lami_2017}
\begin{equation}\begin{aligned}
  \sup_{\{M_i\} \in \mM_\mF} p_{\rm succ} \left( \{p_i, \sigma_i\}_{i=0}^{1}, \{M_i\}_{i=0}^1 \right) = \frac12 \left( \norm{p_0 \sigma_0 - p_1 \sigma_1}_{\mM_\mF} + 1 \right)
\end{aligned}\end{equation}
with $\norm{\cdot}_{\mM_\mF}$ being the so-called distinguishability norm $\norm{x}_{\mM_\mF} \coloneqq \sup_{\{M_i\} \in \mM_\mF} \sum_i \abs{\< M_i, x \>}$.

A related task is concerned with \textit{channel discrimination}, where one of the channels from a given ensemble $\{p_i, \Lambda_i\}$ with each $\Lambda_i \in \mT(\mV, \mV')$ occurring with probability $p_i$ is applied to a known state $\omega \in \Omega(\mV)$. The task then is, by measuring the output state $\Lambda_i(\omega)$, to decide which of the channels was applied. The average probability of guessing correctly with a measurement $\{M_i\}$ is then
\begin{equation}\begin{aligned}
  p_{\rm succ} \left( \{p_i, \Lambda_i\}, \{M_i\}, \omega \right) = \sum_i p_i \< M_i, \Lambda_i(\omega) \>
\end{aligned}\end{equation}
which is completely equivalent to discriminating the state ensemble $\{p_i, \Lambda_i(\omega)\}$. A more general setting is that of \textit{subchannel discrimination}, in which the object to discriminate is an ensemble $\{\Psi_i\}$ of subchannels, that is, state cone-preserving maps which are normalization-nonincreasing in the sense that $\< U, \omega \> \geq \< U', \Psi_i(\omega) \>$, and their sum $\sum_i \Psi_i$ is normalization-preserving. We can then write
\begin{equation}\begin{aligned}
  p_{\rm succ} \left( \{\Psi_i\}, \{M_i\}, \omega \right) = \sum_i \< M_i, \Psi_i(\omega) \>.
\end{aligned}\end{equation}

Finally, in some cases we will allow for the \textit{inconclusive} discrimination, that is, a discrimination task in which an ensemble $\{p_i, \sigma_i\}_{i=0}^{N-1}$ is discriminated with a measurement $\{M_i\}_{i=0}^{N}$, and we associate with the $N$th measurement outcome an inconclusive result. Similarly, the average probability of guessing correctly will then be
\begin{equation}\begin{aligned}
   p'_{\rm succ} \left( \{p_i, \sigma_i\}_{i=0}^{N-1}, \{M_i\}_{i=0}^{N} \right) = \sum_{i=0}^{N-1} p_i \< M_i, \sigma_i \>.
\end{aligned}\end{equation}
and analogously for the case of channel and subchannel discrimination.
Inconclusive discrimination is a broader class of tasks than conclusive discrimination, in the sense that the latter can be considered as a special case of the former where one takes $M_N=0$.  


\subsection{Resources and their quantification}

A general resource theory can be identified with a set of objects (here: a set of states, measurements, or channels) together with a set of transformations of said objects that one deems \textit{free}, in the sense that they are available in the given physical setting at no resource cost. Both definitions will generally depend on the physical setting under consideration --- intuitively, an object being free can be understood as it possessing no resource, and a transformation being free means that it is allowed within the given physical constraints. In particular, any such free operation should not generate any resource; that is, any object subjected to a free transformation should have quantitatively ``less'' of the given resource than it possessed initially. This is formalized precisely by \textit{resource monotones}, which are functions from the set of states, measurements, or channels to real numbers whose aim is to quantify the resource content of the given object, and therefore do not increase under the action of the free transformations. Frequently, further constraints are imposed on functions admissible as valid resource monotones~\cite{vidal_2000,chitambar1806quantum,regula_2018}, although we will make no additional assumptions at this point. In most cases, the choice of a resource monotone is not unique, and typically one therefore looks for a choice of monotones which characterize the physical properties of the given resource in addition to merely outputting a number associated with an object.

Having defined the setting of a resource theory in this way, one is then interested in understanding the limitations that it places on one's operational capabilities within the general probabilistic theory. Some of the fundamental questions that can be asked in this context are: what physical advantages a resourceful object can provide over a free one; which transformations are possible with the restricted set of free operations, and how to characterize them; and what exactly can measuring the resourcefulness of an object tell us quantitatively about the usefulness of the object in physical tasks.

The concept which will form a central pillar of this work is convexity. Although ultimately a technical assumption, it stems from deep physical considerations, and is a foundation of any GPT~\cite{ludwig_1983}. Take in particular the set of states $\Omega(\mV)$: if one is free to prepare any $\omega_1, \omega_2 \in \Omega(\mV)$ within the given physical setting of the theory, one should also be allowed to simply forget which one was prepared --- such a randomization leads precisely to convex mixtures $p \omega_1 + (1-p) \omega_2$. Dually, this means that allowed measurements should form a convex set, and analogously the randomization of transformations should be a valid physical procedure. In a similar way, this property can be explicitly required from a given resource theory; if a randomization procedure of free objects is free to be performed, the associated set of objects should be free. Although it is certainly possible to define resource theories in which convexity does not hold~\cite{Modi2012,Genoni2008,liu2017diagonal}, it is a natural assumptions in most physical settings, and the vast majority of established resource theories are indeed convex.

In any such convex resource theory, a very intuitive way to define a quantifier is as follows: given an object $A$, what is the least amount of mixing $p \in [0,1]$ with another object $B$ such that $(1-p) A + p B$ is free? This is precisely the idea behind robustness measures~\cite{vidal_1999}, where in particular the \textit{standard (free) robustness} asks about the least coefficient $p$ such that $B$ is also a free object, and the \textit{generalized (global) robustness} asks about the least $p$ when $B$ is any admissible object. In an intuitive sense, this can be understood as the robustness of the resource contained in $A$ to noise in the form of admixing the objects $B$.

To investigate the properties of such monotones, we will now specify to the particular settings of convex resource theories of states, measurements, and channels in GPTs.



\section{Generalized robustness of states}\label{sec:gen_rob_states}


The generalized robustness has found a multitude of uses in quantum resource theories. Its operational applications include the tasks of  one-shot entanglement dilution \cite{brandao_2010,brandao_2011}, one-shot coherence distillation \cite{Bu2017,Regula2018oneshot} and dilution \cite{zhao_2018}, phase discrimination with coherent states \cite{Napoli2016,Piani2016}, and catalytic resource erasure \cite{berta_2017-1,anshu_2017}.

Notably, it was shown in Ref.\,\cite{takagi2018operational} that the generalized robustness of states defined in any quantum resource theory serves as an exact quantifier for the maximum advantage that a resource state provides in a class of (sub)channel discrimination tasks. 
However, the construction considered in the proof relies on the specific structure of quantum theory, and it does not immediately generalize to all GPTs.
Here, we introduce a generalization of that result which holds without any assumptions about the underlying GPT, showing that this universal relation applies even beyond the setting of quantum mechanics.

Given a convex and closed set of free states $\mF \subseteq \Omega(\mV)$, the generalized robustness is given for any state $\omega \in \Omega(\mV)$ as the optimal value of the convex optimization problem
\begin{equation}\begin{aligned}
  R_\mF(\omega) \coloneqq& \min \lset r \in \mathbb{R}_+ \sbar \frac{\omega+r \delta}{1+r} \in \mF,\; \delta \in \Omega(\mV) \rset\\
  =& \min \lset r \in \mathbb{R}_+ \sbar \omega \cleq_{\mC} (1+r)\ \sigma,\; \sigma \in \mF \rset.
\end{aligned}\end{equation}
To ensure that this quantity is well-defined for any state, we will assume that $\mF$ contains at least one interior point of $\mC$.
It is straightforward (see e.g. Appendix \ref{app:duality}) to obtain the following equivalent dual problem:
\begin{equation}\begin{aligned}
{\text{\rm maximize }}& \ \  \< X, \omega \> - 1  \\
{\text{\rm subject to }}&\ \ X\cgeq_{\mC^{\*}} 0\\
&\ \ \< X, \sigma \> \leq 1,\ \forall \sigma\in \mF.
\end{aligned}
\label{eq:gen rob state dual}
\end{equation}
One can check that the dual problem is strictly feasible by taking the feasible solution $X=U/2$, and thus strong duality is ensured by Slater's theorem~\cite{boyd_2004}, meaning that the solutions to the primal and dual problems coincide.

Recalling that the average probability of success in a channel discrimination task is given by
\begin{equation}\begin{aligned}
   p_{\rm succ} \left( \{p_i, \Lambda_i\}, \{M_i\}, \omega \right) = \sum_i p_i \< M_i, \Lambda_i(\omega) \>,
\end{aligned}\end{equation}
we can now show that the generalized robustness of a state quantifies its maximal advantage over the free states in all such channel discrimination tasks with a fixed choice of measurement.

\begin{thm}\label{pro:gen rob state}
For any $\omega \in \Omega(\mV)$ it holds that
\begin{equation}\begin{aligned}
  \max_{\{M_i\}, \{p_i,\Lambda_i\}} \frac{p_{\mathrm{succ}} (\{p_i,\Lambda_i\}, \{M_i\}, \omega)}{\max_{\sigma \in \mF}  p_{\mathrm{succ}} (\{p_i,\Lambda_i\}, \{M_i\}, \sigma)} = 1 + R_\mF(\omega) 
\end{aligned}\end{equation}
where the maximization is over all finite ensembles of channels $\{p_i,\Lambda_i\}$ with each $\Lambda_i \in \mT(\mV,\mV')$ and all measurements $\{M_i\}$ on the output space $\mV'$.
\end{thm}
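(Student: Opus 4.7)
The plan is to establish both directions of the equality by combining the primal and dual characterizations of $R_\mF(\omega)$ with the elementary observation that every channel-discrimination success probability is an effect-state pairing.

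For the upper bound $(\leq)$, I would first rewrite
\begin{equation*}
p_{\rm succ}(\{p_i, \Lambda_i\}, \{M_i\}, \tau) = \sum_i p_i \< M_i, \Lambda_i(\tau)\> = \< E, \tau\>,
\end{equation*}
where $E \coloneqq \sum_i p_i \Lambda_i^*(M_i)$. Since each $\Lambda_i$ is state-cone-preserving and normalization-preserving, its dual $\Lambda_i^*$ is effect-cone-preserving and unital, so $E \cgeq_{\mC^*} 0$. The primal form of $R_\mF$ furnishes $\sigma^* \in \mF$ with $\omega \cleq_\mC (1 + R_\mF(\omega))\sigma^*$, and pairing with $E$ yields $\< E, \omega\> \leq (1 + R_\mF(\omega))\< E, \sigma^*\> \leq (1 + R_\mF(\omega))\max_{\sigma \in \mF}\< E, \sigma\>$, which bounds the ratio uniformly over all ensembles and measurements.

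For the lower bound $(\geq)$, I would invoke strong duality for \eqref{eq:gen rob state dual}, already guaranteed by Slater's condition via the feasible point $X = U/2$ noted in the excerpt. Let $X^*$ attain the dual optimum, so $X^* \cgeq_{\mC^*} 0$, $\< X^*, \sigma\> \leq 1$ for all $\sigma \in \mF$, and $\< X^*, \omega\> = 1 + R_\mF(\omega)$. Setting $c \coloneqq \norm{X^*}^\circ_\Omega \geq \< X^*, \omega\> > 0$ (the case $\omega \in \mF$ is trivial since then $R_\mF(\omega) = 0$ and both sides of the claim equal $1$), the rescaling $E \coloneqq X^*/c$ satisfies $0 \cleq_{\mC^*} E \cleq_{\mC^*} U$, i.e., is a valid effect. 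I would then exhibit the essentially trivial two-channel ensemble $\Lambda_0 = \Lambda_1 = \id$ on $\mV$, probabilities $(p_0, p_1) = (1, 0)$, and measurement $\{M_0, M_1\} = \{E, U - E\}$, for which $p_{\rm succ}(\tau) = \< E, \tau\>$ for every state $\tau$. The resulting ratio is
\begin{equation*}
\frac{\< X^*, \omega\>}{\max_{\sigma \in \mF}\< X^*, \sigma\>} \geq \frac{1 + R_\mF(\omega)}{1} = 1 + R_\mF(\omega),
\end{equation*}
matching the upper bound and closing the equality.

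The technical hurdles are mild: confirming Slater's condition and attainment of the dual optimum (both handed to us by feasibility of $X = U/2$ together with compactness in finite dimension after a natural bounding of the dual feasible region), and tracking the cone-duality carefully so that $E \cgeq_{\mC^*} 0$ in the upper bound and $c > 0$ in the lower bound follow cleanly. The conceptually important point — which I expect to be the crux of why the statement generalizes uniformly across all GPTs — is that the lower bound uses no nontrivial structure of the theory beyond the existence of the identity channel and the freedom to form the binary measurement $\{E, U - E\}$ afforded by the no-restriction hypothesis, so that the argument never appeals to properties specific to quantum mechanics.
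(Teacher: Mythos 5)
Your proposal is correct and follows essentially the same route as the paper: the upper bound is the paper's decomposition argument (you merely aggregate the discrimination functional into the single effect $E = \sum_i p_i \Lambda_i\*(M_i)$ and use $\omega \cleq_\mC (1+r)\sigma$, which is equivalent to the paper's use of $\omega = (1+r)\sigma - r\delta$ with $\<M_i,\Lambda_i(\delta)\>\geq 0$), and the lower bound is identical — the dual-optimal $X$ normalized by $\norm{X}^\circ_\Omega$ into the binary measurement $\{X/\norm{X}^\circ_\Omega,\, U - X/\norm{X}^\circ_\Omega\}$ paired with the trivial identity-channel ensemble. The only cosmetic differences are your explicit (and in fact unnecessary, since $\<X,\omega\> = 1+R_\mF(\omega) \geq 1$ always forces $\norm{X}^\circ_\Omega > 0$) handling of the $\omega\in\mF$ case and your remark on attainment of the dual optimum, which the paper leaves implicit.
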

\begin{proof}
To show that the left-hand side is upper bounded by the right-hand side, note that the definition of the robustness implies that, for any state $\omega$, there exists another state $\delta$ such that $\frac{\omega + r \delta}{1+r} = \sigma \in \mF$ where $r = R_\mF(\omega)$. This gives for any $\{M_i\}$ and any channel ensemble $\{p_i,\Lambda_i\}$ that
\begin{equation}\begin{aligned}
   p_{\mathrm{succ}} (\{p_i,\Lambda_i\}, \{M_i\}, \omega) &= \sum_i p_i\< M_i, \Lambda_i\left[ (1+r) \sigma - r \delta \right] \>\\
  &\leq \sum_i p_i\< M_i, (1+r) \Lambda_i(\sigma) \>\\
  &\leq (1+r) \max_{\sigma \in \mF} p_{\mathrm{succ}} (\{p_i,\Lambda_i\}, \{M_i\}, \sigma),
 \label{eq:gen rob state ineq first}
\end{aligned}\end{equation}
where the first inequality follows since each $M_i \in \mC\*$ and $\delta \in \mC$, which concludes the first part of the proof.

To see the opposite inequality, consider an optimal solution $X$ in \eqref{eq:gen rob state dual} for $R_\mF(\omega)$ and define a measurement by $\left\{\frac{X}{\norm{X}^\circ_\Omega},\,U - \frac{X}{\norm{X}^\circ_\Omega} \right\}$. This is a valid measurement since each $X \in \mC\*$ by the dual form of the robustness in  \eqref{eq:gen rob state dual}, and so $0 \cleq_{\mC\*} X/\norm{X}^\circ_\Omega \cleq_{\mC\*} U$ by definition of the norm $\norm{\cdot}^\circ_\Omega$. Consider now the channel ensemble defined by $p_0=1$, $\Lambda_0=\mathrm{id}$ and $p_1=0$, $\Lambda_1=\Lambda'$ where $\mathrm{id}$ denotes the identity map $x \mapsto x$, and $\Lambda'$ is an arbitrary channel. This gives
\begin{equation}\begin{aligned}
  \max_{\{M_i\}, \{p_i,\Lambda_i\}} \frac{ p_{\mathrm{succ}} (\{p_i,\Lambda_i\}, \{M_i\}, \omega)}{\max_{\sigma \in \mF}  p_{\mathrm{succ}} (\{p_i,\Lambda_i\}, \{M_i\}, \sigma)} &\geq \frac{\frac{1}{\norm{X}^\circ_\Omega} \< X, \omega \>}{\frac{1}{\norm{X}^\circ_\Omega} \max_{\sigma \in \mF} \< X, \sigma \>}\\
  &\geq 1 + R_\mF(\omega) 
  \label{eq:gen rob state ineq second}
\end{aligned}\end{equation}
where in the last inequality we used that $\< X, \sigma \> \leq 1$ for any free state $\sigma$ by the condition in Eq. \eqref{eq:gen rob state dual}.
\end{proof}
We stress that, although the above result optimizes over the set of all ensembles of \textit{physical} transformations $\mT(\mV,\mV')$, the only assumption about the set $\mT(\mV,\mV')$ we make is that it contains the identity map. This makes the above Theorem immediately applicable to every GPT, regardless of how the given physical limitations constrain the implementable set of operations.

We remark that instead of channel discrimination, one can alternatively consider subchannel discrimination --- which is generally a broader class of discrimination tasks --- and show the same relation, whose proof proceeds analogously. (Eq. \eqref{eq:gen rob state ineq first} still holds when one replaces the channel ensemble with a set of subchannels, and one can choose the same channel ensemble to show \eqref{eq:gen rob state ineq second}.)

Importantly, from the dual form in \eqref{eq:gen rob state dual} one can see that the robustness is directly observable, in the sense that it can be obtained by measuring a single effect $X$ (expectation value of observable $X$ for the case of quantum mechanics) at the state $\omega$. This ensures that the quantification of the robustness is accessible, allowing in particular to straightforwardly bound the value of $R_\mF$ based on measurement data obtained in experiment, adapting the approach of so-called quantitative resource witnesses~\cite{Brandao2005,eisert_2007}. One should also note that the generalized robustness can be computed exactly for certain classes of states in quantum resource theories such as entanglement and Schmidt number $k$ entanglement ~\cite{harrow_2003,Steiner2003,regula_2018}, coherence and multi-level coherence~\cite{Piani2016,ringbauer_2018,regula_2018,johnston_2018}, magic~\cite{regula_2018,bravyi_2018-1}, and in several cases can be cast as a semidefinite program for any state; this makes the computation of the operationally motivated quantity in Thm.~\ref{pro:gen rob state} feasible in practice for many relevant cases of resource theories.


\section{Generalized robustness of measurements}\label{sec:gen_rob_measurements}

Understanding the discriminative power of restricted sets of measurements is of central importance not only in characterizing the operational consequences precipitated by limitations of physically allowed measurements, but often also in studying the very fundamental structure of the underlying GPT~\cite{matthews_2009-1,lami_2017,aubrun_2018}. The phenomenon of data hiding \cite{terhal_2001,divincenzo_2002} has in particular motivated the study of the question: given a measurement, how well can one distinguish physical states with it as compared to some fixed restricted set of measurements? We will show that a robustness measure associated with the measurement can provide a precise answer to the question.




First, we formally define the generalized robustness of measurements with respect to some convex and closed set of measurements $\mM_\mF$, which we will define as
\begin{equation}\begin{aligned}
   \mM_\mF \coloneqq \lset \{M_i\}_i \in \mM \sbar M_i \in \mE_\mF \; \forall i \rset
 \end{aligned}\end{equation} 
where $\mE_\mF \subseteq \mC\*$ is some chosen convex and closed cone of free effects which we are able to access within the constraints of the given resource theory. As examples of such a setting, one can consider local measurements, separable measurements~\cite{lami_2017}, measurements simulable by a given set of measurements \cite{oszmaniec_2017,guerini_2017,filippov_2018}, or trivial measurements (proportional to the unit effect); within quantum mechanics, one can furthermore choose, for instance, positive partial transpose (PPT) measurements, incoherent measurements, or (probabilistic mixtures of) Pauli measurements. 

We define the generalized robustness of measurement with respect to $\mE_\mF$ for a given measurement $\mbM = \{M_i\}_i$ as
\ba 
 R_{\mE_{\mF}}(\mbM) \coloneqq \min \lset r \in \mathbb{R}_+ \sbar M_i +  r \, N_i\in \mE_{\mF} \; \forall i,\;\, \{N_i\}_i \in \mM \rset.
\label{eq:rob meas def}
\ea
We will assume that $\mM_\mF$ contains at least one measurement consisting of effects $M_i \cge_{\mC\*} 0$ which are in the interior of $\mC\*$, so that the above quantity is well-defined for any measurement. The faithfulness: $R_{\mE_{\mF}}(\mbM)=0$ iff $M_i \in \mE_{\mF}\, \forall i$, the convexity: $R_{\mE_{\mF}}(p\mbM+(1-p)\mbM')\leq pR_{\mE_{\mF}}(\mbM)+(1-p)R_{\mE_{\mF}}(\mbM')$, and the monotonicity of the robustness: $R_{\mE_{\mF}}(\Gamma(\mbM)) \leq R_{\mE_{\mF}}(\mbM)$ for any effect-cone preserving map $\Gamma$ s.t. $\Gamma[\mE_\mF] \subseteq \mE_\mF$ follow easily from the definition.
It is also straightforward to show the monotonicity under classical post-processing: $R_{\mE_{\mF}}(\mbM')\leq R_{\mE_{\mF}}(\mbM)$ holds where $M'_b=\sum_a p(b|a)M_a$ and $p(b|a)$ is any conditional probability distribution.
To see this, note that the definition together with the conic structure of $\mE_\mF$ implies that for any $i$, there exists a free effect $F_i\in \mE_{\mF}$ such that $\frac{M_i + r N_i}{1+r}=F_i$.
Thus, one can write for any $b$ and $p(b|a)$,
\ba
 M_b'&=&\sum_a p(b|a)\left((1+r)F_a-rN_a\right)\\
 &=& (1+r)F'_b - rN'_b
 \label{eq:rob meas decomposition}
\ea
where $F_a\in\mE_\mF$, $N_a\in C^{\*}$ are some effects. Note that $F'_b\coloneqq \sum_a p(b|a)F_a$, $N'_b\coloneqq \sum_a p(b|a)N_a$ are also effects constituting valid measurements, and $F'_b\in \mE_\mF$ due to the convexity of $\mE_\mF$. 
Since \eqref{eq:rob meas decomposition} is one valid decomposition of effects constituting $\mbM$, we get $R_{\mE_\mF}(\mbM')\leq R_{\mE_\mF}(\mbM)$ due to the minimization involved in the definition \eqref{eq:rob meas def}. 

To observe a close connection between the robustness and discrimination tasks, let us define $N'_i \coloneqq r N_i$, and rewrite the definition \eqref{eq:rob meas def} as the following convex optimization problem:
\ba
&{\text{\rm minimize}}& \ \ \lambda  \label{eq:opt_obj}\\
&{\text{\rm subject to}}&\ \ M_i + N'_i \cgeq_{\mE_{\mF}} 0 \label{eq:opt_cond1}\\
&& \ \ N'_i \in \mC\*\; \forall i\\
&& \ \ \sum_i N'_i = \lambda\, U.
\ea
An equivalent dual problem (see e.g. Appendix \ref{app:duality}) is written as
\ba
&{\text{\rm maximize}}& \ \ -\sum_i \< M_i, \sigma_i \>\label{eq:dual_robmes}  \\
&{\text{\rm subject to}}&\ \ \sigma_i \cleq_{\mC} \eta\;\; \forall i \label{eq:dual_cond2}\\
&& \ \ \eta\in\mV\\
&& \ \ \sigma_i \in \mE_{\mF}^*\;\; \forall i \label{eq:dual_cond1}\\
&& \ \ \<U, \eta\> = 1, \label{eq:dual_cond4}
\ea
and defining $\omega_i = - \sigma_i + \eta$, we can rewrite it as
\ba
&{\text{\rm maximize}}& \ \ \sum_i \< M_i, \omega_i \> - 1 \label{eq:dual2_opt}\\
&{\text{\rm subject to}}&\ \ \omega_i \in \mC\;\; \forall i\\
&& \ \ \eta\in\mV\\
&& \ \ \< F, \omega_i \> \leq \< F, \eta \> \;\; \forall i, \; \forall F \in \mE_\mF \label{eq:dual2_cond2}\\
&& \ \ \<U, \eta\> = 1 \label{eq:dual2_cond4}
\ea
where in \eqref{eq:dual2_opt} we used that $\sum_i \< M_i, \eta \> = \< U, \eta \> = 1$ by \eqref{eq:dual2_cond4}, and in \eqref{eq:dual2_cond2} we have written the condition $\eta - \omega_i \in \mE_\mF^*$ explicitly. To see that strong duality holds and thus the optimal value of the dual is equal to the optimal value of the primal problem, choose any $\sigma \cge_{\mC} 0$ (and therefore $\sigma \cge_{\mE_{\mF}^*} 0$) s.t. $0 < \< U, \sigma \> < 1$, which is guaranteed to exist since the interior of $\mC$ is nonempty by assumption, and define $\eta = \sigma / \< U, \sigma \>$. This choice of $\sigma_i = \sigma$ and $\eta$ can be noticed to strictly satisfy the conditions \eqref{eq:dual_cond2}-\eqref{eq:dual_cond1} and so Slater's theorem ensures that strong duality holds.

This form of the dual problem allows one to identify the generalized robustness of measurement as an exact quantifier for the advantage in some state discrimination task.
Let $\mA=\{p_i,\sigma_i\}$ denote a state ensemble to be discriminated. 
Then, we obtain the following result. 

\begin{thm}\label{thm:gen_rob_meas_as_advantage}
Let $\mM_\mF$ be the set of measurements whose effects are elements of $\mE_{\mF}$. Then, for any measurement $\mbM = \{M_i\}_i$ it holds that
 \ba
  \max_\mA \frac{p_{\rm succ}(\mA,\mbM)}{\max_{\mbF\in \mM_\mF}p_{\rm succ}(\mA,\mbF)} = 1 + R_{\mE_{\mF}}(\mbM)
 \ea
 where the maximization is over all finite ensembles of states $\mA=\{p_i,\sigma_i\}$.
\end{thm}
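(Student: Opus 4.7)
My plan is to establish the two directions separately, exploiting the primal and dual conic formulations of $R_{\mE_\mF}$ derived just above the statement. The overall flow closely parallels the argument used for \thrm{pro:gen rob state}, with the roles of states and measurements swapped and with the state ensemble playing the part of the witnessing object rather than the channel ensemble.

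For the upper bound, I would begin with a primal-optimal decomposition of the form $M_i = (1+r) F_i - r N_i$ with $r = R_{\mE_\mF}(\mbM)$, $\{N_i\}_i \in \mM$, and $F_i \in \mE_\mF$. The key preparatory observation is that $\{F_i\}_i$ is automatically a valid free measurement: $\sum_i F_i = U$ follows from $\sum_i M_i = \sum_i N_i = U$ together with the coefficient $1/(1+r)$, so $\{F_i\}_i \in \mM_\mF$. Substituting this decomposition into $p_{\rm succ}(\mA, \mbM)$ for an arbitrary ensemble $\mA = \{p_i, \sigma_i\}$, the $-rN_i$ term contributes nonpositively because $N_i \in \mC^*$ and $\sigma_i \in \mC$ yield $\<N_i, \sigma_i\> \geq 0$. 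This immediately gives $p_{\rm succ}(\mA, \mbM) \leq (1+r)\, p_{\rm succ}(\mA, \{F_i\}) \leq (1+r) \max_{\mbF \in \mM_\mF} p_{\rm succ}(\mA, \mbF)$, and maximizing over $\mA$ finishes this direction.

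For the lower bound I would extract an optimal dual solution $(\{\omega_i\}_i, \eta)$ to the reformulated dual in \eqref{eq:dual2_opt}--\eqref{eq:dual2_cond4} and use it to hand-build a witnessing ensemble. The natural choice is $p_i := \<U, \omega_i\>/q$ and $\sigma_i := \omega_i / \<U, \omega_i\>$, where $q := \sum_j \<U, \omega_j\>$ (with $p_i = 0$ and arbitrary $\sigma_i$ when $\<U, \omega_i\> = 0$). For this ensemble, dual optimality gives $p_{\rm succ}(\mA, \mbM) = q^{-1} \sum_i \<M_i, \omega_i\> = q^{-1}(1 + R_{\mE_\mF}(\mbM))$, while for any $\mbF = \{F_i\}_i \in \mM_\mF$ the dual feasibility $\<F_i, \omega_i\> \leq \<F_i, \eta\>$ summed over $i$, together with $\sum_i F_i = U$ and $\<U, \eta\> = 1$, forces $p_{\rm succ}(\mA, \mbF) \leq 1/q$. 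Taking the ratio cancels the normalization factor $q$ and delivers the desired bound.

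The main obstacle I anticipate is precisely this normalization step: the dual solution does not come presented as a state ensemble, and one must verify both that $q > 0$ and that $q$ cancels exactly between numerator and denominator of the discrimination ratio. The positivity is handled by the bound $\<M_i, \omega_i\> \leq \<U, \omega_i\>$ (since $U - M_i \cgeq_{\mC^*} 0$ and $\omega_i \cgeq_{\mC} 0$), which forces $q \geq 1 + R_{\mE_\mF}(\mbM) \geq 1$, with the $R_{\mE_\mF}(\mbM) = 0$ case being trivial. The cancellation relies essentially on the fact that the dual upper-bound element $\eta$ in \eqref{eq:dual2_cond2} is \emph{common} across all outcomes $i$, so that summing over $i$ converts $\sum_i F_i = U$ into the single scalar $\<U, \eta\> = 1$; this is what produces a free-measurement success probability bound independent of the chosen $\mbF$, and is the conceptual heart of the argument.
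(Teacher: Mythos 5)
Your proposal is correct and follows essentially the same route as the paper's proof: the upper bound via the primal decomposition $M_i = (1+r)F_i - rN_i$ with the $-rN_i$ term dropped by positivity, and the lower bound via the ensemble $p_i = \<U,\omega_i\>/\sum_a\<U,\omega_a\>$, $\sigma_i = \omega_i/\<U,\omega_i\>$ built from an optimal dual solution, with the common $\eta$ and $\<U,\eta\>=1$ collapsing the free-measurement success probability. Your extra checks --- that $\{F_i\}_i$ indeed sums to $U$ and that the normalization $q$ is strictly positive --- are details the paper leaves implicit, and they are handled correctly.
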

\begin{proof}
We first show that the left-hand side is smaller than or equal to the right hand side.
Following the definition of the generalized robustness, one can write $M_i = (1+r)F_i - rN_i$ for some $F_i \in \mE_{\mF}$, $N_i \in \mC\*$ for all $i$ where $r=R_{\mE_{\mF}}(\mbM)$.
Thus, we get
 \bal
  p_{\rm succ}(\mA, \mbM)&=\sum_i p_i \< M_i, \sigma_i \>\\
  &= (1+r)\sum_i p_i \<F_i, \sigma_i \> - r\sum_i p_i \< N_i, \sigma_i \>\\
  &\leq (1+r)\max_{\mbF\in \mM_\mF}p_{\rm succ}(\mA,\mbF),\label{eq:rob meas first}
 \eal
 where the inequality follows since $\sigma_i\in\mC$ and $N_i \in \mC\*$ for all $i$, which completes the first part of the proof.
 To show the converse, consider the set of optimal $\{\omega_i\} \subseteq \mC$ that appear in \eqref{eq:dual2_opt}. 
 We choose the probability distribution and states as
 \bal
 \begin{cases}
    p_i = \frac{\<U, \omega_i\>}{\sum_a \< U, \omega_a \>},\,\sigma_i= \omega_i/ \< U, \omega_i \> & {\rm when}\  \< U, \omega_i \> >0,\\
    p_i = 0 & {\rm when}\  \< U, \omega_i \> = 0.
 \end{cases}
 \label{eq:ensemble optimal}
 \eal
 
 Considering an ensemble defined by the above probability and states $\mA = \{p_i,\sigma_i\}$, we obtain for any $\mbF\in \mM_\mF$,
 \ba
  \frac{p_{\rm succ}(\mA,\mbM)}{p_{\rm succ}(\mA,\mbF)} &=& \frac{\frac{1}{\sum_a\ \< U, \omega_a \>}\sum_i \<M_i, \omega_i \>}{\frac{1}{\sum_a \< U, \omega_a \>}\sum_i \< F_i, \omega_i \>} \\
  &\geq& \frac{1+R_{\mE_\mF}(\mbM)}{\sum_i \< F_i, \eta \>} \\
  &=& \frac{1+R_{\mE_\mF}(\mbM)}{\< U, \eta \>} = 1+R_{\mE_\mF}(\mbM)
 \ea
 where the inequality is due to \eqref{eq:dual2_cond2} and the second last equality is due to \eqref{eq:dual2_cond4}, which concludes the proof.
\end{proof}

The above theorem establishes an explicit connection in any GPT between the inherent resourcefulness of a given measurement and the advantage realized in state discrimination tasks with respect to a general set of free measurements, which ensures an operational interpretation of the generalized robustness of measurements, extending the previously considered case of states. 
Furthermore, it allows for a connection with the data hiding phenomenon.
In the investigation of data hiding in general probabilistic theories, it is common to encounter the so-called data hiding ratio~\cite{matthews_2009-1,lami_2017}, which in our notation can be expressed as
\begin{equation}\begin{aligned}
  \max_{\tilde{\mA}} \frac{\max_{\mbM \in \mM}\; p_{\rm succ}(\tilde{\mA},\mbM) - \frac12}{\max_{\mbF\in \mM_\mF} p_{\rm succ}(\tilde{\mA},\mbF) - \frac12}
\end{aligned}\end{equation}
where the ensembles $\tilde{\mA}$ are limited to contain at most two different states. Thm.~\ref{thm:gen_rob_meas_as_advantage} then shows that maximizing the robustness $R_{\mE_\mF}(\mbM)$ over all measurements $\mbM$ provides an alternative ratio of this form, generalized to include state ensembles of arbitrary length.


\subsection{Connections with single-shot information theory}
\label{sec:accessible}

Here, we extend the connection between generalized robustness of measurements and state discrimination tasks to another seemingly different information-theoretic quantity, one-shot accessible information, which generalizes the specific case considered in Ref.\,\cite{skrzypczyk2018robustness}. 
Since entropic quantities are particularly relevant to quantum information theory, in this section we restrict our attention to quantum mechanics (and not general GPTs).

Consider the situation where Alice encodes her classical information into an ensemble $\mA=\{p_x,\sigma_x\}$, and Bob tries to decode it by making a POVM measurement on a state sampled from $\mA$. 
When this process is independently repeated for asymptotically many times, the amount of information he may learn is quantified by the accessible information  $I^{\rm acc}(\mA)\coloneqq \max_{\{N_y\}}I(X:Y)$ where $I(X:Y)=H(X)-H(X|Y)$ is the classical mutual information, $X$ is the random variable associated with the ensemble, and $\{N_y\}$ refers to a POVM measurement whose associated random variable is $Y$ \cite{wilde2013book}.
However, quantities based on the von Neumann/Shannon entropy cease to be suitable for more practical nonasymptotic cases, and several kinds of alternative quantities playing major roles in \textit{single-shot} scenarios have been proposed and studied \cite{Renner2004renyi,renner2008security,Ciganovic2014mutual}.
As in Ref.\,\cite{skrzypczyk2018robustness}, we consider a variant of single-shot version of accessible information, min-accessible information, for a state ensemble $\mA$ defined by  
\bal
I_{\min}^{\rm acc}(\mA)&:= \max_{\{N_y\} \in \mM} I_{\min}(X:Y)\\
    &= -\log \max_x p_x + \max_{\{N_y\} \in \mM}\log \sum_y \max_x p_x\Tr[\sigma_x N_y]
\label{eq:accessible def}
\eal
where $I_{\min}(X:Y)\coloneqq H_{\min}(X)-H_{\min}(X|Y)$ is a single-shot variant of mutual information \cite{Ciganovic2014mutual}, and  $H_{\min}(X)= -\log \max_x p_x$, $H_{\min}(X|Y)= -\log \sum_y\max_x p(x,y)$ are the min-entropy and min-conditional entropy \cite{Renner2004renyi,renner2008security,Konig2009min}.
We shall see that the accessible information for state-ensembles enables the information-theoretic characterization of the generalized robustness of measurements.
To see the relation, it is insightful to see the measurements as channels; in particular, consider the measure-and-prepare channel associated with the measurement $\mbM$ with POVMs $\{M_i\}$ defined by $\Lambda_\mbM(\cdot):= \sum_j \Tr[\cdot\, M_j]\dm{j}$.
Denoting the ensemble one would possess by applying the channel $\Lambda$ to the ensemble $\mA=\{p_x,\sigma_x\}$ by $\mA_{\Lambda_\mbM}:=\{p_x,\Lambda(\sigma_x)\}$, we obtain the following relation between the maximal increase in min-accessible information due to the given resource measurement and the generalized robustness of that measurement. 

\begin{thm} \label{thm:accessible measurement}
For any measurement $\mbM\in\mM$, it holds that
\ba
\max_{\mA}\left[I_{\min}^{\rm acc}(\mA_{\Lambda_\mbM}) - \max_{\mbM'\in \mathcal{M}_\mF} I_{\min}^{\rm acc}(\mA_{\Lambda_{\mbM'}})\right] = \log\left(1+R_{\mE_\mF}(\mbM)\right)
\label{eq:accessible rob meas}
\ea
where $\mathcal{M}_\mF$ is the set of free measurements consisting of POVM elements in $\mE_\mF$. 
\end{thm}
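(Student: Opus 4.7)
The plan is to first express $I_{\min}^{\rm acc}(\mA_{\Lambda_\mbM})$ as a transparent guessing-probability ratio by eliminating the decoding POVM optimization, and then argue by the same conic-duality construction used in the proof of Theorem~\ref{thm:gen_rob_meas_as_advantage} that the resulting ratio is controlled by the generalized robustness. The technical work reduces to analyzing a single scalar quantity
\begin{equation}
f(\mA,\mbM) := \sum_j \max_x p_x\,\Tr[\sigma_x M_j].
\end{equation}

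\textbf{Step 1: collapsing the decoding POVM.} I would first note that $\Lambda_\mbM(\sigma_x)=\sum_j\Tr[\sigma_x M_j]\dm{j}$ is diagonal in the preferred basis, so for any POVM $\{N_y\}$ one has $\Tr[\Lambda_\mbM(\sigma_x)N_y]=\sum_j q(y|j)\Tr[\sigma_x M_j]$, where $q(y|j):=\bra{j}N_y\ket{j}$ satisfies $\sum_y q(y|j)=1$. Pulling the maximum through the convex combination via $\max_x \sum_j q(y|j)p_x\Tr[\sigma_x M_j]\leq \sum_j q(y|j)\max_x p_x\Tr[\sigma_x M_j]$, summing over $y$, and observing the bound is saturated by $N_y=\dm{y}$, the decoding optimization collapses to
\begin{equation}
I_{\min}^{\rm acc}(\mA_{\Lambda_\mbM}) = \log\frac{f(\mA,\mbM)}{\max_x p_x}.
\end{equation}
Applying the same identity to each $\mbM'\in\mM_\mF$, the theorem reduces to
\begin{equation}
\max_\mA \frac{f(\mA,\mbM)}{\max_{\mbM'\in\mM_\mF} f(\mA,\mbM')} = 1 + R_{\mE_\mF}(\mbM). \label{eq:plan_reduced}
\end{equation}

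\textbf{Step 2: the upper bound.} Using $M_j=(1+r)F_j-rN_j$ with $r=R_{\mE_\mF}(\mbM)$, $\{F_j\}\in\mM_\mF$, $\{N_j\}\in\mM$, I drop the non-negative term $rp_x\Tr[\sigma_x N_j]$ inside each $\max_x$ to get $\max_x p_x\Tr[\sigma_x M_j]\leq (1+r)\max_x p_x\Tr[\sigma_x F_j]$ for every $j$; summing gives $f(\mA,\mbM)\leq (1+r)f(\mA,\{F_j\})\leq (1+r)\max_{\mbM'\in\mM_\mF} f(\mA,\mbM')$.

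\textbf{Step 3: the lower bound.} Here I reuse the ensemble attaining the dual optimum $(\{\omega_i^*\},\eta^*)$ of \eqref{eq:dual2_opt}--\eqref{eq:dual2_cond4} exactly as in the proof of Theorem~\ref{thm:gen_rob_meas_as_advantage}: with $Z:=\sum_a \Tr\omega_a^*$, I set $p_i=\Tr\omega_i^*/Z$ and $\sigma_i=\omega_i^*/\Tr\omega_i^*$ whenever $\Tr\omega_i^*>0$. The trivial guess $x=j$ in the definition of $f$ gives
\begin{equation}
f(\mA^*,\mbM)\geq \sum_i p_i\Tr[\sigma_i M_i] = Z^{-1}\sum_i \Tr[\omega_i^* M_i] = Z^{-1}\bigl(1+R_{\mE_\mF}(\mbM)\bigr)
\end{equation}
by the dual objective. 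For any $\{F_j\}\in\mM_\mF$, the dual constraint $\Tr[F\omega_i^*]\leq \Tr[F\eta^*]$ yields $\max_i p_i\Tr[\sigma_i F_j]\leq \Tr[\eta^* F_j]/Z$, so summing over $j$ and using $\sum_j F_j=\mbI$ together with $\Tr\eta^*=1$ gives $f(\mA^*,\{F_j\})\leq 1/Z$. The ratio at $\mA^*$ is therefore at least $1+R_{\mE_\mF}(\mbM)$, completing \eqref{eq:plan_reduced}.

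\textbf{Main obstacle.} The subtle step is Step 1: it rests on the fact that the output of a measure-and-prepare channel is diagonal in a fixed basis and can therefore be optimally decoded by a diagonal POVM. This uses the spectral structure specific to quantum mechanics, which is exactly why the theorem is stated inside quantum theory rather than in an arbitrary GPT; once past this reduction, Steps 2 and 3 are parallel to Theorem~\ref{thm:gen_rob_meas_as_advantage} and go through with the same convex-analytic ingredients.
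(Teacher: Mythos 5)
Your proof is correct and follows essentially the same route as the paper: collapse the decoding POVM to the computational-basis measurement, reduce the statement to a guessing-probability ratio, bound it above via the decomposition $M_j=(1+r)F_j-rN_j$, and achieve it with the ensemble built from the dual optimum of \eqref{eq:dual2_opt}. The only (welcome) differences are cosmetic: your Step 1 explicitly justifies the optimality of the diagonal decoder, which the paper merely asserts, and your Step 3 bounds the free-measurement value directly from the dual constraint $\Tr[F\omega_i^*]\leq\Tr[F\eta^*]$ rather than first invoking closure of $\mM_\mF$ under classical post-processing.
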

\begin{proof}
Consider the ensemble $\mA=\{p_x,\sigma_x\}$. Using the definition \eqref{eq:accessible def}, we get
\ba
 I_{\min}^{\rm acc}(\mA_{\Lambda_\mbM}) &=& -\log \max_x p_x + \max_{\{N_y\}}\log \sum_y \max_x \left[p_x\sum_a\Tr[\sigma_x M_a]\braket{a|N_y|a}\right]\\
 &=& -\log \max_x p_x + \log \sum_y \max_x \left[p_x\sum_a\Tr[\sigma_x M_a]\delta_{ay}\right]\\
  &=& -\log \max_x p_x + \log\max_{\{q(x|y)\}}\sum_y \sum_{x} q(x|y)\,p_x\Tr[\sigma_x M_y].
\ea
The second equality is because $N_y=\dm{y}$ can be chosen as optimal POVM elements, and the third equality is because $\max_x f(x)=\max_{\{q(x)\}}\sum_x q(x)f(x)$ for any function $f(x)$ and probability distribution $q(x)$. 
Note that we interchanged the summation over $y$ and the maximization over probability distributions $\{q(x|y)\}$ because the summation over $y$ is maximized when $q(x|y)$ maximizes the summant for each $y$. 
Then, we get for any ensemble $\mA$,
\ba
 I_{\min}^{\rm acc}(\mA_{\Lambda_\mbM}) - \max_{\mbM'\in \mathcal{M}_\mF} I_{\min}^{\rm acc}(\mA_{\Lambda_{\mbM'}}) &=& \log\frac{\max_{\{q(x|y)\}}\sum_y\sum_{x} q(x|y)\,p_x\Tr[\sigma_x M_y]}{\max_{\mbM'\in \mathcal{M}_\mF}\max_{\{q(x|y)\}}\sum_y\sum_{x} q(x|y)\,p_x\Tr[\sigma_x M_y']}\label{eq:accessible meas 1}\\
 &\leq& \log\left(1+R_{\mE_\mF}(\mbM)\right)
 \label{eq:accessible meas 1 ineq}
\ea
where the inequality can be proved in a similar way to \eqref{eq:rob meas first}. 
To see the converse, note that for any $\mbM'\in \mM_\mF$ and conditional probability distribution $q(x|y)$, another measurement $\mbM''$ defined by the POVM $M_x''=\sum_y q(x|y)M_y'$ is also a member of $\mM_\mF$ due to the convexity of $\mE_\mF$ that defines $\mM_\mF$.
Thus, \eqref{eq:accessible meas 1} is equivalently written as 
\ba
 I_{\min}^{\rm acc}(\mA_{\Lambda_\mbM}) - \max_{\mbM'\in \mathcal{M}_\mF} I_{\min}^{\rm acc}(\mA_{\Lambda_{\mbM'}}) &=& \log\frac{\max_{\{q(x|y)\}}\sum_y\sum_{x} q(x|y)\,p_x\Tr[\sigma_x M_y]}{\max_{\mbM'\in \mathcal{M}_\mF}\sum_{x} p_x\Tr[\sigma_x M_x']}\label{eq:accessible meas 2}.
\ea
The equality in \eqref{eq:accessible meas 1 ineq} can be achieved by setting $q(x|y)=\delta_{xy}$ and taking the ensemble defined by \eqref{eq:ensemble optimal}, which concludes the proof.

\end{proof}

Thm.\,\ref{thm:accessible measurement} generalizes the result in Ref.\,\cite{skrzypczyk2018robustness} for the case of theory of informativeness of measurements to general quantum resource theories of measurements. 
As a result, Thm.\,\ref{thm:accessible measurement}, together with Thm.\,\ref{thm:gen_rob_meas_as_advantage}, establishes the general connections between discrimination tasks, robustness measures, and single-shot information theory conjectured in Ref.\,\cite{skrzypczyk2018robustness} for the case of quantum measurements.


\section{Robustness measures for channels}\label{sec:gen_rob_channels}


Besides states and measurements, the state transformations themselves can also be regarded as resourceful, and their operational characterization in this way has recently become an active area of investigation \cite{bendana_2017,rosset_2018,Diaz2018usingreusing,gour_2018-1,theurer_2018,li_2018,seddon2019quantifying,wilde_2018}.

Here, we in particular discuss two general ways to approach the resource content of channels. One is to relate them to an underlying resource theory for states: if a channel $\Lambda \in \mT(\mV,\mV')$ is a free operation in a resource theory characterized by the set of free states $\mF$, then we know that $\Lambda[\mF] \subseteq \mF'$; the resourcefulness of a non-free operation can then be measured exactly by understanding how much resource it can create, which is known as the \textit{resource generating power}. The second approach is to define an arbitrary set of transformation $\mO_\mF$ which we deem as free, and quantify the resource content of a channel by defining a robustness measures in a similar way as we have done for states and measurements.


\subsection{Robustness generating power of channels}


Let $\mF \subseteq \Omega(\mV)$ be a set of free states in a given resource theory, with $\mF'$ the corresponding free states in another space $\mV'$. If a transformation $\Lambda \in \mT(\mV,\mV')$ is a free operation in this resource theory, we know for certain that $\sigma \in \mF \Rightarrow \Lambda(\sigma) \in \mF'$. As we have seen before, the resource content of a state is naturally and operationally quantified by the robustness $R_\mF(\omega)$ --- motivated by this, we would like to understand the best robustness achievable after the application of a channel on a resourceless state. The robustness generating power of a map $\Lambda$ is then defined as \cite{zanardi_2000,Diaz2018usingreusing}
\begin{equation}\begin{aligned}\label{eq:gen_pow}
  P_\mF(\Lambda) &\coloneqq \max_{\sigma \in \mF} R_{\mF'}(\Lambda(\sigma))\\
  &= \max \lset \< W, \Lambda(\sigma) \> -1 \sbar \sigma \in \mF,\; W \in \mC'\*,\; \< W, \pi \> \leq 1 \; \forall \pi \in \mF' \rset
\end{aligned}\end{equation}
where the second line follows from the dual form of the robustness.

We will now show that this quantity admits an operational interpretation in any GPT as the best advantage that $\Lambda$ can enable in state discrimination tasks of free-state ensembles  with a given measurement, where the transformation $\Lambda$ is applied to the ensembles prior to measurement. Specifically, consider a state discrimination task with the possibility of an inconclusive outcome, where the average success probability is
\begin{equation}\begin{aligned}
   p'_{\mathrm{succ}} (\{p_i, \sigma_i\}_{i=0}^{N-1}, \{M_i\}_{i=0}^N, \Lambda) \coloneqq \sum_{i=0}^{N-1} p_i \< M_i, \Lambda(\sigma_i) \>.
\end{aligned}\end{equation}
We then have the following.

\begin{thm}
Given a map $\Lambda \in \mT(\mV,\mV')$, its robustness generating power is equivalently given by
\begin{equation}\begin{aligned}
  \max_{\{M_i\}} \frac{\max_{\{p_i, \sigma_i\},\, \sigma_i \in \mF}  p'_{\mathrm{succ}} (\{p_i, \sigma_i\}, \{M_i\}, \Lambda)}{\max_{\{q_i, \pi_i\},\,\pi_i \in \mF'}  p'_{\mathrm{succ}} (\{q_i, \pi_i\}, \{M_i\}, \mathrm{id})} = 1 + P_\mF(\Lambda) 
\end{aligned}\end{equation}
where $\mathrm{id}$ denotes the identity map on $\mV'$.
\end{thm}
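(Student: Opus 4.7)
The plan is to prove both directions separately, mirroring the structure of Theorem~\ref{pro:gen rob state} but adapted to accommodate the inconclusive outcome intrinsic to this channel-resource setup.

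For the upper bound, I would begin by invoking the primal form of $R_{\mF'}$: every $\sigma\in\mF$ admits a decomposition $\Lambda(\sigma) = (1+r_\sigma)\pi(\sigma) - r_\sigma\,\tau(\sigma)$ with $\pi(\sigma)\in\mF'$, $\tau(\sigma)\in\Omega(\mV')$, and $r_\sigma := R_{\mF'}(\Lambda(\sigma)) \leq P_\mF(\Lambda)$. Substituting this into $p'_{\mathrm{succ}}(\{p_i,\sigma_i\},\{M_i\},\Lambda)$ for any ensemble $\{p_i,\sigma_i\}\subset\mF$ and any measurement $\{M_i\}$, and dropping the $\tau$-contribution (nonnegative because each $M_i$ lies in the effect cone of $\mV'$ while $\tau(\sigma_i)\in\mC'$), yields
\begin{equation*}
p'_{\mathrm{succ}}(\{p_i,\sigma_i\},\{M_i\},\Lambda) \leq (1+P_\mF(\Lambda))\sum_i p_i\,\<M_i,\pi(\sigma_i)\>.
\end{equation*}
The right-hand sum equals $p'_{\mathrm{succ}}(\{p_i,\pi(\sigma_i)\},\{M_i\},\mathrm{id})$ for a legitimate ensemble in $\mF'$, so it is bounded above by the denominator in the theorem; the upper bound then follows after taking the outer maximum over measurements.

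For the lower bound, I would exhibit a single measurement/ensemble pair saturating the ratio $1+P_\mF(\Lambda)$. Let $\sigma^\star\in\mF$ attain the outer maximum in $P_\mF(\Lambda)$, and apply the dual form \eqref{eq:gen rob state dual} to the state $\Lambda(\sigma^\star)\in\Omega(\mV')$ to obtain a witness $W$ in the dual cone of $\mV'$ with $\<W,\pi\>\leq 1$ for all $\pi\in\mF'$ and $\<W,\Lambda(\sigma^\star)\> - 1 = P_\mF(\Lambda)$. Construct a two-outcome measurement on $\mV'$ by $M_0 := W/\|W\|^\circ_\Omega$ and $M_1 := U' - M_0$; both are valid effects by the definition of $\|\cdot\|^\circ_\Omega$ and the dual-cone inclusion of $W$. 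Using the one-state ensembles $\{1,\sigma^\star\}$ in the numerator and $\{1,\pi^\star\}$ with $\pi^\star\in\arg\max_{\pi\in\mF'}\<W,\pi\>$ in the denominator, the shared factor $\|W\|^\circ_\Omega$ cancels, leaving the ratio $\<W,\Lambda(\sigma^\star)\>/\max_{\pi\in\mF'}\<W,\pi\> \geq 1 + P_\mF(\Lambda)$, as required.

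The bookkeeping closely parallels that of Theorem~\ref{pro:gen rob state}; the only genuinely new ingredient is the exploitation of the inconclusive outcome $M_1$ as a slack absorber, which is what permits a trivial single-state ensemble to already reveal the full witness $W$ on both sides of the ratio without any actual discrimination taking place. I expect no serious technical obstacle beyond checking that $W/\|W\|^\circ_\Omega$ is properly bounded above by $U'$ in the dual order (immediate from the order-unit-norm definition reviewed in Section~\ref{sec:prelim}), and noting that the degenerate case $P_\mF(\Lambda)=0$, in which $\Lambda$ is resource-non-generating, is handled trivially by selecting matching free-state ensembles on both sides.
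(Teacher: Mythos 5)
Your proposal is correct and follows essentially the same route as the paper's proof: the same primal decomposition $\Lambda(\sigma_i)=(1+r_i)\pi_i-r_i\tau_i$ with $r_i\leq P_\mF(\Lambda)$ for the upper bound, and the same witness-based two-outcome measurement $\{W/\norm{W}^\circ_{\Omega'},\,U'-W/\norm{W}^\circ_{\Omega'}\}$ with a single-element free ensemble for the lower bound. The only cosmetic difference is that you extract $W$ from the dual of $R_{\mF'}$ at the fixed optimizer $\sigma^\star$ rather than from the jointly optimized dual form in Eq.~\eqref{eq:gen_pow}, which amounts to the same thing.
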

\begin{proof}
One direction is as usual. Notice that the definition of the robustness generating power implies that, for any free state $\sigma_i\in\mF$, there exists another state $\delta_i \in \Omega(\mV')$ such that $\frac{\Lambda(\sigma_i) + r_i \delta_i}{1+r_i} = \rho_i \in \mF'$ where $r_i \leq r = P_\mF(\Lambda)$. This gives for any $\{M_i\}$ and any free-state ensemble $\{p_i, \sigma_i\}$ that
\begin{equation}\begin{aligned}
   p'_{\mathrm{succ}} (\{p_i, \sigma_i\}, \{M_i\}, \Lambda) &= \sum_i p_i \< M_i, (1+r_i) \rho_i - r_i \delta_i \>\\
  &\leq \sum_i p_i \< M_i, (1+r) \rho_i \>\\
  &\leq (1+r) \max_{\substack{\{q_i, \pi_i\}\\\pi_i \in \mF'}} p'_{\mathrm{succ}} (\{q_i, \pi_i\}, \{M_i\}, \mathrm{id}).
\end{aligned}\end{equation}

On the other hand, consider the optimal solution $W$ in \eqref{eq:gen_pow} and define a measurement by $\left\{\frac{W}{\norm{W}_{\Omega'}^\circ},\,U' - \frac{W}{\norm{W}_{\Omega'}^\circ} \right\}$. We can then choose the single-element ensemble $\{1, \sigma\}$ where $\sigma$ is an optimal state in \eqref{eq:gen_pow}. This gives
\begin{equation}\begin{aligned}
  \max_{\{M_i\}} \frac{\max_{\{p_i, \sigma_i\},\, \sigma_i \in \mF}  p'_{\mathrm{succ}} (\{p_i, \sigma_i\}, \{M_i\}, \Lambda)}{\max_{\{q_i, \pi_i\},\,\pi_i \in \mF'}  p'_{\mathrm{succ}} (\{q_i, \pi_i\}, \{M_i\}, \mathrm{id})} &\geq \frac{\frac{1}{\norm{W}_{\Omega'}^\circ} \< W, \Lambda(\sigma) \>}{\frac{1}{\norm{W}_{\Omega'}^\circ} \max_{\pi \in \mF'} \< W, \pi \>}\\
  &\geq 1 + P_\mF(\Lambda) 
\end{aligned}\end{equation}
where in the last inequality we used that $\pi \in \mF'$ and so $\< W, \pi \> \leq 1$ by the conditions on $W$ in \eqref{eq:gen_pow}.
\end{proof}

This establishes a universal operational meaning of the robustness generating power for any resource theory and any choice of free operations in the given theory.

We remark that Ref.~\cite{li_2018} showed an operational interpretation in the context of binary channel discrimination of another measure of channel resourcefulness within quantum mechanics, the resource generating power as quantified by the trace norm distance to the set of free states: $\max_{\sigma \in \mF} \min_{\sigma' \in \mF'} \norm{\Lambda(\sigma)-\sigma'}_1$. It can be noticed that the proof in \cite{li_2018} uses only the fundamental properties of the trace norm as the base norm, and thus holds in the same way in any general probabilistic theory.


\subsection{Generalized robustness of channels}


Here, we take another approach to quantify the resourcefulness of channels, in which we aim to directly evaluate the intrinsic resourcefulness of a given channel without the aid of an underlying resource theory of states. 
Having seen the discussions for states and measurements, a natural approach to take is to
measure the resourcefulness with respect to a given set of free channels $\mO_{\mF}$.
Analogously to the cases of states and measurements, we propose the generalized robustness of channels and discuss an operational meaning of it via channel discrimination (see also Ref.\,\cite{Diaz2018usingreusing} for a discussion of this quantity in the resource theory of quantum coherence). 

The operational characterization of this measure will depend on several properties which general GPTs need not satisfy, and in particular will make heavy use of the Choi-Jamio{\l}kowski isomorphism. Although equivalent forms of this isomorphism can be obtained in GPTs beyond quantum mechanics under suitable assumptions \cite{chiribella_2010}, for the simplicity of the discussion we will limit ourselves to quantum theory.
Thus, in this subsection, operator inequalities should be understood in terms of positive semidefiniteness (we will not distinguish between the state and effect cones as they are isomorphic to each other). 

Given a convex and closed set of channels $\mO_\mF\subseteq \mT(\mV,\mV')$, we define the generalized robustness measure for channel $\Lambda\in\mT(\mV,\mV')$ with respect to $\mO_\mF$ as 
\ba
R_{\mO_{\mF}}(\Lambda) \coloneqq \min\lset r \in \mathbb{R}_+ \sbar \frac{\Lambda +  r \, \Theta}{1+r}\in \mO_{\mF}, \;\, \Theta \in \mT(\mV,\mV') \rset.
\label{eq:def robustness channels}
\ea
For further analysis, we utilize the Choi representation of channels. 
Let $J_\Lambda:= \mathbb{I}\otimes \Lambda (\dm{\tilde{\Phi}^+})$ where $\ket{\tilde{\Phi}^+}:= \sum_j \ket{jj}$ is the unnormalized maximally entangled state. 
It is well known that $\Lambda$ is completely positive iff $J_\Lambda\cgeq 0$, trace preserving iff $\Tr_{\mV'}[J_\Lambda]= \mbI_\mV$.
Let $\mO_\mF^J\subseteq \mV\otimes \mV'$ be the set of Choi matrices corresponding to channels in $\mO_\mF$.
We will assume that $\mO_\mF^J$ contains at least one interior point of $\mC \otimes \mC'$, i.e. a channel $\Gamma$ s.t. $J_\Gamma \cge 0$, so that \eqref{eq:def robustness channels} is well-defined for any channel.
Then, introducing the variable $\Xi = \Lambda + r \Theta$, \eqref{eq:def robustness channels} can be rewritten as the following optimization problem:
\ba
&{\text{\rm minimize}}& \ \ r \label{eq:opt_obj channel}\\
&{\text{\rm subject to}}&\ \ J_{\Lambda} \cleq  J_\Xi  \label{eq:opt_cond1 channel}\\
&& \ \ J_\Xi \in {\rm cone}\left(\mO_\mF^J\right)\\
&& \ \  \Tr_{\mV'}[J_\Xi]=(1+r)\,\mbI_\mV
\ea
where ${\rm cone}(\mO^J_\mF)$ is the cone generated by $\mO^J_\mF$.
The corresponding dual problem is written as (see e.g. Appendix \ref{app:duality})
\ba
&{\text{\rm maximize}}& \ \  \Tr[Y J_{\Lambda}] -1  \label{eq:dual1 obj channel}\\
&{\text{\rm subject to}}&\ \ Y=-Z+X\otimes\mbI\cgeq 0 \label{eq:dual1 cond1 channel}\\
&& \ \ X\in \mV,\,\Tr[X]=1 \label{eq:dual1 cond2 channel}\\
&& \ \ Z\in \mV\otimes \mV',\,\Tr[ZJ_\Xi]\geq 0,\ \forall J_\Xi\in \mO_\mF^J. \label{eq:dual1 cond3 channel}
\ea
It can be confirmed that the strong duality holds by taking $X=\mbI_\mV/d$ and $Z=\mbI_{\mV\mV'}/(2d)$ where $d=\dim \mV$. 
Using \eqref{eq:dual1 cond1 channel} and \eqref{eq:dual1 cond2 channel}, one can write \eqref{eq:dual1 cond3 channel} as 
\ba
 \Tr[(-Y+X\otimes\mbI)J_\Xi] &=& -\Tr[YJ_\Xi] + \Tr[\Xi(X)]\\
 &=& -\Tr[YJ_\Xi] + 1 \geq 0.
\ea

Since the objective function does not include $X$ or $Z$, we reach the following equivalent formulation;

\ba
&{\text{\rm maximize}}& \ \  \Tr[Y J_{\Lambda}] -1  \label{eq:dual2 obj channel}\\
&{\text{\rm subject to}}&\ \ Y\cgeq 0 \label{eq:dual2 cond1 channel}\\
&& \ \ \Tr[YJ_\Xi]\leq 1,\ \forall J_\Xi\in \mO_\mF^J. \label{eq:dual2 cond2 channel}
\ea

This quantity finds a connection with two different types of discrimination problems: on the one hand, when considered for a single channel, it can characterize the advantage that the channel provides over all free channels in state discrimination tasks with given input states and measurement; on the other hand, when considered for an ensemble of channels, it can be used to express the advantage provided by the ensemble in a class of channel discrimination problems over ensembles composed of free channels.

Specifically, let us first consider the problem of discriminating an ensemble of quantum states by an application of the channel $\mbI \otimes \Lambda$. The average success probability for this task can be expressed as
\begin{equation}\begin{aligned}
     p_{\rm succ}(\{p_j,\sigma_j\}, \{M_j\}, \mbI\otimes\Lambda)=\sum_j p_j\Tr\left[\mbI\otimes\Lambda(\sigma_j)M_j\right].
 \end{aligned}\end{equation}
 We then have the following result.
 \begin{thm}\label{thm:gen_rob_single_channel}
 Let $\mO_{\mF}$ be a convex and closed set of free channels. Then
 \ba
  \max_{\{p_j, \sigma_j\}, \{M_j\}}\frac{p_{\rm succ}(\{p_j,\sigma_j\},\{M_j\}, \mbI \otimes\Lambda)}{\max_{\Xi\in\mO_\mF} p_{\rm succ}(\{p_j,\sigma_j\},\{M_j\}, \mbI \otimes\Xi)} = 1 + R_{\mO_\mF}(\Lambda)
 \ea
 where the maximization is over all state ensembles $\{p_j, \sigma_j\}$ with $\sigma_j \in \Omega(\mV \otimes \mV')$ and all measurements $\{M_j\}$ in the space $\mV \otimes \mV'$.
\end{thm}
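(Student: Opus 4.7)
The proof will follow exactly the two-part template that the paper has established for the analogous theorems on states (Theorem~\ref{pro:gen rob state}) and measurements (Theorem~\ref{thm:gen_rob_meas_as_advantage}): one direction uses the primal decomposition of the robustness, the other direction uses the dual optimizer to construct a witnessing ensemble and measurement. The one new ingredient is the use of the Choi--Jamio{\l}kowski isomorphism to convert the dual operator $Y$ (which lives on $\mV\otimes\mV'$) into a physical state-plus-effect pair on the bipartite system, which is exactly why the statement is restricted to quantum theory.

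For the upper bound, the plan is to take $r = R_{\mO_\mF}(\Lambda)$ and, using the definition \eqref{eq:def robustness channels}, write $\Lambda = (1+r)\Xi - r\Theta$ with $\Xi\in\mO_\mF$ and $\Theta\in\mT(\mV,\mV')$. Inserting this into the success probability and applying $\mbI\otimes$ linearly gives
\begin{equation*}
p_{\rm succ}(\{p_j,\sigma_j\},\{M_j\},\mbI\otimes\Lambda)
= (1+r)\,p_{\rm succ}(\{p_j,\sigma_j\},\{M_j\},\mbI\otimes\Xi)
- r\,p_{\rm succ}(\{p_j,\sigma_j\},\{M_j\},\mbI\otimes\Theta).
\end{equation*}
Since $\Theta$ is a channel and the $M_j$ are POVM elements, the subtracted term is nonnegative, so the expression is bounded above by $(1+r)\max_{\Xi\in\mO_\mF}p_{\rm succ}(\{p_j,\sigma_j\},\{M_j\},\mbI\otimes\Xi)$. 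Dividing yields the desired inequality for every ensemble/measurement, hence for the supremum.

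For the lower bound, the plan is to exhibit a single witnessing pair built from a dual optimizer $Y$ of the program \eqref{eq:dual2 obj channel}--\eqref{eq:dual2 cond2 channel} (strong duality has already been established in the text). Let $d=\dim\mV$ and take the trivial one-element ensemble consisting of the normalized maximally entangled state $\sigma = \dm{\tilde\Phi^+}/d$ together with the two-outcome measurement $\{M_0,M_1\} = \bigl\{Y/\|Y\|_\infty,\,\mbI-Y/\|Y\|_\infty\bigr\}$; this is a valid POVM because $Y \cgeq 0$ from \eqref{eq:dual2 cond1 channel} implies $0\cleq Y\cleq \|Y\|_\infty\,\mbI$. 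The crucial identity is $\mbI\otimes\Lambda(\sigma) = J_\Lambda/d$, so that $p_{\rm succ}(\{1,\sigma\},\{M_0,M_1\},\mbI\otimes\Lambda) = \Tr[Y J_\Lambda]/(d\|Y\|_\infty) = (1+R_{\mO_\mF}(\Lambda))/(d\|Y\|_\infty)$ by optimality in \eqref{eq:dual2 obj channel}, whereas for any free $\Xi\in\mO_\mF$ the constraint \eqref{eq:dual2 cond2 channel} gives $p_{\rm succ}(\{1,\sigma\},\{M_0,M_1\},\mbI\otimes\Xi) = \Tr[Y J_\Xi]/(d\|Y\|_\infty) \leq 1/(d\|Y\|_\infty)$. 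Taking the ratio cancels the $d\|Y\|_\infty$ factor and produces exactly $1+R_{\mO_\mF}(\Lambda)$, completing the inequality in the remaining direction.

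The step I expect to require the most care is the dual-side construction: one must check that the single ensemble-measurement pair $(\sigma,\{M_0,M_1\})$ constructed from $Y$ simultaneously (i) realizes the numerator as $\Tr[YJ_\Lambda]/(d\|Y\|_\infty)$ via the Choi identity, and (ii) bounds the denominator uniformly through the dual feasibility constraint $\Tr[YJ_\Xi]\le 1$. Everything else (strong duality, well-definedness of $R_{\mO_\mF}$, validity of the POVM) has already been verified in the surrounding discussion. No structural assumption on $\mO_\mF$ beyond convexity and closedness is needed, so the argument goes through uniformly for any quantum resource theory of channels defined this way.
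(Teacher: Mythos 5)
Your proposal is correct and follows essentially the same route as the paper's proof: the upper bound via the primal decomposition $\Lambda = (1+r)\Xi - r\Theta$, and the lower bound via the two-outcome POVM $\{Y/\norm{Y}_\infty,\, \mbI - Y/\norm{Y}_\infty\}$ built from the dual optimizer applied to the maximally entangled input state, with the Choi identity converting the success probabilities into $\Tr[YJ_\Lambda]$ and $\Tr[YJ_\Xi]$. The only cosmetic difference is that you use a genuine one-element ensemble and track the $1/d$ normalization explicitly, whereas the paper uses a two-element ensemble with $p_1=0$ and lets the normalization cancel in the ratio.
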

\begin{proof}
The proof again proceeds analogously to the proof of Thm.~\ref{pro:gen rob state}. Specifically, for any given measurement and state ensemble we have
\ba
 p_{\rm succ}(\{p_j,\sigma_j\},\{M_j\}, \mbI \otimes\Lambda) &=& \sum_j p_j\Tr\left[\mbI\otimes\Lambda(\sigma_j)M_j\right]\\
 &\leq& \sum_j p_j\left(1 + r\right)\Tr\left[\mbI\otimes\tilde{\Xi}(\sigma_j)M_j\right]\\
 &\leq& \left(1 + r\right) \max_{\Xi\in\mO_\mF}p_{\rm succ}(\{p_j,\sigma_j\},\{M_j\},\mbI \otimes\Xi)
 \ea
where $r = R_{\mO_\mF}(\Lambda)$ and $\tilde{\Xi}\in\mO_\mF$ is a free channel for an optimal decomposition of $\Lambda$. The converse follows by choosing the measurement $\left\{ \frac{Y}{\norm{Y}_\infty} , \mbI - \frac{Y}{\norm{Y}_\infty} \right\}$ where $Y$ is an optimal solution in \eqref{eq:dual2 obj channel} and the ensemble $\{p_i, \sigma_i\}_{i=0}^1$ defined as $p_0 = 1, \sigma_0 = \dm{\Phi^+}$ and $p_1 = 0$ where $\ket{\Phi^+}$ is the maximally entangled state and $\sigma_1$ is an arbitrary state. For this choice, it holds that
\begin{equation}\begin{aligned}
	\frac{p_{\rm succ}(\{p_j,\sigma_j\},\{M_j\}, \mbI \otimes\Lambda)}{\max_{\Xi\in\mO_\mF} p_{\rm succ}(\{p_j,\sigma_j\},\{M_j\}, \mbI \otimes\Xi)} &= \frac{\Tr \left[ \mbI \otimes \Lambda(\dm{\Phi^+}) Y\right]}{\max_{\Xi\in\mO_\mF} \Tr \left[ \mbI \otimes \Xi(\dm{\Phi^+}) Y\right]}\\
	&= \frac{\Tr \left[ J_\Lambda Y \right]}{\max_{\Xi\in\mO_\mF} \Tr \left[ J_\Xi Y \right] }\\
	&\geq 1 + R_{\mO_\mF}(\Lambda)
\end{aligned}\end{equation}
where the inequality follows from \eqref{eq:dual2 cond2 channel}.
\end{proof}

Another approach is to consider, instead of a single channel $\Lambda$, an ensemble of channels, and characterize its performance in a class of channel discrimination problems.

Consider a scenario where a channel sampled from some prior distribution is applied to one part of the input bipartite state and a collective measurement is made on the output system, where we allow for an inconclusive measurement outcome.
The success probability of this channel discrimination problem is written as
\ba
 p_{\rm succ}'(\{p_i,\mbI\otimes\Lambda_i\}_{i=0}^{N-1},\{M_i\}_{i=0}^{N},\omega)=\sum_{j=0}^{N-1} p_j\Tr\left[\left\{\mbI\otimes\Lambda_j(\omega)\right\}M_j\right]
\ea
For a given ensemble of $N-1$ channels $\{p_i,\Lambda_i\}_{i=0}^{N-1}$, where we assume that each $p_i > 0$ for simplicity, let us consider the maximum robustness for the ensemble:
 \ba
  \tilde{R}_{\mO_\mF}(\{p_i,\Lambda_i\}) \coloneqq \max_j R_{\mO_\mF}(\Lambda_j).
 \ea

Then, we obtain the following result, which connects the maximal advantage for this class of channel discrimination and the maximum robustness of channel ensembles. 

\begin{thm} \label{thm:gen_rob_channel_ensemble}
 Let $\mO_{\mF}$ be a convex and closed set of free channels. Then,
 \ba
  \max_{\substack{\omega \in \Omega(\mV\otimes\mV)\\\{M_j\}_{j=0}^N}}\frac{p_{\rm succ}'(\{p_j,\mbI\otimes\Lambda_j\}_{j=0}^{N-1},\{M_j\}_{j=0}^{N}, \omega)}{\max_{\Xi_j\in\mO_\mF}p_{\rm succ}'(\{p_j,\mbI\otimes\Xi_j\}_{j=0}^{N-1},\{M_j\}_{j=0}^{N},\omega)} = 1 + \tilde{R}_{\mO_\mF}(\{p_j,\Lambda_j\}).
  \label{eq:advantage channel average}
 \ea
\end{thm}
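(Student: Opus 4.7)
The plan is to adapt the proof strategy of Thm.~\ref{thm:gen_rob_single_channel} to the ensemble setting, exploiting the fact that the inconclusive measurement outcome lets us effectively isolate a single channel in the ensemble. Let $r_j \coloneqq R_{\mO_\mF}(\Lambda_j)$ and $r \coloneqq \tilde R_{\mO_\mF}(\{p_j,\Lambda_j\}) = \max_j r_j$, and fix $j^\ast$ achieving the maximum.

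For the upper bound, I would use the optimal robustness decomposition of each $\Lambda_j$: there exist $\Xi_j \in \mO_\mF$ and channels $\Theta_j$ such that $\Lambda_j = (1+r_j)\Xi_j - r_j\Theta_j$. Substituting into the success probability and using $M_j \cgeq 0$ and $\mbI\otimes\Theta_j(\omega) \cgeq 0$, I get
\begin{equation*}
p'_{\rm succ}(\{p_j,\mbI\otimes\Lambda_j\},\{M_j\},\omega) \leq \sum_{j=0}^{N-1} p_j(1+r_j)\Tr[\mbI\otimes\Xi_j(\omega)M_j] \leq (1+r)\max_{\Xi_j\in\mO_\mF} p'_{\rm succ}(\{p_j,\mbI\otimes\Xi_j\},\{M_j\},\omega),
\end{equation*}
where the second step uses $r_j \leq r$ and nonnegativity of the summands.

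For the matching lower bound, I would pick an optimal dual variable $Y$ for $R_{\mO_\mF}(\Lambda_{j^\ast})$ in the formulation \eqref{eq:dual2 obj channel}--\eqref{eq:dual2 cond2 channel}, take $\omega = \dm{\Phi^+}$, and construct an $(N+1)$-outcome measurement concentrated on the $j^\ast$-th channel: set $M_{j^\ast} = Y/\|Y\|_\infty$, $M_N = \mbI - Y/\|Y\|_\infty$ for the inconclusive outcome, and $M_j = 0$ for all other $j$. This is a valid measurement since $0 \cleq Y/\|Y\|_\infty \cleq \mbI$. Then the numerator reduces to $(p_{j^\ast}/\|Y\|_\infty)\Tr[J_{\Lambda_{j^\ast}}Y] = (p_{j^\ast}/\|Y\|_\infty)(1+r)$, while the denominator satisfies
\begin{equation*}
\max_{\Xi_j\in\mO_\mF} p'_{\rm succ}(\{p_j,\mbI\otimes\Xi_j\},\{M_j\},\dm{\Phi^+}) = \frac{p_{j^\ast}}{\|Y\|_\infty}\max_{\Xi_{j^\ast}\in\mO_\mF}\Tr[J_{\Xi_{j^\ast}}Y] \leq \frac{p_{j^\ast}}{\|Y\|_\infty},
\end{equation*}
by the dual feasibility constraint \eqref{eq:dual2 cond2 channel}. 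Taking the ratio yields $1+r = 1+\tilde R_{\mO_\mF}(\{p_j,\Lambda_j\})$, closing the bound.

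The only subtle point is conceptual rather than computational: one might naively expect that resourcefulness should aggregate across the ensemble, but the availability of the inconclusive outcome permits the discriminator to assign zero weight to all but the most resourceful channel, so the supremum is controlled by $\max_j R_{\mO_\mF}(\Lambda_j)$ rather than any average. Accordingly, the key structural ingredient is that the measurement $\{M_j\}_{j=0}^N$ includes an inconclusive slot --- without it, all $M_j$ with $j \neq j^\ast$ would be forced to be nonzero and the construction above would not be valid. The positivity of $p_{j^\ast}$ (assumed at the outset) is also used to ensure the ratio is well-defined.
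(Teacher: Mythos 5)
Your proof is correct and follows essentially the same route as the paper: the upper bound via the termwise robustness decomposition $\Lambda_j = (1+r_j)\Xi_j - r_j\Theta_j$ bounded by $1+r_{j^\star}$, and the lower bound via $\omega = \dm{\Phi^+}$ with the measurement concentrated on $M_{j^\star} = Y_{j^\star}/\|Y_{j^\star}\|_\infty$, all other conclusive outcomes set to zero, and the remainder assigned to the inconclusive slot. Your closing remark about why the maximum rather than an average of the robustnesses governs the advantage is a correct reading of the same construction.
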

\begin{proof}
One direction of the inequality is shown as usual;
 \ba
 p_{\rm succ}'(\{p_j,\mbI\otimes\Lambda_j\}_{j=0}^{N-1},\{M_i\}_{j=0}^{N}, \omega) &=& \sum_{j=0}^{N-1} p_j\Tr\left[\left\{\mbI\otimes\Lambda_j(\omega)\right\}M_j\right]\\
 &\leq& \sum_{j=0}^{N-1} p_j\left(1 + r_j\right)\Tr\left[\left\{\mbI\otimes\Xi_j(\omega)\right\}M_j\right]\\
 &\leq& \left(1 + r_{j^\star}\right) \max_{\Xi_j\in\mO_\mF}p_{\rm succ}'(\{p_j,\mbI\otimes\Xi_j\}_{j=0}^{N-1},\{M_j\}_{j=0}^{N},\omega)
 \ea
 where we set $r_j\coloneqq R_{\mO_\mF}(\Lambda_j)$ and $j^\star = \operatorname{argmax}_j r_j$.
 To show the left-hand side is greater than or equal to the right-hand side, consider the state $\omega = \dm{\Phi^+}$ and the POVM defined by $M_{j^\star} = Y_{j^\star}/\|Y_{j^\star}\|_{\infty}$ where $Y_{j^\star}$ is an optimal solution for $\Lambda_{j^\star}$ in \eqref{eq:dual2 obj channel}, $M_j=0$ for $j\neq {j^\star}$ and $M_{N}=\mbI - M_{j^\star}$. 
For this choice we then have
 \ba
  \frac{p_{\rm succ}'(\{p_j,\mbI\otimes\Lambda_j\}_{j=0}^{N-1},\{M_i\}_{j=0}^{N}, \omega)}{\max_{\Xi_j\in\mO_\mF}p_{\rm succ}'(\{p_j,\mbI\otimes\Xi_j\}_{j=0}^{N-1},\{M_i\}_{j=0}^{N},\omega)} &=& \frac{\sum_{j=0}^{N-1} p_j\Tr\left[\left\{\mbI\otimes\Lambda_j(\dm{\Phi^+})\right\}M_j\right]}{\max_{\Xi_i\in\mO_\mF}\sum_{j=0}^{N-1} p_j\Tr\left[\left\{\mbI\otimes\Xi_j(\dm{\Phi^+})\right\}M_j\right]} \\
  &=& \frac{p_{j^\star}\Tr\left[J_{\Lambda_{j^\star}}Y_{j^\star}\right]}{\max_{\Xi\in\mO_\mF} p_{j^\star}\Tr\left[J_{\Xi} Y_{j^\star}\right]} \\
  &\geq& 1 + \tilde{R}_{\mO_\mF}(\{p_j,\Lambda_j\})
 \ea
 where the last inequality is due to \eqref{eq:dual2 cond2 channel}.
\end{proof}

We remark that, in the specific case of the resource theory of quantum coherence and the choice of operations $\mO_\mF$ as the so-called maximally incoherent operations (the largest set of operations preserving the set of free states)~\cite{aberg2006quantifying}, it was shown that the two approaches to the quantification of channel resources coincide and we have $P_\mF(\Lambda) = R_{\mO_\mF}(\Lambda)$ for any channel $\Lambda$ \cite{Diaz2018usingreusing}. An investigation of more general cases under which this happens is an interesting open question in the operational characterization of resources of transformations.


\section{Standard robustness of states}\label{sec:standard_rob}

We have seen that the generalized robustness is a fundamental measure capturing the resourcefulness contained in general types of resources with  operational significance.
Besides the generalized robustness, another important member of the class of robustness measures is the standard (free) robustness. 
This is also a valid resource monotone in any convex resource theory of states \cite{regula_2018} with some known operational interpretations --- the standard robustness of entanglement plays a role in activation of quantum teleportation \cite{brandao_2007} and one-shot cost for entanglement dilution under the non-entangling operations \cite{brandao_2011}, and the standard robustness of magic is related to classical simulation overhead for quantum Clifford circuits with magic-state inputs \cite{howard_2017}.  Furthermore, this quantifier in several quantum resource theories can be evaluated analytically for some states~\cite{vidal_1999,jafarizadeh_2005,howard_2017,johnston_2018} and admits computable forms as semidefinite or linear programs~\cite{howard_2017,ringbauer_2018,johnston_2018}.

However, the standard robustness has not found use in discrimination-type problems thus far, and a universal operational meaning of the standard robustness in general resource theories, whether in quantum mechanics or beyond, has not been established. 
We address this question for the standard robustness of states and indeed give such a general operational meaning in terms of the advantage for the most fundamental type of discrimination task: balanced binary channel discrimination. 

An interesting difference between the generalized robustness and standard robustness is that, unlike generalized robustness, standard robustness can diverge for any resource state in some important theories such as the resource theory of coherence and asymmetry.
It may thus appear that interpreting such a divergent quantity in an operational setting would be implausible. 
We address this issue by considering a natural figure of merit for the advantage in balanced binary channel discrimination in a way that it encompasses this seemingly-problematic circumstances as well at the limit of diverging standard robustness. 
Note that the following argument is valid in any GPT with an additional reasonable assumption introduced below.

The standard robustness of a state $\omega \in \Omega(\mV)$ with respect to a convex and closed set of free states $\mF \subseteq \Omega(\mV)$ is defined as
\ba\label{eq:std_rob_def}
  R^\mF_\mF(\omega) \coloneqq& \inf \lset r \in \mathbb{R}_+ \sbar \frac{\omega+r \pi}{1+r} \in \mF,\; \pi \in \mF \rset.
\ea
It is straightforward to verify that the standard robustness is faithful ($R_{\mF}^\mF(\omega) = 0 \iff \omega \in \mF$), convex, and monotonic under free operations in the sense that $R_{\mF}^\mF(\Lambda(\omega)) \leq R_{\mF}^\mF(\omega)$ for any linear map such that $\Lambda[\mF] \subseteq \mF$ \cite{regula_2018}. The dual optimization problem can be obtained as
\begin{equation}\begin{aligned}
{\text{\rm maximize }}& \ \  \< X, \omega \> - 1  \label{eq:dual obj standard}\\
{\text{\rm subject to }}&\ \ 0\leq \< X, \sigma \> \leq 1\ \forall \sigma\in \mF.
\end{aligned}\end{equation}
We stress that this quantity is finite for any $\omega$ only when the set $\mF$ spans the whole space $\mathcal{V}$; we will therefore make this assumption for the discussed quantities to be well-defined, although we will later see that it is not necessary for the operational interpretation of $R_\mF^\mF$. Under this assumption, the infimum in \eqref{eq:std_rob_def} is achieved, and strong duality always holds --- indeed, the cone generated by the set $\mF$ is pointed by the pointedness of $\mC$, and the dual of any pointed cone is generating, hence it contains an interior point and so we can always choose a strictly feasible solution. 

We shall show that the standard robustness characterizes the maximum advantage that a resource state provides over free states in channel discrimination. For this task to be well-defined, one needs to specify what constitutes the set of physically allowed state transformations $\mT(\mV,\mV')$ or channels, in the given GPT. As discussed before, the very basic assumption one can make about the channels is that they preserve the state cone as well as the normalization of states. Throughout this section, we will make an additional assumption:
\begin{itemize}
\item Measure-and-prepare channels of the form $\Lambda(\omega) = \sum_i \< M_i, \omega \> \omega'_i$ with $\{M_i\}_i$ being a measurement and $\{\omega'_i\}_i$ a collection of states in the output space are allowed physical transformations.
\end{itemize}
The justification for allowing measure-and-prepare channels follows from the fact that measurement and state preparation can be considered as the two building blocks of any GPT and are necessarily physically implementable~\cite{davies_1970,ludwig_1983,lami_2018-2}; one then only needs to allow for classical information about the measurement outcome to be transferred in order to implement any such measure-and-prepare channel. In particular, this assumption is clearly satisfied in any theory which extends quantum mechanics or even classical probability theory. We will denote by $\mT(\mV,\mV')$ the set of all physically allowed operations within the given GPT, and will assume throughout this section that it satisfies the assumption above. In fact, for the purpose of our proof, it will suffice to assume that only binary measure-and-prepare channels based on two-outcome measurements are allowed.

Consider now a binary channel discrimination problem where one of two channels is applied according to the ensemble $\{1/2,\Lambda_i\}_{i=0}^1$, with each $\Lambda_i \in \mT(\mV,\mV')$. The only assumption about the output space $\mV'$ that we will make is that it contains at least two distinct states, as otherwise the task becomes trivial.
Since the two channels $\Lambda_i$ are equiprobable, a random guess gives the success probability 1/2, and thus a meaningful figure of merit of this task is how much one can increase the success probability by suitably choosing a measurement and inferring the applied channel from the measurement outcome. 
Motivated by this observation, we consider the following quantity;
\ba
 p_{\rm gain}(\{1/2,\Lambda_i\}_{i=0}^1,\{M_i\}_{i=0}^1,\omega)\coloneqq p_{\rm succ}(\{1/2,\Lambda_i\}_{i=0}^1,\{M_i\}_{i=0}^1,\omega) - \frac{1}{2}
\ea
and the one being maximized with respect to measurement;
\ba
 p_{\rm gain}(\{1/2,\Lambda_i\}_{i=0}^1,\omega) \coloneqq \max_{\{M_i\}_{i=0}^1 \in \mM} p_{\rm gain}(\{1/2,\Lambda_i\}_{i=0}^1,\{M_i\}_{i=0}^1,\omega).
\ea
Then, we obtain the following result, stating that the maximum advantage in terms of success probability gain that a resource state provides over free states is characterized exactly by the standard robustness measure.  
\begin{thm}\label{thm:st_rob}
It holds that
\ba
 \max_{\Lambda_0,\Lambda_1 \in \mT(\mV,\mV')}\frac{p_{\rm gain}(\{1/2,\Lambda_i\}_{i=0}^1,\omega)}{\max_{\sigma\in\mF}p_{\rm gain}(\{1/2,\Lambda_i\}_{i=0}^1,\sigma)} = 1+2R_{\mF}^\mF(\omega).
\ea
where the maximization is over all possible ensembles $\{1/2,\Lambda_i\}_{i=0}^1$.
\end{thm}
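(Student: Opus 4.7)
The plan is to reduce both sides of the identity to expressions involving the base norm $\|\cdot\|_\Omega$ and then match them via the primal--dual structure of the standard robustness.

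First I would rewrite the gain in closed form. For fixed channels $\Lambda_0, \Lambda_1$, the maximization over binary measurements $\{M_0,M_1\}$ can be done explicitly as in the Holevo--Helstrom identity already recorded in Section~\ref{sec:prelim}: using $M_0+M_1 = U$ together with the normalization identity $\langle U, \Lambda_0(\omega) - \Lambda_1(\omega)\rangle = 0$ (both outputs are states), one obtains $p_{\mathrm{gain}}(\{1/2,\Lambda_i\},\omega) = \tfrac{1}{4}\|\Lambda_0(\omega)-\Lambda_1(\omega)\|_\Omega$. The target identity thus reduces to
\begin{equation*}
\max_{\Lambda_0,\Lambda_1 \in \mathcal{T}(\mathcal{V},\mathcal{V}')} \frac{\|\Lambda_0(\omega)-\Lambda_1(\omega)\|_\Omega}{\max_{\sigma\in\mathcal{F}}\|\Lambda_0(\sigma)-\Lambda_1(\sigma)\|_\Omega} = 1 + 2R_\mathcal{F}^\mathcal{F}(\omega).
\end{equation*}

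For the upper bound I would invoke the attained primal decomposition $\omega = (1+r)\sigma - r\pi$ with $\sigma,\pi\in\mathcal{F}$ and $r = R_\mathcal{F}^\mathcal{F}(\omega)$, which exists under the standing assumption that $\mathcal{F}$ spans $\mathcal{V}$. Linearity of channels and the triangle inequality/homogeneity of $\|\cdot\|_\Omega$ then yield
\begin{equation*}
\|\Lambda_0(\omega)-\Lambda_1(\omega)\|_\Omega \leq (1+r)\|\Lambda_0(\sigma)-\Lambda_1(\sigma)\|_\Omega + r\|\Lambda_0(\pi)-\Lambda_1(\pi)\|_\Omega \leq (1+2r)\max_{\sigma'\in\mathcal{F}}\|\Lambda_0(\sigma')-\Lambda_1(\sigma')\|_\Omega,
\end{equation*}
which after dividing gives the ``$\leq$'' direction.

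For the opposite direction I would construct explicit measure-and-prepare channels --- admissible by the standing assumption of the section --- from the dual solution in \eqref{eq:dual obj standard}. Let $X$ be an optimal dual variable, so $0\le \langle X,\sigma\rangle \le 1$ for all $\sigma\in \mathcal{F}$ and $\langle X,\omega\rangle = 1+R_\mathcal{F}^\mathcal{F}(\omega)$. Since $X$ is not in general an effect, I would shift and rescale it to $E \coloneqq \alpha X + \tfrac{1-\alpha}{2}U$ with $\alpha>0$ small enough that $0 \cleq_{\mathcal{C}^*} E \cleq_{\mathcal{C}^*} U$; compactness of $\Omega(\mathcal{V})$ in finite dimensions makes $\langle X,\cdot\rangle$ bounded on states and guarantees such an $\alpha$ exists. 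Picking two distinct output states $\rho_a, \rho_b \in \Omega(\mathcal{V}')$ (guaranteed to exist by assumption), I would define $\Lambda_0(\cdot) = \langle E,\cdot\rangle\rho_a + \langle U-E,\cdot\rangle\rho_b$ and $\Lambda_1(\cdot) = \langle E,\cdot\rangle\rho_b + \langle U-E,\cdot\rangle\rho_a$. A direct calculation gives $\Lambda_0(x) - \Lambda_1(x) = (2\langle E,x\rangle - 1)(\rho_a-\rho_b)$ for every state $x$, and since $2E-U = \alpha(2X-U)$ the common factor $\alpha\|\rho_a-\rho_b\|_\Omega$ cancels between numerator and denominator, reducing the ratio to $|\langle 2X-U,\omega\rangle|/\max_{\sigma\in\mathcal{F}}|\langle 2X-U,\sigma\rangle|$, whose numerator evaluates to $2(1+R_\mathcal{F}^\mathcal{F}(\omega))-1 = 1+2R_\mathcal{F}^\mathcal{F}(\omega)$.

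The main obstacle I anticipate is verifying that the denominator above is exactly $1$ --- equivalently, that $\langle X,\sigma\rangle$ saturates both $0$ and $1$ on $\mathcal{F}$. I would settle this by a short complementary-slackness-type argument: substituting the attained primal decomposition $\omega = (1+r)\sigma^\star - r\pi^\star$ into $\langle X,\omega\rangle = 1+r$ and combining with the box constraints $0 \leq \langle X,\sigma^\star\rangle, \langle X,\pi^\star\rangle \leq 1$ forces $\langle X,\sigma^\star\rangle = 1$ and $\langle X,\pi^\star\rangle = 0$ whenever $r>0$; the trivial case $r=0$ gives $1=1$ directly. Both primal attainment and strong duality of the robustness program follow from the assumption that $\mathcal{F}$ spans $\mathcal{V}$ stated just before the theorem.
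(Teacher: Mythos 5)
Your proof is correct and follows essentially the same route as the paper's: the same primal decomposition $\omega = (1+r)\sigma - r\pi$ for the upper bound (your base-norm triangle inequality after the Helstrom reduction is equivalent to the paper's measurement relabeling $M_i \mapsto U - M_i$), and the same measure-and-prepare construction from the dual optimizer for the lower bound, since your effect $E = \alpha X + \tfrac{1-\alpha}{2}U = \tfrac12\left(U + \alpha X'\right)$ with $\alpha = 1/\norm{X'}^\circ_\Omega$ coincides with the paper's choice. Your anticipated obstacle is not actually one: the bound $\max_{\sigma\in\mF}\abs{\<X',\sigma\>} \leq 1$ already gives the required inequality $\geq 1+2R^\mF_\mF(\omega)$, and equality of the two sides then follows from the upper-bound direction, so the complementary-slackness argument is unnecessary (though not wrong).
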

\begin{proof}
 By the definition of the standard robustness, there exist $\tilde{\tau},\tilde{\sigma}\in\mF$ such that $\omega=\left(1+R_{\mF}^\mF(\omega)\right)\tilde{\tau}-R_{\mF}^\mF(\omega)\,\tilde{\sigma}$.
 For any ensemble of channels $\{1/2, \Lambda_i\}_{i=0}^1$, let $\{\tilde{M}_i\}_{i=0}^1$ be a measurement maximizing the quantity $p_{\rm succ}(\{1/2,\Lambda_i\}_{i=0}^1,\{M_i\}_{i=0}^1,\omega)$. Then,
 \ba
  p_{\rm gain}(\{1/2,\Lambda_i\}_{i=0}^1,\omega) &=&\frac{1}{2}\sum_{i=0}^1\< \tilde{M}_i, \Lambda_i(\omega) \> - \frac12\\
  &=& \frac{1}{2}\left(1+R_{\mF}^\mF(\omega)\right)\sum_{i=0}^1 \< \tilde{M}_i, \Lambda_i(\tilde{\tau}) \> - \frac{1}{2}R_{\mF}^\mF(\omega)\sum_{i=0}^1 \<\tilde{M}_i, \Lambda_i(\tilde{\sigma}) \> - \frac12\\
  &=& \left(1+R_{\mF}^\mF(\omega)\right)\left(\frac{1}{2}\sum_{i=0}^1 \< \tilde{M}_i, \Lambda_i(\tilde{\tau}) \> -\frac12\right) - R_{\mF}^\mF(\omega)\left(\frac{1}{2}\sum_{i=0}^1 \<\tilde{M}_i, \Lambda_i(\tilde{\sigma}) \>-\frac12\right)\\
  &\leq& \left(1+R_{\mF}^\mF(\omega)\right)\max_{\sigma\in\mF}\max_{\{M_i\}}\left(\frac{1}{2}\sum_{i=0}^1 \<M_i, \Lambda_i(\sigma) \>-\frac12\right) \\
  &&\ \  +R_{\mF}^\mF(\omega)\,\max_{\sigma\in\mF}\max_{\{M_i\}}\left(\frac{1}{2}-\frac12\sum_{i=0}^1 \<M_i, \Lambda_i(\sigma) \>\right). \label{eq:standard upper}
 \ea
Let us change the variable in the second term by $M_0=U-M_0'$ and $M_1=U-M_1'$ where $0 \cleq_{\mC\*} M_0',M_1'\cleq_{\mC\*} U$, $M_0'+M_1'= U$.
Then, the second term becomes 
\ba
 \frac{1}{2}-\frac12\sum_{i=0}^1 \< M_i, \Lambda_i(\sigma) \> = \frac{1}{2}-\frac12\sum_{i=0}^1 \< U-M_i', \Lambda_i(\sigma) \> = \frac{1}{2}\sum_{i=0}^1 \< M_i', \Lambda_i(\sigma) \>-\frac12,
\ea
and the maximum in \eqref{eq:standard upper} is taken with respect to the new variables $\{M_i'\}$.
Hence, we get for any $\Lambda_0,\Lambda_1$,
\ba 
 p_{\rm gain}(\{1/2,\Lambda_i\}_{i=0}^1,\omega)&\leq& \left(1+2R_{\mF}^\mF(\omega)\right)\max_{\sigma\in\mF}\max_{\{M_i\}}\left(\frac{1}{2}\sum_{i=0}^1 \< M_i, \Lambda_i(\sigma) \>-\frac12\right) \\
 &=&\left(1+2R_{\mF}^\mF(\omega)\right)\max_{\sigma\in\mF}p_{\rm gain}(\{1/2,\Lambda_i\}_{i=0}^1,\sigma).
\ea

To show the opposite inequality, with the change of variables $X' \coloneqq 2 X - U$ we rewrite the optimization problem \eqref{eq:dual obj standard} as
\begin{equation}\begin{aligned}\label{eq:st_robustness_alternative}
  1 + 2 R_{\mF}^\mF(\omega) = \max \lset \< X', \omega\> \sbar - 1 \leq  \< X', \sigma \> \leq 1\ \forall \sigma\in \mF \rset
\end{aligned}\end{equation}
for any state $\omega$. 
Let $X'$ be an optimal solution in the above and $\eta_0, \eta_1 \in \Omega(\mV')$ be any two distinct states, and consider the linear maps defined by
\ba
\Lambda_0 (\omega) \coloneqq& \frac12 \< U + \frac{X'}{\norm{X'}_\Omega^\circ}, \omega \> \eta_0 + \frac12 \< U - \frac{X'}{\norm{X'}_\Omega^\circ}, \omega \> \eta_1, \label{eq:channel_1}\\
\Lambda_1 (\omega) \coloneqq& \frac12 \< U - \frac{X'}{\norm{X'}_\Omega^\circ}, \omega \> \eta_0 + \frac12 \< U + \frac{X'}{\norm{X'}_\Omega^\circ}, \omega \> \eta_1. \label{eq:channel_2}
\ea
Now, since
\ba
  \< U \pm \frac{X'}{\norm{X'}^\circ_\Omega}, \rho \> = 1 \pm \frac{\< X', \rho \>}{\max_{\pi \in \Omega(\mV)} \abs{\< X', \pi \>}} \in [0,2],
\ea
for any $\rho \in \Omega(\mV)$, the set $\left\{\frac12 \left(U + \frac{X'}{\norm{X'}^\circ_\Omega}\right), \frac12 \left(U - \frac{X'}{\norm{X'}_\Omega^\circ}\right)\right\}$ constitutes a valid measurement, and so $\Lambda_0$ and $\Lambda_1$ are measure-and-prepare channels. Hence, $\Lambda_0, \Lambda_1 \in \mT(\mV,\mV')$ by our assumption about the set of allowed transformations.
We then get
\begin{equation}\begin{aligned}
  \max_{\{M_i\}}\,p_{\rm succ}(\{1/2,\Lambda_i\}_{i=0}^1,\{M_i\}_{i=0}^1,\omega) - \frac12 &= \frac14 \|\Lambda_0(\omega)-\Lambda_1(\omega) \|_\Omega \\
  &= \frac{1}{4} \norm*{\frac{\<X', \omega\>}{\norm{X'}^\circ_\Omega} (\eta_0 - \eta_1)}_\Omega\\
  &= \frac{1}{4 \norm{X'}^\circ_\Omega} \abs{\<X', \omega\>}\, \norm{\eta_0 - \eta_1}_\Omega
\end{aligned}\end{equation}
using the absolute homogeneity of the norm $\norm{\cdot}_\Omega$.
Hence, for this choice of $\Lambda_0,\Lambda_1$, we get
\ba
 \frac{p_{\rm gain}(\{1/2,\Lambda_i\}_{i=0}^1,\omega)}{\max_{\sigma\in\mF}p_{\rm gain}(\{1/2,\Lambda_i\}_{i=0}^1,\sigma)} &=& \frac{4\norm{X'}^\circ_{\Omega} \abs{\<X', \omega\>}\, \norm{\eta_0 - \eta_1}_\Omega}{4\norm{X'}^\circ_{\Omega}\, \max_{\sigma\in\mF}\abs{\<X', \sigma\>}\, \norm{\eta_0 - \eta_1}_\Omega}\\
 &=& \frac{\abs{\<X', \omega\>}}{\max_{\sigma\in\mF}\ \abs{\<X', \sigma\>}}\\
 &\geq& 1+2 R_{\mF}^\mF(\rho)
\ea
where the inequality is due to the fact that $\abs{\<X', \sigma\>} \leq 1$ by \eqref{eq:st_robustness_alternative}.
\end{proof}

Although the use of general measurements can in general provide a significant advantage over discrimination with restricted sets of measurements \cite{matthews_2009-1,lami_2017}, one can notice by following the proof of Thm. \ref{thm:st_rob} that, in the context of quantifying the advantage provided by a resource state over all free states, the restriction of measurements is inconsequential --- the standard robustness still acts as the exact quantifier of the advantage in binary channel discrimination, regardless of how the allowed measurements are restricted. Specifically, we have the following.

\begin{cor}\label{cor:st_rob any meas}
For any informationally complete closed set of measurements $\mathcal{M}_\mathcal{F}$, it holds that
\ba
 \max_{\Lambda_0,\Lambda_1 \in \mT(\mV,\mV')}\frac{\max_{\{M_i\} \in \mM_\mF} \, p_{\rm gain}(\{1/2,\Lambda_i\}_{i=0}^1, \{M_i\}_{i=0}^1, \omega)}{\max_{\{M_i\} \in \mM_\mF} \, \max_{\sigma\in\mF} \,p_{\rm gain}(\{1/2,\Lambda_i\}_{i=0}^1,\{M_i\}_{i=0}^1, \sigma)} = 1+2R_{\mF}^\mF(\omega).
\ea
where the maximization is over all possible ensembles $\{1/2,\Lambda_i\}_{i=0}^1$.
\end{cor}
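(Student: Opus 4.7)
The plan is to adapt the proof of Theorem \ref{thm:st_rob} with minor modifications that account for the restriction of measurements to $\mathcal{M}_\mathcal{F}$. I would first address the upper bound, i.e.\ that the left-hand side is at most $1 + 2 R_\mF^\mF(\omega)$. This direction of the proof of Theorem \ref{thm:st_rob} never invoked any special property of the optimizing measurement beyond the fact that, in the second term, replacing $M_i$ with $U - M_i$ is still a legitimate measurement. Since $\{M_0, M_1\} \mapsto \{M_1, M_0\}$ merely relabels outcomes, it keeps the measurement inside $\mM_\mF$. Consequently, the same chain of inequalities used to bound $p_{\rm gain}$ applies verbatim with each maximum over measurements restricted to $\mM_\mF$, yielding the desired upper bound.

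For the lower bound, I would reuse the pair of measure-and-prepare channels $\Lambda_0, \Lambda_1$ built from an optimal dual variable $X'$ in \eqref{eq:st_robustness_alternative} and two distinct output states $\eta_0, \eta_1 \in \Omega(\mV')$, exactly as in the proof of Theorem \ref{thm:st_rob}; these remain in $\mT(\mV,\mV')$ by the measure-and-prepare assumption. The key step is then to evaluate $\max_{\{M_i\} \in \mM_\mF} p_{\rm gain}$ for this pair of channels using the distinguishability-norm formula recalled in the Preliminaries:
\begin{equation}
\max_{\{M_i\} \in \mM_\mF} p_{\rm gain}(\{1/2,\Lambda_i\}_{i=0}^1,\{M_i\}_{i=0}^1,\rho) = \tfrac14 \norm{\Lambda_0(\rho) - \Lambda_1(\rho)}_{\mM_\mF}.
\end{equation}
Because $\Lambda_0(\rho) - \Lambda_1(\rho) = \tfrac{\<X',\rho\>}{\norm{X'}_\Omega^\circ}(\eta_0 - \eta_1)$ depends on $\rho$ only through the scalar $\<X',\rho\>$, absolute homogeneity of $\norm{\cdot}_{\mM_\mF}$ factorizes the gain as $\tfrac{1}{4\norm{X'}_\Omega^\circ}\abs{\<X',\rho\>}\norm{\eta_0 - \eta_1}_{\mM_\mF}$ both for $\rho = \omega$ and for $\rho = \sigma \in \mF$.

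The main obstacle is precisely to ensure that the common factor $\norm{\eta_0 - \eta_1}_{\mM_\mF}$ is nonzero so that it legitimately cancels in the ratio; this is exactly where informational completeness enters. Since the effects in $\mM_\mF$ span $\mV'\*$, the functional $\norm{\cdot}_{\mM_\mF}$ is a genuine norm rather than a seminorm, so $\eta_0 \neq \eta_1$ implies $\norm{\eta_0 - \eta_1}_{\mM_\mF} > 0$. After cancellation, the ratio reduces to $\abs{\<X',\omega\>}/\max_{\sigma \in \mF}\abs{\<X',\sigma\>}$, which by the dual constraints \eqref{eq:st_robustness_alternative} on $X'$ is at least $1 + 2R_\mF^\mF(\omega)$, giving the matching lower bound and completing the argument.
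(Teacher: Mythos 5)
Your proposal is correct and follows essentially the same route as the paper, whose proof of this corollary is simply the observation that the argument of Theorem \ref{thm:st_rob} goes through verbatim once the base norm $\norm{\cdot}_\Omega$ is replaced by the distinguishability norm $\norm{\cdot}_{\mM_\mF}$. Your additional remarks --- that the outcome relabeling in the upper bound stays within $\mM_\mF$, and that informational completeness makes $\norm{\cdot}_{\mM_\mF}$ a genuine norm so that $\norm{\eta_0-\eta_1}_{\mM_\mF}>0$ cancels in the ratio --- are exactly the details the paper leaves implicit.
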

\begin{proof}
This follows exactly in the same way as the proof of Thm.\,\ref{thm:st_rob} by replacing the norm $\norm{\cdot}_{\Omega}$ with the distinguishability norm $\norm{\cdot}_{\mM_\mF}$.
\end{proof}

Another fact which can be noticed from the proof of Thm.\,\ref{thm:st_rob} in Eq.\,\eqref{eq:st_robustness_alternative} is that, for any state $\omega$, the quantity $1 + 2 R_{\mF}^\mF(\omega)$ in fact corresponds to the base norm induced by the set $\mF$ \cite{reeb_2011,regula_2018}; specifically,
\begin{equation}\begin{aligned}
  1 + 2R_{\mF}^\mF(\omega) = \norm{\omega}_{\mF} = \min \lset \lambda_+ + \lambda_- \sbar \omega = \lambda_+ \sigma_+ - \lambda_- \sigma_-,\;\lambda_\pm \in \mathbb{R}_+,\;\sigma_{\pm} \in \mF \rset.
\end{aligned}\end{equation}

Although we assumed that the standard robustness takes finite value in the course of the above argument, Thm.\,\ref{thm:st_rob} and Cor.\,\ref{cor:st_rob any meas} successfully capture the cases where the standard robustness diverges for resource theories of states such as the theory of coherence and asymmetry --- the results then imply that one can always find two channels for which no free state can enable one to perform the task better than the random guess in such theories. 

We further remark an interesting connection between our results and the result in Ref.\,\cite{howard_2017} where it was found that for the case of theory of magic, the same quantity $1 + 2 R_{\mF}^\mF$ quantifies an upper bound for the overhead of classical simulation of quantum Clifford circuits with magic-state inputs. (Note that the name ``robustness of magic'' was used in Ref.\,\cite{howard_2017} to refer to the quantity $1 + 2 R_{\mF}^\mF$ instead of $R_{\mF}^\mF$.)  
It would be an interesting problem to investigate whether this is merely a coincidence, or whether the two seemingly very different tasks, channel discrimination and classical simulation of quantum circuits, are related at a deeper level through the standard robustness measure. 


\section{Complete sets of monotones}\label{sec:complete_monotones}


It is not difficult to see that resource monotones in any resource theory provide nontrivial necessary conditions on the manipulation of resources with free operations because of their monotonicity properties: 
namely, if one object contains a larger amount of resources than another with respect to any monotone, then the transformation from the less resourceful to the more resourceful one with any free operation is prohibited.   
However, a single monotone fails in most cases to completely characterize the resource transformations; it is necessary that the amount of resource measured by the monotone does not increase under free operations, but it is usually not \textit{sufficient} to ensure the existence of a free operation realizing that transformation.
Finding the necessary and sufficient conditions for the existence of a transformation by means of free operations for a given input and output object is one of the most important questions to address in resource theories, as it underlies the operational capabilities of a given resource theory.

We call a (possibly infinite) family of monotones a \textit{complete set of monotones} if it fully characterizes the necessary and sufficient conditions for the existence of a free transformation. Such sets of monotones were discussed in several specific settings~\cite{nielsen_1999,Buscemi2005clean,Matsumoto2010randomization,Buscemi2012comparison,buscemi_2012-1,horodecki_2013,Jencova2016comparison,buscemi_2016,buscemi_2017,Buscemi2017comparison,Gour2017,rosset_2018,gour_2018-2,skrzypczyk2018robustness}, but no set of general conditions with a clear operational meaning was previously known to provide a comprehensive characterization of transformations in general resource theories.

We will now show that performance of a state or measurement in a class of channel or state discrimination tasks precisely serves as a complete set of monotones for general resource theories defined in any GPT.  
This, together with the results obtained above, completes the operational characterization of quantification and exact state transformations in general resource theories in terms of discrimination tasks.


\subsection{Complete sets of monotones for states}\label{sec:complete_monotones_states}


Let $\mO \subseteq \mT(\mV,\mV)$ be a convex and closed set of transformations which contains the identity map $x \mapsto x$ and is closed under concatenation in the sense that $\Lambda_1, \Lambda_2 \in \mO$ means that $\Lambda_2 \circ \Lambda_1 \in \mO$. This set of assumptions is particularly natural if $\mO$ is identified with a set of free operations in a convex resource theory, where the action of the identity channel trivially cannot generate any resource and neither should the application of two free channels; however, our results will be completely general, and we will not need to explicitly assume any relation with resource theories.

Consider now the problem of channel discrimination with a possible inconclusive measurement outcome of channels from the set $\mO$, that is, ensembles of the form $\{p_i, \Lambda_i\}$ with each $\Lambda_i \in \mO$ and $p_i$ being the corresponding probabilities. We will first show that the performance of two given states $\omega,\omega'$ optimized over \textit{all} such discrimination problems leads to a complete set of monotones under the operations $\mO$, although as we will see later, this condition can be significantly simplified.

To begin, let us consider a fixed $(N+1)$-outcome measurement $\mbM = \{M_i\}_{i=0}^{N}$ and state $\omega$, for which the best achievable probability of success in all free channel discrimination problems is
\ba
 \tilde{p}'_{\rm succ}(\mbM,\omega) \coloneqq \max_{\substack{\{p_i, \Lambda_i\}_{i=0}^{N-1}\\\Lambda_i \in \mO}} \sum_{i=0}^{N-1} p_i \<M_i, \Lambda_i(\omega) \>.
\ea
We will additionally consider the case where the probability distribution $\{p_i\}$ is fixed a priori, and the optimization is only over the channels $\{\Lambda_i\}$ themselves; specifically, in this case we have
\ba
 \tilde{p}'_{\rm succ}\left(\mbM,\omega,\{p_i\}_{i=0}^{N-1}\right) \coloneqq \max_{\substack{\{\Lambda_i\}_{i=0}^{N-1}\\\Lambda_i \in \mO}} \sum_{i=0}^{N-1} p_i \<M_i, \Lambda_i(\omega) \>.
\ea
We now have the following result. The proof is based on an approach similar to the one taken to characterize measurement informativeness in Ref.\,\cite{skrzypczyk2018robustness} (see also Ref.\,\cite{buscemi_2016}).

\begin{thm}\label{thm:state_monotones1}
 There exists $\Lambda\in \mO$ such that $\omega'=\Lambda(\omega)$ if and only if either of the following conditions is satisfied:
 \begin{enumerate}[(i)]
 \item it holds that $\tilde{p}'_{\rm succ}(\mbM,\omega)\geq \tilde{p}'_{\rm succ}(\mbM,\omega')$ for any measurement $\mbM \in \mM$,
 \item for a fixed probability distribution $\{p_i\}_{i=0}^{N-1}$, it holds that $\tilde{p}'_{\rm succ}(\mbM,\omega,\{p_i\})\geq \tilde{p}'_{\rm succ}(\mbM,\omega',\{p_i\})$ for any measurement $\mbM = \{M_i\}_{i=0}^{N} \in \mM$. 
\end{enumerate}
\end{thm}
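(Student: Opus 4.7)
The forward ($\Rightarrow$) direction is immediate from the structural assumptions on $\mO$. If $\omega'=\Lambda(\omega)$ for some $\Lambda\in\mO$, then for any ensemble $\{p_i,\Lambda_i\}_{i=0}^{N-1}$ with $\Lambda_i\in\mO$, concatenation closure gives $\Lambda_i\circ\Lambda\in\mO$. Therefore $\sum_i p_i\<M_i,\Lambda_i(\omega')\>=\sum_i p_i\<M_i,(\Lambda_i\circ\Lambda)(\omega)\>$ is a valid feasible value for the maximization defining $\tilde p'_{\rm succ}(\mbM,\omega,\{p_i\})$, so taking the supremum over $\{\Lambda_i\}\subseteq\mO$ on both sides yields both (i) and (ii). This argument is insensitive to whether the distribution is free or fixed.

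For the converse ($\Leftarrow$), I would argue by contradiction using a hyperplane separation. Define the orbit
\[
  S(\omega):=\{\Lambda(\omega)\mid\Lambda\in\mO\}.
\]
Convexity of $\mO$ gives convexity of $S(\omega)$; closedness together with boundedness of $\mO$ (every channel is normalization-preserving and $\Omega(\mV)$ spans $\mV$, so $\mO$ is bounded in operator norm) makes $\mO$ compact in finite dimensions, hence $S(\omega)$ is compact as the continuous image of a compact set. If $\omega'\notin S(\omega)$, the Hahn--Banach separation theorem furnishes $Y\in\mV\*$ and a strict gap
\[
  \<Y,\omega'\>\;>\;\max_{\Lambda\in\mO}\<Y,\Lambda(\omega)\>.
\]

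The next step is to turn $Y$ into an effect. Since $\<U,\sigma\>=1$ for every state, replacing $Y$ by $Y+aU$ shifts all state expectation values by the same constant, while positive rescaling preserves strict inequalities between state evaluations. Choosing $a=-\min_{\sigma\in\Omega(\mV)}\<Y,\sigma\>$ and $b=\max_{\sigma\in\Omega(\mV)}\<Y+aU,\sigma\>$ (both attained by compactness of $\Omega(\mV)$), the element $M:=(Y+aU)/b$ satisfies $0\cleq_{\mC\*}M\cleq_{\mC\*}U$ and still
\[
  \<M,\omega'\>\;>\;\max_{\Lambda\in\mO}\<M,\Lambda(\omega)\>.
\]
To contradict (i) I use the two-outcome measurement $\mbM=\{M,U-M\}$ with $N=1$: since $\mathrm{id}\in\mO$, we get $\tilde p'_{\rm succ}(\mbM,\omega')\geq\<M,\omega'\>>\max_{\Lambda\in\mO}\<M,\Lambda(\omega)\>=\tilde p'_{\rm succ}(\mbM,\omega)$, violating (i). For (ii), pick any index $i^*$ with $p_{i^*}>0$ (one exists since the probabilities sum to one) and set $M_{i^*}=M$, $M_i=0$ for $i\in\{0,\dots,N-1\}\setminus\{i^*\}$, and $M_N=U-M$; then the maximizations defining $\tilde p'_{\rm succ}(\mbM,\cdot,\{p_i\})$ collapse to the single nonzero $i^*$ term, and the same strict inequality (multiplied by $p_{i^*}>0$) delivers the contradiction.

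The routine verifications are the forward direction and the algebraic shift-and-rescale of $Y$. The conceptual core is the separation step, and the only subtlety worth flagging is the compactness of $S(\omega)$, needed both to make the sup in the separation strictly attained and to justify that the shift constants $a,b$ are finite. The approach parallels the measurement-informativeness argument of Ref.\,\cite{skrzypczyk2018robustness}, adapted here so that (ii) shows the weaker condition ``fixed distribution, all measurements'' already suffices, precisely because a single nonzero $p_{i^*}$ can carry the entire separating effect.
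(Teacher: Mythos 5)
Your proof is correct, and the converse direction takes a genuinely different route from the paper's. The paper's proof first applies Sion's minimax theorem to the assumed inequalities to extract a single channel ensemble $\{p_i,\Lambda_i\}$ satisfying $\sum_i p_i\< M_i, \Lambda_i(\omega)-\omega'\>\geq 0$ for \emph{every} measurement, and then, assuming no free transformation exists, separates each $\Lambda_i(\omega)-\omega'$ from the cone $\mC$ to assemble a contradicting measurement. You instead separate the point $\omega'$ directly from the compact convex orbit $S(\omega)=\{\Lambda(\omega)\mid\Lambda\in\mO\}$ and renormalize the separating functional into a single effect, so no minimax argument is needed and a single two-outcome (respectively $(N+1)$-outcome) measurement witnesses the violation. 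Your route is more elementary and in effect proves Cor.~\ref{cor:monotones_unary} directly, with the theorem following from it; the paper's route has the advantage of being uniform with the minimax arguments reused for Thm.~\ref{thm:state_monotones2} and Thm.~\ref{thm:monotones_meas}. Two small points you should make explicit: the boundedness of $\mO$ (hence compactness of $S(\omega)$, needed for strong separation) rests on the compactness of $\Omega(\mV)$, which holds because $\Omega(\mV)$ is the base of the pointed closed cone $\mC$ cut by the order unit $U\cge_{\mC\*}0$; and the normalization constant $b$ is strictly positive, since $b=0$ would force $Y+aU=0$, i.e.\ $Y\propto U$, which is constant on states and contradicts the strict separation. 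Neither is a gap, just a line each worth adding.
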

\begin{proof}
The proof proceeds in exactly the same way for both of the conditions. In the following, we will consider the case where the probability distribution $\{p_i\}$ can vary (case (i)), but it can equivalently be taken to be fixed (case (ii)) for the remainder of the proof.

Suppose first that $\omega'=\tilde{\Lambda}(\omega)$ with $\tilde{\Lambda} \in \mO$. Then, for any $\mbM = \{M_i\}$, we have
 \ba
  \tilde{p}'_{\rm succ}(\{M_i\},\omega')&=& \max_{\{p_i, \Lambda_i\},\,\Lambda_i \in \mO} \sum_i p_i \< M_i, \Lambda_i(\omega') \>\\
  &=& \max_{\{p_i, \Lambda_i\},\, \Lambda_i \in \mO}\sum_i p_i \<M_i, \Lambda_i \circ \tilde{\Lambda} (\omega) \>\\
  &\leq& \max_{\{p_i, \Lambda'_i\},\, \Lambda'_i \in \mO} \sum_i p_i \< M_i, \Lambda'_i (\omega) \> = \tilde{p}'_{\rm succ}(\{M_i\},\omega)
 \ea
 where the inequality is due to the closedness of $\mO$ under concatenation. 

 To show the converse, suppose that $\forall \{M_i\}$, we have $\tilde{p}'_{\rm succ}(\{M_i\},\omega)\geq \tilde{p}'_{\rm succ}(\{M_i\},\omega')$. This implies that
 \ba
  0&\leq&\inf_{\{M_i\} \in \mM} \left[\max_{\{p_i, \Lambda_i\},\, \Lambda_i \in \mO} \sum_i p_i \< M_i, \Lambda_i(\omega) \> - \max_{\{q_i, \Theta_i\},\, \Theta_i \in \mO} \sum_i q_i \<M_i, \Theta_i (\omega') \> \right]\\
  &\leq& \inf_{\{M_i\} \in \mM} \max_{\{p_i, \Lambda_i\},\, \Lambda_i \in \mO} \sum_i p_i\< M_i, \Lambda_i(\omega)-\omega' \>\\
  &\leq& \min_{\{M_i\}_{i=0}^{N} \in \mM} \max_{\substack{\{p_i, \Lambda_i\}_{i=0}^{N-1},\, \\ \Lambda_i \in \mO}} \sum_i p_i\< M_i, \Lambda_i(\omega)-\omega' \>\\
  &=&  \max_{\substack{\{p_i, \Lambda_i\}_{i=0}^{N-1},\, \\ \Lambda_i \in \mO}} \min_{\{M_i\}_{i=0}^{N} \in \mM} \sum_i p_i \< M_i, \Lambda_i(\omega)-\omega' \> \label{eq:complete monotone positive state}
 \ea
where the second inequality is obtained by setting each $q_i = p_i$ and each $\Theta_i$ to be the identity map in the second term, the third inequality is because we restricted the minimization over $N+1$-outcome measurements where $N\geq 2$ is an arbitrary integer, and the equality is due to Sion's minimax theorem~\cite{sion} by the convexity and compactness of $\mM$, $\mO$ and the set of ensembles $\{p_i, \Lambda_i\}_{i=0}^{N-1}$ (since $N$ is a finite integer), as well as the linearity of the objective function with respect to $\Lambda_i$ and $M_i$.

To show that there exists $\Lambda\in \mO$ such that $\Lambda(\omega) = \omega'$, suppose to the contrary that it does not hold true.  
Since any $\Lambda_i$ is a physical channel and thus normalization-preserving, we get that
\ba
\< U, \Lambda_i(\omega)-\omega' \> = 0 \; \forall i.
\ea
In particular, since $\Lambda_i(\omega)-\omega' \neq 0$ by assumption, we cannot have that $\Lambda_i(\omega)-\omega' \in \mC$ as this would necessarily imply that $\<U, \Lambda_i(\omega)-\omega' \> > 0$. Therefore, by the hyperplane separation theorem~\cite{rockafellar2015convex}, for every $\Lambda_i$ there exists an effect $E_i \in \mC\*$ such that $\< E_i, \Lambda_i(\omega) - \omega' \> < 0$. We now construct an incomplete measurement $\{M_i\}_{i=0}^{N-1}$ by
\begin{equation}\begin{aligned}
  M_i \coloneqq \frac{E_i}{\norm{\sum_j E_j}_{\Omega}^\circ}
\end{aligned}\end{equation}
such that $\sum_i M_i \cleq_{\mC\*} U$, and so $\{M_i\}_{i=0}^{N-1}$ can be completed to a measurement $\{M_i\}_{i=0}^{N}$ by appending another effect $M_{N} \coloneqq U - \sum_i M_i$ which will not affect the measurement outcomes corresponding to the ensemble $\{p_i, \Lambda_i\}_{i=0}^{N-1}$. We then have for this choice of $\{M_i\}$ that
\begin{equation}\begin{aligned}
  \sum_i p_i \< M_i, \Lambda_i(\omega)-\omega' \> = \sum_i \frac{p_i}{\norm{\sum_j E_j}_{\Omega}^\circ} \< E_i, \Lambda_i(\omega) - \omega' \> < 0.
\end{aligned}\end{equation}
Since such a measurement can be constructed for any ensemble $\{p_i, \Lambda_i\}$, this is in contradiction with \eqref{eq:complete monotone positive state}, which states that there exists a choice of an ensemble $\{p_i, \Lambda_i\}$ such that for every measurement $\{M_i\}$ we have $\sum_i p_i \< M_i, \Lambda_i(\omega)-\omega' \> \geq 0$. We conclude that our original assumption must have been wrong, and there exists $\Lambda\in \mO$ such that $\Lambda(\omega) = \omega'$.
\end{proof}

The result immediately establishes a general relation between channel discrimination problems and state transformations under any fixed set of operations $\mO$. Remarkably, the freedom in choosing the probability distribution $\{p_i\}$ allows us to significantly simplify this characterization, and show that much \textit{smaller} classes of channel discrimination problems already constitute complete sets of monotones.

In particular, by taking $\{p_i\}_{i=0}^1$ with $p_0 = p_1 = \frac12$, we reduce the problem to the much more straightforward task of binary channel discrimination.
\begin{cor}\label{cor:monotones_binary}
 There exists $\Lambda\in \mO$ such that $\omega'=\Lambda(\omega)$ if and only if for any three-outcome measurement $\mbM = \{M_i\}_{i=0}^2$ it holds that $\tilde{p}'_{\rm succ}\!\left(\mbM,\omega,\left\{\frac12, \frac12\right\}\right)\geq \tilde{p}'_{\rm succ}\!\left(\mbM,\omega',\left\{\frac12, \frac12\right\}\right)$.
\end{cor}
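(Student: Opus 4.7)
The plan is to derive this corollary as a direct specialization of Theorem~\ref{thm:state_monotones1}(ii), taking $N = 2$ and fixing the probability distribution to the uniform one $\{p_0, p_1\} = \{1/2, 1/2\}$. Since that theorem has just been proved for an arbitrary fixed distribution $\{p_i\}_{i=0}^{N-1}$ with $N \geq 2$ and all $(N+1)$-outcome measurements, the binary uniform case falls immediately within its scope; the proof is therefore a matter of unpacking the theorem in this special setting rather than developing new machinery.

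For the forward direction, I would argue from closure of $\mO$ under concatenation: if $\omega' = \Lambda(\omega)$ for some $\Lambda \in \mO$, then for any three-outcome measurement $\{M_i\}_{i=0}^{2}$ and any ensemble $\{1/2, \Lambda_i\}_{i=0}^{1}$ admissible in the definition of $\tilde{p}'_{\rm succ}(\mbM, \omega', \{1/2, 1/2\})$, the ensemble $\{1/2, \Lambda_i \circ \Lambda\}_{i=0}^{1}$ is also admissible for $\omega$ and attains exactly the same success probability. Taking the maximum over $\{\Lambda_i\}$ on the $\omega$ side thus yields the required inequality.

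For the reverse direction, I would invoke the converse argument from the proof of Theorem~\ref{thm:state_monotones1}(ii) verbatim: supposing that no $\Lambda \in \mO$ realizes $\omega \mapsto \omega'$, Sion's minimax theorem --- applicable because $\mO \times \mO$ and the set of three-outcome measurements are each convex and compact, and the objective is bilinear --- combined with hyperplane separation yields effects $E_0, E_1 \in \mC^{\*}$ with $\<E_i, \Lambda_i(\omega) - \omega'\> < 0$ for each witnessing pair $(\Lambda_0, \Lambda_1) \in \mO \times \mO$. Rescaling to $M_i = E_i/\norm{E_0 + E_1}^\circ_\Omega$ and appending $M_2 = U - M_0 - M_1$ produces a valid three-outcome measurement violating the hypothesis through the strict inequality $\tfrac12 \<M_0, \Lambda_0(\omega) - \omega'\> + \tfrac12 \<M_1, \Lambda_1(\omega) - \omega'\> < 0$.

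The only (minor) obstacle is verifying that every ingredient of the theorem's converse specializes cleanly when $N = 2$, i.e.\ that the minimax step and the construction of separating effects do not lose anything in the binary regime. This is essentially immediate since $N = 2$ is already the minimal case covered by the theorem's proof, and fixing $\{p_i\} = \{1/2, 1/2\}$ merely scales each summand by a common positive constant, which does not affect the sign of the key inequality used to derive the contradiction. No additional ideas are required beyond those already present in the proof of Theorem~\ref{thm:state_monotones1}.
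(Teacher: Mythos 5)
Your proposal is correct and matches the paper exactly: the paper presents Corollary~\ref{cor:monotones_binary} as an immediate specialization of Theorem~\ref{thm:state_monotones1}(ii) with $N=2$ and the fixed uniform distribution $p_0=p_1=\tfrac12$, giving no separate proof. Your verification that both directions of the theorem's argument (closure under concatenation for the forward direction; minimax plus hyperplane separation for the converse) go through unchanged in this special case is accurate and requires no new ideas.
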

\noindent This greatly reduces the difficulty of determining whether a free transformation between two given states exists.

Another particularly interesting case is obtained by considering the binary probability distribution defined as $p_0 = 1,\, p_1 = 0$. Although going beyond what one could consider a physically-motivated ``discrimination'' task, this allows us to obtain the following characterization of a complete set of monotones under the operations $\mO$.
\begin{cor}\label{cor:monotones_unary}
 There exists $\Lambda\in \mO$ such that $\omega'=\Lambda(\omega)$ if and only if for any $E \cgeq_{\mC\*} 0$ it holds that
 \begin{equation}\begin{aligned}\label{eq:unary_cond}
     \max_{\Lambda \in \mO} \< E, \Lambda(\omega) \> \geq \max_{\Lambda \in \mO} \< E, \Lambda(\omega') \>.
 \end{aligned}\end{equation}
\end{cor}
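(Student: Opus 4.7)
The forward direction is immediate from closure of $\mO$ under concatenation: if $\omega' = \Lambda_0(\omega)$ for some $\Lambda_0 \in \mO$, then for any $\Lambda' \in \mO$ we have $\Lambda' \circ \Lambda_0 \in \mO$, so $\max_{\Lambda \in \mO} \<E, \Lambda(\omega)\> \geq \<E, \Lambda' \circ \Lambda_0(\omega)\> = \<E, \Lambda'(\omega')\>$, and taking the supremum over $\Lambda' \in \mO$ yields the inequality.

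For the converse my plan is to mirror the minimax argument used in the proof of Theorem~\ref{thm:state_monotones1}, specialized to a single-channel ensemble with degenerate distribution $(p_0,p_1)=(1,0)$. By positive-homogeneity of the inequality, the hypothesis for every $E \cgeq_{\mC\*} 0$ is equivalent to the same inequality restricted to effects $0 \cleq_{\mC\*} E \cleq_{\mC\*} U$. Using $\mathrm{id} \in \mO$ to bound $\max_{\Lambda \in \mO} \<E, \Lambda(\omega')\>$ below by $\<E, \omega'\>$, the assumption becomes
$$0 \;\leq\; \min_{0\, \cleq_{\mC\*}\, E\, \cleq_{\mC\*}\, U}\; \max_{\Lambda \in \mO}\, \<E,\,\Lambda(\omega) - \omega'\>.$$
I would then apply Sion's minimax theorem --- justified by convexity and compactness of $\mO$ and of the effect order-interval, together with bilinearity of the objective --- to exchange min and max and obtain some $\Lambda^\star \in \mO$ satisfying $\min_{0\, \cleq_{\mC\*}\, E\, \cleq_{\mC\*}\, U} \<E,\, \Lambda^\star(\omega) - \omega'\> \geq 0$.

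The concluding step identifies this non-negativity with membership in the state cone. Since $E=0$ is admissible, the inner minimum is always at most $0$, and it equals $0$ exactly when $\Lambda^\star(\omega) - \omega' \cgeq_\mC 0$: otherwise, hyperplane separation would produce $E_0 \in \mC\*$ with $\<E_0,\, \Lambda^\star(\omega)-\omega'\> < 0$, which can be rescaled into $[0,U]$ since $U$ lies in the interior of $\mC\*$. Having $\Lambda^\star(\omega) - \omega' \in \mC$, normalization-preservation gives $\<U, \Lambda^\star(\omega) - \omega'\> = 1 - 1 = 0$, and since the only element of $\mC=\cone(\Omega(\mV))$ annihilated by $U$ is $0$, we conclude $\Lambda^\star(\omega) = \omega'$. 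The hardest part I anticipate is verifying the hypotheses of Sion's theorem --- in particular that $\mO$ is compact (which follows from its closedness inside the bounded set of normalization- and state-cone-preserving maps) --- together with the rescaling needed to pass between arbitrary $E \in \mC\*$ and normalized effects.
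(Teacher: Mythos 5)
Your proof is correct and follows essentially the same route as the paper: the paper's own proof is simply a two-line reduction to Theorem~\ref{thm:state_monotones1}(ii) with the fixed distribution $\{1,0\}$ (noting that the admissible $M_0$ form exactly the order interval $0 \cleq_{\mC\*} M_0 \cleq_{\mC\*} U$ and then rescaling $E \mapsto E/\norm{E}^\circ_\Omega$), whereas you inline the underlying minimax-plus-separation argument of that theorem specialized to the unary case. All the steps you flag as delicate --- Sion's hypotheses, the rescaling of an arbitrary $E \in \mC\*$ into $[0,U]$, and the final use of $\<U,\cdot\>=0$ on $\mC$ to force $\Lambda^\star(\omega)=\omega'$ --- go through exactly as you describe.
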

\begin{proof}Follows from Thm.~\ref{thm:state_monotones1} by noting that for any measurement $\mbM$ we have
\begin{equation}\begin{aligned}
    \tilde{p}'_{\rm succ}(\mbM,\omega,\{1,0\}) = \max_{\{\Lambda_i\}_{i=0}^1 \subset \mO} \< M_0, \Lambda_0(\omega) \>
\end{aligned}\end{equation}
so the conditions of the Theorem reduce to verifying whether the inequality $\tilde{p}'_{\rm succ}(\mbM,\omega,\{1,0\}) \geq \tilde{p}'_{\rm succ}(\mbM,\omega',\{1,0\})$ is satisfied for any operator $M_0 \in \mC\*$ such that there exists a valid measurement $\{M_i\}_{i=0}^{2}$ --- this can be easily verified to be precisely the set $0 \cleq_{\mC\*} M_0 \cleq_{\mC\*} U$. Without loss of generality, we can then relax the constraint $M_0 \cleq_{\mC\*} U$ as any $E \in \mC\* \!\setminus\! \{0\}$ can be renormalized as $M_0 = E/\norm{E}^\circ_\Omega$.
\end{proof}
\noindent We can further make an observation that in Cor.~\ref{cor:monotones_unary} it suffices to optimize over effects $E$ which are normalized in a suitable manner --- specifically, it suffices to verify whether Eq.~\eqref{eq:unary_cond} holds for any operator $E$ in a chosen base of the cone $\mC\*$. Within quantum mechanics, or indeed in any GPT where $\mC \cong \mC\*$, this means in particular that there exists $\Lambda\in \mO$ such that $\omega'=\Lambda(\omega)$ if and only if
 \begin{equation}\begin{aligned}\label{eq:unary_monotones_states}
     \max_{\Lambda \in \mO} \< \sigma, \Lambda(\omega) \> \geq \max_{\Lambda \in \mO} \< \sigma, \Lambda(\omega') \>
 \end{aligned}\end{equation}
 holds for any \textit{state} $\sigma$. This recovers a result of Ref.\,\cite{Gour_monotones,girard_2017} obtained with different methods in the context of quantum resource theories. We note also that the class of monotones in Eq.~\eqref{eq:unary_cond} has previously been considered in the resource theory of quantum coherence \cite{tan_2018}, albeit without an operational application to state transformations under the free operations.

We further note that in Thm.~\ref{thm:state_monotones1} and Cor.~\ref{cor:monotones_binary}, one could instead consider the tasks of subchannel discrimination from a chosen set $\mO$ of normalization non-increasing maps. The proofs proceed analogously.

Alternatively, we can establish a complete set of monotones by considering a modification of the task: we will now consider channel discrimination (without inconclusive outcomes) over all valid choices of channel ensembles, but allow for the application of a single chosen prior transformation from the set $\mO$ to the ensemble before applying the channels to be discriminated. The success probability for this task for a choice of channel ensemble $\{p_i,\Lambda_i\}$ and measurement $\{M_i\}$ is given by
\ba
 \tilde{p}_{\rm succ}(\{p_i\}, \{\Lambda_i\},\{M_i\},\omega) \coloneqq \max_{\Xi \in \mO} \sum_i p_i\< M_i, \Lambda_i\circ \Xi(\omega) \>.
\ea
Note that we have separated the probability distribution $\{p_i\}$ from the corresponding channels $\{\Lambda_i\}$, for reasons which will become clear shortly. We will now show that this success probability serves as a complete set of monotones for state transformations under the operations $\mO$ in two different ways.

\begin{thm}\label{thm:state_monotones2}
 There exists $\Lambda\in \mO$ such that $\omega'=\Lambda(\omega)$ if and only if either of the following conditions is satisfied:
 \begin{enumerate}[(i)]
 \item it holds that $\tilde{p}_{\rm succ}(\{p_i\},\{\Lambda_i\},\{M_i\},\omega)\geq \tilde{p}_{\rm succ}(\{p_i\},\{\Lambda_i\},\{M_i\},\omega')$ for all channel ensembles $\{p_i,\Lambda_i\}$ and measurements $\{M_i\}$,
 \item for a fixed set of channels $\{\Lambda_i\}_{i=0}^{N-1}$ containing the identity channel $\mathrm{id}$, it holds that $\tilde{p}_{\rm succ}(\{p_i\},\{\Lambda_i\},\{M_i\},\omega)\geq \tilde{p}_{\rm succ}(\{p_i\},\{\Lambda_i\},\{M_i\},\omega')$ for all probability distributions $\{p_i\}_{i=0}^{N-1}$ and measurements $\{M_i\}_{i=0}^{N-1}$.
\end{enumerate}
\end{thm}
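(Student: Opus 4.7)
The plan is to reduce both conditions to the ``unary'' complete set of monotones established in Cor.~\ref{cor:monotones_unary}, which characterizes the existence of a free transformation $\omega' = \Lambda(\omega)$ via the family of functionals $\omega \mapsto \max_{\Xi \in \mO} \<E, \Xi(\omega)\>$ indexed by effects $E \cgeq_{\mC\*} 0$. The necessity direction is a one-line consequence of closure of $\mO$ under concatenation: if $\omega' = \Lambda(\omega)$ with $\Lambda \in \mO$, then $\Xi \circ \Lambda \in \mO$ for every $\Xi \in \mO$, so $\sum_i p_i \<M_i, \Lambda_i \circ \Xi(\omega')\> = \sum_i p_i \<M_i, \Lambda_i \circ (\Xi \circ \Lambda)(\omega)\> \leq \tilde{p}_{\rm succ}(\{p_i\},\{\Lambda_i\},\{M_i\},\omega)$, and maximizing over $\Xi$ on the left establishes both (i) and (ii) simultaneously.

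For the sufficiency direction in case (i), I would exhibit a specific channel ensemble and measurement whose associated inequality coincides with the effect-based condition of Cor.~\ref{cor:monotones_unary}. Given any effect $E \in \mC\* \setminus \{0\}$, take $N = 2$ with $p_0 = 1$, $p_1 = 0$, $\Lambda_0 = \mathrm{id}$ (and $\Lambda_1$ any channel), together with the two-outcome measurement $M_0 \coloneqq E/\norm{E}_\Omega^\circ$ and $M_1 \coloneqq U - M_0$. By definition of the order-unit norm $M_0 \cleq_{\mC\*} U$, so $M_1 \cgeq_{\mC\*} 0$ and both are legitimate effects summing to $U$. With this data, $\tilde{p}_{\rm succ}(\{p_i\},\{\Lambda_i\},\{M_i\},\cdot) = \max_{\Xi \in \mO} \<M_0, \Xi(\cdot)\>$, and multiplying both sides of the assumed inequality by the positive constant $\norm{E}_\Omega^\circ$ yields exactly $\max_{\Xi \in \mO} \<E, \Xi(\omega)\> \geq \max_{\Xi \in \mO} \<E, \Xi(\omega')\>$. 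Since $E$ was arbitrary, Cor.~\ref{cor:monotones_unary} produces the required $\Lambda \in \mO$.

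Case (ii) is handled in the same spirit, relying on the assumption that the fixed set $\{\Lambda_i\}_{i=0}^{N-1}$ contains the identity. Without loss of generality taking $\Lambda_0 = \mathrm{id}$, I would set $p_0 = 1$ and $p_i = 0$ for $i \geq 1$, and for a given $E \in \mC\* \setminus \{0\}$ use $M_0 \coloneqq E/\norm{E}_\Omega^\circ$, $M_{N-1} \coloneqq U - M_0$, and $M_i \coloneqq 0$ for $0 < i < N-1$. This is again a valid measurement summing to $U$, and the success probability collapses to the same quantity $\max_{\Xi \in \mO} \<M_0, \Xi(\cdot)\>$ as before, so the reduction to Cor.~\ref{cor:monotones_unary} proceeds verbatim.

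I do not anticipate a substantive obstacle. The only delicate points are verifying that the constructed effects form a valid measurement (handled by the normalization via the order-unit norm and completion by $U - M_0$) and observing that the freedom to vary $\{p_i\}$ and $\{M_i\}$ alone suffices in case (ii) precisely because the identity is guaranteed to belong to the fixed channel tuple. The conceptual point worth emphasizing is that case (ii) is a genuine strengthening of (i): a single reference tuple of channels containing $\mathrm{id}$ already provides enough degrees of freedom in the prior and measurement to witness every failure of $\omega' \in \mO(\omega)$, so the operational fingerprint of free state manipulation in a general resource theory can be packaged very compactly.
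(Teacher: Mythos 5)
Your proof is correct, but the sufficiency direction takes a genuinely different route from the paper. The paper re-runs the full machinery of Theorem~\ref{thm:state_monotones1}: it applies Sion's minimax theorem to swap the infimum over ensembles and measurements with the maximum over $\Xi \in \mO$, and then uses a hyperplane-separation argument (via the normalization identity $\< U, \Xi(\omega)-\omega' \> = 0$) to build a contradicting measurement for the specific ensemble $p_0=1$, $\Lambda_0=\mathrm{id}$. You instead bypass all of this by observing that the single instance $p_0 = 1$, $\Lambda_0 = \mathrm{id}$, $M_0 = E/\norm{E}^\circ_\Omega$ collapses $\tilde{p}_{\rm succ}$ exactly to the unary monotone $\max_{\Xi\in\mO}\<E,\Xi(\cdot)\>$ of Cor.~\ref{cor:monotones_unary}, so the hypothesis of either (i) or (ii) directly implies the already-established complete set of conditions. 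This is shorter and makes transparent why the identity must belong to the fixed channel tuple in case (ii); the trade-off is that your argument is logically parasitic on Theorem~\ref{thm:state_monotones1}/Cor.~\ref{cor:monotones_unary}, whereas the paper's proof is self-contained (which matters only for exposition, since those results precede this one). Both proofs implicitly require $N\geq 2$ in case (ii) — for $N=1$ the condition is vacuous — and both handle the necessity direction identically via closure under concatenation. No gap.
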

Note that one could also consider subchannel discrimination in (i) instead of channel discrimination. The proof proceeds in a similar way to Thm. \ref{thm:state_monotones1} and we include it in the Appendix \ref{app:complete_monotones} for completeness.

An interesting application of Thm.\,\ref{thm:state_monotones2} is obtained by choosing the set of channels $\{\Lambda_i\}_{i=0}^1$ where $\Lambda_0 = \mathrm{id}$ and $\Lambda_1 = \Theta$ is a fixed transformation. The Theorem then gives the following.
\begin{cor}\label{cor:noise_monotones}
There exists $\Lambda\in \mO$ such that $\omega'=\Lambda(\omega)$ if and only if for all two-element probability distributions $\{p_i\}$ and two-outcome measurements $\{M_i\}$ it holds that
\begin{equation}\begin{aligned}
     \tilde{p}_{\rm succ}(\{p_i\}, \{\mathrm{id},\Theta\},\{M_i\},\omega) \geq \tilde{p}_{\rm succ}(\{p_i\}, \{\mathrm{id},\Theta\},\{M_i\},\omega')
 \end{aligned}\end{equation}
 for some a priori fixed choice of $\Theta \in \mT(\mV,\mV)$.
\end{cor}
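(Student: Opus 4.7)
The plan is to obtain this corollary as an immediate specialization of Theorem~\ref{thm:state_monotones2}(ii). Specifically, I would take $N=2$ and fix the channel set $\{\Lambda_0, \Lambda_1\} = \{\mathrm{id}, \Theta\}$, which contains the identity as required by the hypothesis of the theorem. With this choice, the statement of the corollary --- that the inequality $\tilde p_{\rm succ}(\{p_i\}, \{\mathrm{id}, \Theta\}, \{M_i\}, \omega) \geq \tilde p_{\rm succ}(\{p_i\}, \{\mathrm{id}, \Theta\}, \{M_i\}, \omega')$ holds for all two-element probability distributions $\{p_i\}$ and all two-outcome measurements $\{M_i\}$ --- is literally the hypothesis of Theorem~\ref{thm:state_monotones2}(ii) instantiated on this channel set, and the theorem then yields a $\Lambda \in \mO$ with $\omega' = \Lambda(\omega)$.

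For the converse direction, I would argue directly: assume $\omega' = \Lambda(\omega)$ for some $\Lambda \in \mO$, fix any $\{p_i\}$ and $\{M_i\}$, and let $\Xi^\star \in \mO$ attain the maximum in the definition of $\tilde p_{\rm succ}(\{p_i\}, \{\mathrm{id}, \Theta\}, \{M_i\}, \omega')$. The composition $\Xi^\star \circ \Lambda$ also lies in $\mO$ by the assumed closure of $\mO$ under concatenation, and substituting it as a trial choice of $\Xi$ in the maximization evaluated at $\omega$ yields $\tilde p_{\rm succ}(\{p_i\}, \{\mathrm{id}, \Theta\}, \{M_i\}, \omega) \geq \tilde p_{\rm succ}(\{p_i\}, \{\mathrm{id}, \Theta\}, \{M_i\}, \omega')$ for every $\{p_i\}$ and $\{M_i\}$.

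Because the corollary is a formal consequence of Theorem~\ref{thm:state_monotones2}(ii), no real obstacle arises in its proof: the only observation required is that a two-element channel set containing the identity already falls within the scope of the theorem, and that its second element $\Theta$ can be chosen a priori as any fixed physical transformation. All of the analytical machinery --- the application of Sion's minimax theorem and the hyperplane-separation construction of a distinguishing measurement --- is already carried out in the proof of Theorem~\ref{thm:state_monotones2}, which is where any real difficulty lies; the corollary simply extracts the minimal case exhibiting the characterization.
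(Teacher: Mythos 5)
Your proposal is correct and matches the paper's own (implicit) proof: the paper obtains this corollary exactly as you do, by instantiating Theorem~\ref{thm:state_monotones2}(ii) with $N=2$ and the fixed channel set $\{\Lambda_0,\Lambda_1\}=\{\mathrm{id},\Theta\}$, which contains the identity as required. Your separate direct verification of the ``only if'' direction via $\Xi^\star\circ\Lambda\in\mO$ is redundant (it is already contained in the theorem) but is carried out correctly, with the composition in the right order.
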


\noindent One can interpret this scenario as detecting the noise introduced by $\Theta$ with the help of prior processing with the operations $\mO$. Remarkably, Cor.~\ref{cor:noise_monotones} then shows that a single noise model completely determines the capability of the input state $\omega$ as a noise-detecting resource aided by the operations $\mO$ --- if $\omega$ is better than $\omega'$ at detecting \textit{some} type of noise $\Theta$ for any noise strength and detection strategy, $\omega$ is more capable than $\omega'$ in detecting \textit{any other} noise introduced by a physical transformation. 
This tells us for instance that, using a standard quantum mechanical example~\cite{nielsen_2011}, only considering the family of depolarizing noise models is sufficient to assess the usefulness of a given state for all possible pre-processing assisted noise detection tasks considered here. It is notable that this non-trivial fact can be shown via the seemingly unrelated problem of resource manipulation thanks to Thm.\,\ref{thm:state_monotones2}.

We will return to the problem of state transformations in the next section, where we will consider transformations of state ensembles instead of single states.


\subsection{Complete set of monotones for measurements}\label{sec:complete_monotones_meas}


A familiar picture of transforming resources involves channels applied to states, transforming one state to another. 
However, any meaningful information processing task includes a measurement at the end, so it is reasonable to consider states, channels, and measurements as parts of a single, consolidated family.
It is therefore insightful to understand channels from an alternative dual perspective: namely, not as operations transforming states, but as operations transforming measurements.
Motivated by this observation, we extend the above consideration on the transformation of states to the transformation of measurements. We show that a similar reasoning allows for an operational characterization of measurement transformations in the context of state discrimination.  

Let $\mO_{\mE}$ be a convex and closed set of effect cone-preserving unital maps $\mV\* \to \mV\*$ which furthermore includes the identity map $E \mapsto E$ and
is closed under concatenation, i.e. if $\Gamma_1,\Gamma_2 \in \mO_{\mE}$ then $\Gamma_2 \circ \Gamma_1 \in \mO_{\mE}$. These assumptions are again particularly natural in the context of a resource theory, but this is not assumed.

We will now consider a variant of state discrimination where
instead of immediately making a measurement to discriminate the state ensemble, we apply a prior transformation to the measurement effects. We shall see that by restricting a set of allowed prior operations to the chosen set $\mO_\mE$, the success probability of this task serves as a complete set of monotones for the measurements.

Recalling that any effect cone-preserving and unital operation has a corresponding dual operation which is normalization- and state cone-preserving, this can be equivalently understood as a task of state discrimination where the transformations from the set $\lset \Lambda \sbar \Lambda\* \in \mO_{\mE} \rset$ are applied to the states. The success probability of the measurement $\mbM = \{M_i\}_i$ in distinguishing the ensemble $\mA = \{p_i, \sigma_i\}$ in this setting is then
\ba
 \tilde{p}_{\rm succ}(\mA,\mbM) = \max_{\Lambda\* \in \mO_{\mE}}\sum_i p_i \<M_i, \Lambda(\sigma_i) \>.
\ea
We now show that this success probability serves as a set of complete monotones for measurements under the free operations $\mO_{\mE}$.

\begin{thm}\label{thm:monotones_meas}
 Given measurements $\mbM = \{M_i\}_{i=0}^{N}$ and $\mbM' = \{M'_i\}_{i=0}^{N}$, there exists $\Gamma\in \mO_{\mE}$ such that $\mbM'=\Gamma(\mbM)$ if and only if for all ensembles $\mA = \{p_i, \sigma_i\}_{i=0}^{N}$ it holds that $\tilde{p}_{\rm succ}(\mA,\mbM)\geq \tilde{p}_{\rm succ}(\mA,\mbM')$.
\end{thm}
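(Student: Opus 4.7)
}
The plan is to mirror the structure of the proof of Thm.~\ref{thm:state_monotones1}, but taking advantage of the state/measurement duality \eqref{eq:map_duality}. For the easy (``only if'') direction, I would assume $\mbM'=\Gamma(\mbM)$ with $\Gamma\in\mO_{\mE}$, and for an arbitrary $\Lambda$ whose dual $\Lambda^{\*}\in\mO_{\mE}$, rewrite
\[
  \<\Gamma(M_i),\Lambda(\sigma_i)\>=\<M_i,(\Gamma^{\*}\!\circ\Lambda)(\sigma_i)\>.
\]
The dual of $\Gamma^{\*}\!\circ\Lambda$ equals $\Lambda^{\*}\!\circ\Gamma$, which belongs to $\mO_{\mE}$ by closure under concatenation. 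Hence every ``allowed'' transformation of $\mbM'$ appears as an allowed transformation of $\mbM$, giving $\tilde p_{\rm succ}(\mA,\mbM')\leq\tilde p_{\rm succ}(\mA,\mbM)$ for all ensembles.

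For the nontrivial direction, I would assume the inequality for all ensembles and derive, by bounding $\tilde p_{\rm succ}(\mA,\mbM')$ from below using the identity map $\mathrm{id}\in\mO_{\mE}$,
\[
  0\;\leq\;\inf_{\mA}\max_{\Lambda^{\*}\in\mO_{\mE}}\sum_i p_i\,\<\Lambda^{\*}(M_i)-M'_i,\,\sigma_i\>.
\]
After reparametrising by $\tau_i=p_i\sigma_i\in\mC$ with $\sum_i\<U,\tau_i\>=1$, the objective is bilinear in $(\tau_i)$ and $\Lambda^{\*}$. The feasible set of $(\tau_i)$ is convex and compact, and $\mO_{\mE}$ is compact (closed and bounded, since it maps the compact set of effects into itself). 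Sion's minimax theorem therefore applies and lets me swap $\inf$ and $\max$, yielding some optimal $\Lambda^{\*}_0\in\mO_{\mE}$ with
\[
  \inf_{\mA}\sum_i p_i\,\<F_i,\sigma_i\>\geq 0,\qquad F_i:=\Lambda^{\*}_0(M_i)-M'_i.
\]

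The final step is to argue that $F_i=0$ for every $i$. Specializing to point-mass ensembles $p_j=1$ with arbitrary $\sigma_j\in\Omega(\mV)$ forces $\<F_j,\sigma_j\>\geq 0$, hence $F_j\cgeq_{\mC^{\*}}0$ for all $j$. Unitality of $\Lambda^{\*}_0$ gives $\sum_i F_i=\Lambda^{\*}_0(U)-U=0$, so $-F_j=\sum_{i\neq j}F_i\cgeq_{\mC^{\*}}0$ as well. Since $\mC$ is pointed and generating, its dual $\mC^{\*}$ is also pointed, and therefore $F_j=0$ for every $j$. Setting $\Gamma:=\Lambda^{\*}_0\in\mO_{\mE}$ gives $\Gamma(\mbM)=\mbM'$, completing the proof.

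The main obstacle I anticipate is to make sure the use of Sion's theorem is fully rigorous --- in particular verifying compactness of both $\mO_{\mE}$ and the (reparametrised) ensemble set, and convexity/concavity of the objective --- together with a clean argument that $\mC^{\*}$ inherits pointedness from the generating property of $\mC$. The rest is bookkeeping with the duality relation and the closure-under-concatenation assumption on $\mO_{\mE}$.
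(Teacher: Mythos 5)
Your proposal is correct and follows essentially the same route as the paper's proof: the easy direction via closure under concatenation and the duality relation, and the hard direction via Sion's minimax followed by testing the optimal $\Lambda^{\*}_0$ on point-mass ensembles to force $\Lambda^{\*}_0(M_j)-M'_j\cgeq_{\mC\*}0$, which combined with unitality ($\sum_j F_j=0$) and pointedness of $\mC\*$ yields $F_j=0$. The only cosmetic difference is that you phrase the last step directly while the paper argues by contradiction, and your reparametrisation $\tau_i=p_i\sigma_i$ makes the convexity needed for Sion's theorem slightly more explicit.
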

\begin{proof}
 For one direction, suppose $\mbM'=\Gamma(\mbM)$. Then for any $\mA=\{p_i,\sigma_i\}$, 
 \ba
  \tilde{p}_{\rm succ}(\mA,\mbM')&=& \max_{\Lambda\* \in \mO_{\mE}}\sum_a p_a \< \Lambda\* \circ \Gamma(M_a), \sigma_a \>\\
  &\leq& \max_{\tilde{\Lambda}\* \in \mO_{\mE}}\sum_a p_a \< M_a, \tilde{\Lambda}(\sigma_a) \> = \tilde{p}_{\rm succ}(\mA,\mbM)
 \ea
 where the inequality is due to the the closedness of $\mO_{\mE}$ under concatenation.
 To show the converse, suppose $\forall \mA,\,\tilde{p}_{\rm succ}(\mA,\mbM)\geq \tilde{p}_{\rm succ}(\mA,\mbM')$.
 It implies that
 \ba
  0&\leq&\min_\mA \left[\max_{\Lambda\* \in \mO_{\mE}}\sum_a p_a \< M_a, \Lambda(\sigma_a) \> - \max_{\Delta\* \in \mO_{\mE}} \sum_a p_a \< M_a', \Delta(\sigma_a) \>\right]\\
  &\leq& \min_\mA \max_{\Lambda\* \in \mO_{\mE}}\sum_a p_a \< \Lambda\* (M_a) - M_a', \sigma_a \> \\
  &=& \max_{\Lambda\* \in \mO_{\mE}} \min_\mA \sum_a p_a \< \Lambda\* (M_a)-M_a', \sigma_a \> \label{eq:complete monotone positive}
 \ea
where the second inequality is obtained by setting $\Delta\*$ being identity for the second term, and the equality is due to Sion's minimax theorem~\cite{sion} because of the compactness and convexity of $\mA,\,\mO_{\mE}$ and the linearity of the objective function with respect to $\Lambda$ and $p_a\sigma_a$.

We will now show that there exists $\Lambda\*\in \mO_{\mE}$ such that $\Lambda\* (M_a)-M_a' = 0\ \forall a$, thus concluding the proof. To this end, suppose to the contrary that such an operation does not exist. Since $\Lambda\*$ is unital by assumption, we get for any $\Lambda$,
\ba
\sum_a \Lambda\* (M_a)-M_a' &=& \Lambda\*(U)-U = 0.
\ea
In particular, it holds that
\begin{equation}\begin{aligned}
  \< \sum_a \Lambda\* (M_a)-M_a', \omega \> = 0 \;\, \forall \omega \in \mC
\end{aligned}\end{equation}
which implies that we cannot have $\Lambda\* (M_a)-M_a' \in \mC\*$ for all $a$, as this would necessarily mean that $\Lambda\* (M_a)-M_a'$ are identically zero. Therefore, for any choice of $\Lambda\*$ there exists an index $a^\star$ and a state $\sigma \in \Omega(\mV)$ such that
\begin{equation}\begin{aligned}
  \< \Lambda\* (M_{a^\star}) - M'_{a^\star}, \sigma \> < 0.
\end{aligned}\end{equation}
Choosing the ensemble $\{p_a, \sigma_a\}$ such that $p_a = 0$ if $a \neq {a^\star}$ and $p_{{a^\star}}=1, \sigma_{{a^\star}} = \sigma$ then gives
\begin{equation}\begin{aligned}
  \sum_a p_a \< \Lambda\* (M_a)-M_a', \sigma_a \> < 0.
\end{aligned}\end{equation}
We have therefore reached a contradiction, as \eqref{eq:complete monotone positive} says that there exists a choice of $\Lambda\*$ such that any ensemble $\{p_a,\sigma_a\}$ gives $\sum_a p_a \langle \Lambda\* (M_a)-M_a', \sigma_a \rangle \geq 0$. We conclude that our original assumption must have been wrong, and therefore there exists $\Gamma = \Lambda\* \in \mO_{\mE}$ such that $M_a'=\Gamma(M_a)\  \forall a$.
\end{proof}

By using the aforementioned dual interpretation of this task as operations applied to states rather than measurements, we can additionally obtain a complete set of monotones for the transformations between state \textit{ensembles}. Such tasks have been considered in quantum information theory in different contexts~\cite{alberti_1980,chefles_2004,reeb_2011,buscemi_2012-1,heinosaari_2015,buscemi_2017}, and indeed find use in several resource theories which employ generalizations of majorization~\cite{nielsen_1999,horodecki_2013,gour_2018-2}. To this end, we will consider two different types of tasks: one, the conclusive state discrimination just as above, with probability of success given by
\ba
 \tilde{p}_{\rm succ}(\{p_a,\sigma_a\},\{M_a\}) = \max_{\Lambda \in \mO} \sum_a p_a \<M_a, \Lambda(\sigma_a) \>
\ea
with $\mO$ being a set of operations defined as in Sec.~\ref{sec:complete_monotones_states},
and two, the inconclusive state discrimination task characterized by
\ba
 \tilde{p}'_{\rm succ}(\{p_a,\sigma_a\}_{a=0}^{N-1},\{M_a\}_{a=0}^N) = \max_{\Lambda \in \mO} \sum_{a=0}^{N-1} p_a \<M_a, \Lambda(\sigma_a) \>.
\ea
The following is then a straightforward adaptation of the concepts of Thm.~\ref{thm:monotones_meas}.
\begin{cor}\label{cor:ensembles}
Let $\{\sigma_i\}_{i=0}^{N-1}$ and $\{\sigma'_i\}_{i=0}^{N-1}$ be two collections of states. Then, there exists $\Lambda \in \mO$ such that $\sigma'_i = \Lambda(\sigma_i) \; \forall i$ if and only if either of the following conditions is satisfied:
 \begin{enumerate}[(i)]
 \item $N\geq 2$ and for all $N$-outcome measurements $\mbM$ and all probability distributions $\{p_i\}_{i=0}^{N-1}$ it holds that $\tilde{p}_{\rm succ}(\{p_i, \sigma_i\},\mbM)\geq \tilde{p}_{\rm succ}(\{p_i,\sigma'_i\},\mbM)$, 
 \item $N \geq 1$ and for all $(N\!+\!1)$-outcome measurements $\mbM=\{M_i\}_{i=0}^N$ it holds that $\tilde{p}'_{\rm succ}(\{p_i, \sigma_i\},\mbM)\geq \tilde{p}'_{\rm succ}(\{p_i,\sigma'_i\},\mbM)$, where $\{p_i\}_{i=0}^{N-1}$ is any fixed probability distribution such that $p_i > 0 \; \forall i$, which in particular can be taken to be $p_i = \frac{1}{N}$.
 \end{enumerate}
\end{cor}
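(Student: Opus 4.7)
The proof will be a direct adaptation of the template used in Theorem~\ref{thm:monotones_meas} (and in cases (i)--(ii) of Theorem~\ref{thm:state_monotones1}), with the roles of states and measurements swapped so that the prior transformation $\Lambda \in \mO$ acts on the ensemble rather than on the measurement. The forward (``only if'') direction is the routine monotonicity argument: given $\sigma'_i = \tilde\Lambda(\sigma_i)$ with $\tilde\Lambda \in \mO$, one has, for any $\Lambda\in\mO$, probability distribution $\{p_i\}$ and measurement $\mbM$,
\[
  \sum_i p_i \<M_i,\Lambda(\sigma'_i)\> \;=\; \sum_i p_i \<M_i,\Lambda\circ\tilde\Lambda(\sigma_i)\>,
\]
which is bounded above by $\tilde p_{\rm succ}(\{p_i,\sigma_i\},\mbM)$ thanks to the closure of $\mO$ under composition; the same applies to $\tilde p'_{\rm succ}$.

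For the reverse direction, I would start by inserting $\Lambda = \mathrm{id}$ on the $\sigma'$ side, which upper-bounds $\tilde p_{\rm succ}(\{p_i,\sigma'_i\},\mbM)$ by $\sum_i p_i\<M_i,\sigma'_i\>$, turning the hypothesis into
\[
  0 \;\le\; \min \max_{\Lambda\in\mO} \sum_i p_i\bigl\<M_i,\Lambda(\sigma_i)-\sigma'_i\bigr\>,
\]
where the outer minimum is over either $(\{p_i\},\mbM)$ in case (i) or over $\mbM$ alone (with $\{p_i\}$ fixed and strictly positive) in case (ii). I would then invoke Sion's minimax theorem --- the set $\mO$ is convex and compact (bounded normalization- and cone-preserving maps on a finite-dimensional space), the set of probability distributions and of measurements with a prescribed number of outcomes are convex and compact, and the objective is bilinear --- to swap min and max and obtain an explicit $\Lambda^\star \in \mO$ with
\[
  \sum_i p_i\bigl\<M_i,\Lambda^\star(\sigma_i)-\sigma'_i\bigr\>\ge 0 \quad\text{for every admissible } (\{p_i\},\mbM).
\]

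To finish, suppose for contradiction that no $\Lambda\in\mO$ satisfies $\Lambda(\sigma_i)=\sigma'_i$ for all $i$; in particular $\Lambda^\star$ fails, so there is an index $i^\star$ with $\Lambda^\star(\sigma_{i^\star})-\sigma'_{i^\star}\neq 0$. Because $\Lambda^\star$ and the states are normalization-preserving, $\<U,\Lambda^\star(\sigma_{i^\star})-\sigma'_{i^\star}\>=0$, so this nonzero vector cannot lie in $\mC$, and the hyperplane separation theorem yields an effect $E\in\mC\*$ with $\<E,\Lambda^\star(\sigma_{i^\star})-\sigma'_{i^\star}\><0$. The main technical step --- and the one I expect to require the most care --- is to package $E$ into a valid test $(\{p_i\},\mbM)$: in case (i) take $p_{i^\star}=1$ (other $p_i=0$) and an $N$-outcome measurement with $M_{i^\star}=E/\|E\|^{\circ}_{\Omega}$, $M_j=U-M_{i^\star}$ for one $j\neq i^\star$ (this uses $N\geq 2$), and remaining $M_i=0$; in case (ii) keep $\{p_i\}$ fixed with all $p_i>0$, take $M_{i^\star}=E/\|E\|^{\circ}_{\Omega}$, other $M_i=0$ for $i<N$, and $M_N=U-M_{i^\star}$ as the inconclusive outcome. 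Either construction gives $\sum_i p_i\<M_i,\Lambda^\star(\sigma_i)-\sigma'_i\><0$, contradicting the minimax inequality and forcing the existence of the desired $\Lambda\in\mO$.
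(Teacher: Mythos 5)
Your proposal is correct and follows essentially the same route as the paper's own proof: monotonicity via closure under composition for the ``only if'' direction, and for the converse the insertion of the identity map, Sion's minimax theorem, the observation that $\<U,\Lambda(\sigma_{i^\star})-\sigma'_{i^\star}\>=0$ forces the nonzero difference outside $\mC$, and the same two measurement constructions (using $N\geq 2$ in case (i) and the inconclusive outcome with $p_{i^\star}>0$ in case (ii)). The only quibble is a verbal slip: setting $\Lambda=\mathrm{id}$ \emph{lower}-bounds $\tilde p_{\rm succ}(\{p_i,\sigma'_i\},\mbM)$ by $\sum_i p_i\<M_i,\sigma'_i\>$ (hence upper-bounds its negative), though the displayed inequality you derive from it is the correct one.
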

We include the full proof in the Appendix \ref{app:complete_monotones} for completeness. We remark that for $N=1$, we recover Cor. \ref{cor:monotones_unary}.

\section{Conclusions}

We provided a general operational characterization of quantification and manipulation of resources, the two core concepts of resource theories, in terms of the performance of state and channel discrimination tasks.
The generality of our work is three-fold; our formulations encompass general convex resource theories associated with general types of resource objects (states, measurements, and channels), and major parts of the results are valid for general probabilistic theories beyond quantum mechanics.
In particular, we found that robustness measures play central roles in bridging the quantification of resources and the success probability in discrimination tasks, specifically establishing that the maximum advantages in classes of discrimination tasks realized by resource objects are exactly quantified by the corresponding robustness measures. 
We also characterized the manipulation of resources associated with states and measurements by considering families of discrimination tasks where their success probabilities serve as complete sets of monotones that fully characterize the transformation of resources under free operations. 
In the case of quantum mechanics, we further extended the above connections between discrimination tasks and resource-theoretic concepts to single-shot information theory.

In addition to providing fundamental insights spanning a broad class of physical theories, the results are immediately applicable to a wide range of physical resources in quantum information theory. The resource theories of coherence, entanglement, magic, athermality, asymmetry, as well as their generalizations in the form of multipartite entanglement or multi-level entanglement and coherence all fit the framework introduced herein and therefore all of our results apply to them immediately. In the case of measurements, many significant insights can be gained from studying classes of measurements such as separable, PPT, incoherent, or Pauli measurements, all of which are again special cases of the resource theories considered in this work. In the characterization of channels, we obtain results applicable on the one hand to sets of free operations in the aforementioned state-based resource theories, and on the other hand obtained an operational characterization of quantum channels which can be applied in the study of arbitrary channel-based resource theories such as the resource theory of quantum memories (non-entanglement-breaking channels).

Our results furthermore reveal interesting connections between the quantification of resources and the phenomenon of data hiding. Although data hiding has been mostly discussed in the context of the theory of entanglement, where one compares the capability of LOCC measurements (or other restricted sets such as separable or PPT measurements) to that of arbitrary measurements for state discrimination, one could consider more general data hiding procedures depending on the given physical setting. For instance, in the scenario where only one party has the ability to produce magic (so-called ``magic factory''), it is sensible to encode information in a way that Pauli measurements have less capability of decoding it than arbitrary measurements, with the data-hiding ratio characterized by the difference between the capabilities of these two sets of measurements in state discrimination.
One can further think of various other scenarios such as: one party having access to a restricted but larger set of measurements than the other (not necessarily arbitrary measurements); one party having more access to resource states or channels (not measurements) than the other, in which the data could be encoded in the form of channel discrimination tasks; or the encoded data requiring discrimination tasks more intricate than the standard binary discrimination.
Our general formulations encompass such variants, allowing for considerable flexibility of the encoding strategies.

The generality of the results also provides insights into the foundation of quantum mechanics. As we showed in this work, the operational advantage realized by any type of resource object is not unique to quantum mechanics but rather a universal phenomenon stemming solely from the convexity of the underlying cones, and thus shared by general GPTs. 
Our results in particular imply that there is no separation between the generalized robustness (a priori a geometric concept) and the advantage provided in the considered classes of discrimination tasks (explicitly operational tasks) in any GPT; therefore, one cannot hope to separate a given theory from quantum mechanics by finding a gap between these quantities. 
Our results additionally provide an experimentally accessible way of bounding the geometric resource measures as well as characterizing resource transformations in any GPT by relating them with discrimination tasks. 

An interesting problem we leave for future work is to give general operational meaning to standard robustness of measurements and channels, which would solidify the fundamental operational significance of the standard robustness measure alongside that of the generalized robustness as established in this work. 
In light of the series of results obtained in this work, it can be expected that discrimination tasks are also suitable for characterizing these measures at a high level of generality.  
Additionally, it remains to understand whether general quantitative relations between the generalized and standard robustness measures can be found, and whether there exists a way to generalize our results to more members of the robustness family besides the two we considered.
We also remark that this work raises an interesting question about a unified understanding of different operational tasks via robustness measures --- besides the discrimination tasks studied in this work, robustness measures have appeared in very different operational contexts, albeit mostly in a resource-specific fashion. 
One could then speculate that these operational settings may be deeply connected by more fundamental class of tasks whose performance is somehow characterized by the robustness measures. 

Finally, in addition to the transformation of states and measurements, one could  consider the transformation of channels realized by \textit{superchannels} \cite{Chiribella2008superchannel1,zyczkowski_2008,Chiribella2008superchannel2,jencova_2014,rosset_2018,gour_2018-1,Gour_monotones}. 
The full understanding of superchannels is still on the way, and we hope that our results will help propel this journey forward from the perspective of general resource theories. 

\section*{Note added}
During the completion of this manuscript, we became aware of the independent related works by R.~Uola et al.~\cite{uola2018quantifying} as well as by M.~Oszmaniec and T.~Biswas~\cite{Oszmaniec2019}, which obtained results similar to Theorem \ref{thm:gen_rob_meas_as_advantage} on the relation between state discrimination tasks and quantification of resources associated with measurements within quantum mechanics.

\begin{acknowledgements}
We thank Paul Skrzypczyk, Marco Piani, Ludovico Lami, and Francesco Buscemi for fruitful discussions. We are especially grateful to Shunlong Luo for the suggestion of extending the results beyond quantum mechanics. We also thank Mao Miyamoto for helpful comments on Fig.\,\ref{fig:concept}, Namit Anand for comments on the manuscript, and Seth Lloyd for making the publication of this work under open access possible. R.T. acknowledges the support of NSF, ARO, IARPA, and the Takenaka Scholarship Foundation. B.R. was supported by the National Research Foundation of Singapore Fellowship No. NRF-NRFF2016-02 and the National Research Foundation and L'Agence Nationale de la Recherche joint Project No. NRF2017-NRFANR004 VanQuTe.  
\end{acknowledgements}

\appendix
\section{Proofs of results in Sec. \ref{sec:complete_monotones}} \label{app:complete_monotones}

\begingroup
\renewcommand\thethm{\ref{thm:state_monotones2}}
\begin{thm}
 There exists $\Lambda\in \mO$ such that $\omega'=\Lambda(\omega)$ if and only if either of the following conditions is satisfied:
 \begin{enumerate}[(i)]
 \item it holds that $\tilde{p}_{\rm succ}(\{p_i\},\{\Lambda_i\},\{M_i\},\omega)\geq \tilde{p}_{\rm succ}(\{p_i\},\{\Lambda_i\},\{M_i\},\omega')$ for all channel ensembles $\{p_i,\Lambda_i\}$ and measurements $\{M_i\}$,
 \item for a fixed set of channels $\{\Lambda_i\}_{i=0}^{N-1}$ containing the identity channel $\mathrm{id}$, it holds that $\tilde{p}_{\rm succ}(\{p_i\},\{\Lambda_i\},\{M_i\},\omega)\geq \tilde{p}_{\rm succ}(\{p_i\},\{\Lambda_i\},\{M_i\},\omega')$ for all probability distributions $\{p_i\}_{i=0}^{N-1}$ and measurements $\{M_i\}_{i=0}^{N-1}$.
\end{enumerate}
\end{thm}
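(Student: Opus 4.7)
My plan is to mirror the structure of the proof of Thm.~\ref{thm:state_monotones1}: the forward implication (existence of a free $\Lambda$ giving $\omega' = \Lambda(\omega)$ implies both monotonicity conditions) is immediate from the closure of $\mO$ under concatenation, while the converse requires a Sion-type minimax exchange followed by a hyperplane-separation style conclusion using normalization-preservation. Observe first that condition (i) is strictly stronger than (ii): given (i), one may simply specialize the ensemble to any fixed collection $\{\Lambda_i\}$ containing $\mathrm{id}$. Hence it suffices to prove the forward direction for (i) (and thus (ii)), and the converse for (ii).

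For the forward direction, suppose $\omega' = \tilde\Lambda(\omega)$ for some $\tilde\Lambda \in \mO$. Then for any ensemble $\{p_i,\Lambda_i\}$ and any measurement $\{M_i\}$,
\begin{equation*}
\tilde p_{\rm succ}(\{p_i\},\{\Lambda_i\},\{M_i\},\omega')
= \max_{\Xi \in \mO} \sum_i p_i \< M_i, \Lambda_i \circ \Xi \circ \tilde\Lambda(\omega)\>
\leq \max_{\Xi' \in \mO} \sum_i p_i \< M_i, \Lambda_i \circ \Xi'(\omega)\>,
\end{equation*}
since $\Xi \circ \tilde\Lambda \in \mO$ by concatenation-closure, and the right-hand side equals $\tilde p_{\rm succ}(\{p_i\},\{\Lambda_i\},\{M_i\},\omega)$. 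This settles both (i) and (ii).

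For the converse of (ii), let $\Lambda_0 = \mathrm{id}$ (which is in the fixed set by hypothesis) and substitute $\Xi = \mathrm{id}$ on the right-hand side of the defining inequality; this yields
\begin{equation*}
\max_{\Xi \in \mO}\, \sum_i p_i \< \Lambda_i^*(M_i),\, \Xi(\omega) - \omega'\> \;\geq\; 0 \qquad \forall \{p_i\}, \{M_i\}.
\end{equation*}
The set $\mO$ is convex and compact, the set of pairs $(\{p_i\},\{M_i\})$ is convex and compact, and the objective function is bilinear; Sion's minimax theorem thus lets me exchange $\max$ and $\min$ and produces a single $\Xi^\star \in \mO$ with $\sum_i p_i \< \Lambda_i^*(M_i),\, \Xi^\star(\omega) - \omega'\> \geq 0$ for every admissible $\{p_i\}, \{M_i\}$.

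To conclude $\Xi^\star(\omega) = \omega'$, I would specialize $p_0 = 1$, $p_i = 0$ for $i \geq 1$ (with $\Lambda_0 = \mathrm{id}$), and any effect $M_0$ completed to a valid $N$-outcome measurement (e.g.\ $M_{N-1} = U - M_0$ with the remaining effects set to zero). This delivers $\< M_0, \Xi^\star(\omega) - \omega'\> \geq 0$ for every effect $M_0$ with $0 \cleq_{\mC\*} M_0 \cleq_{\mC\*} U$. Because both $\Xi^\star$ and $\mathrm{id}$ are normalization-preserving, $\< U, \Xi^\star(\omega) - \omega'\> = 0$, so replacing $M_0$ by $U - M_0$ flips the inequality, giving equality $\< M_0, \Xi^\star(\omega) - \omega'\> = 0$ for all such effects. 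Since these effects span $\mV^\*$, we conclude $\Xi^\star(\omega) = \omega'$, so $\Xi^\star \in \mO$ realizes the claimed transformation. The converse of (i) then follows immediately because (i) implies (ii). The main obstacle I anticipate is verifying the hypotheses of Sion's minimax cleanly, in particular arguing that the joint optimization domain over $(\{p_i\},\{M_i\})$ is a compact convex set in the relevant product topology so that the minimax exchange is legitimate; this is essentially the same technical step as in Thm.~\ref{thm:state_monotones1} and should pose no genuine difficulty.
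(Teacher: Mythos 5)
Your proposal is correct and takes essentially the same route as the paper's proof: concatenation-closure for the forward direction, and for the converse the substitution $\Xi=\mathrm{id}$, Sion's minimax theorem, and normalization-preservation plus an argument over effects to force $\Xi^\star(\omega)=\omega'$ (your direct complementation trick $M_0\mapsto U-M_0$ replaces the paper's proof-by-contradiction via a separating hyperplane, but the content is identical, as is your reduction of case (i) to case (ii)). The one caveat you flag is slightly misdiagnosed: the delicate hypothesis in Sion's theorem here is quasiconvexity of the bilinear objective in the joint minimization variable $(\{p_i\},\{M_i\})$, not compactness or convexity of the product domain --- a point the paper itself passes over by citing only separate linearity in $p_i\Lambda_i$ and $M_i$.
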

\begin{proof}
As before, in the proof we will optimize over all sets of channels $\{\Lambda_i\}$ (case (i)), but one could equivalently consider a fixed choice of $\{\Lambda_i\}$ setting $\Lambda_0=\mathrm{id}$ (case (ii)) and the proof proceeds the same.

 For the ``only if'' direction, suppose $\omega'=\Lambda(\omega)$. Then for any $\{p_i,\Lambda_i\}$ and $\{M_i\}$, 
 \ba
  \tilde{p}_{\rm succ}(\{p_i,\Lambda_i\},\{M_i\},\omega')&=& \max_{\Xi \in \mO}\sum_i p_i\< M_i, \Lambda_i\circ\Xi(\omega') \>\\
  &=& \max_{\Xi \in \mO}\sum_i p_i\< M_i, \Lambda_i\circ\Xi\circ\Lambda(\omega) \>\\
  &\leq& \max_{\Lambda' \in \mO}\sum_i p_i\< M_i, \Lambda_i\circ\Lambda'(\omega) \> = \tilde{p}_{\rm succ}(\{p_i,\Lambda_i\},\{M_i\},\omega)
 \ea
 where the inequality is due to the closedness of $\mO$ under concatenation. 

On the other hand, assuming $\tilde{p}_{\rm succ}(\{p_i,\Lambda_i\},\{M_i\},\omega)\geq \tilde{p}_{\rm succ}(\{p_i,\Lambda_i\},\{M_i\},\omega')$ holds for all $\{p_i,\Lambda_i\}$ and $\{M_i\}$ implies
 \ba
  0&\leq&\inf_{\{p_i,\Lambda_i\},\{M_i\}} \left[\max_{\Xi \in \mO}\sum_i p_i\< M_i, \Lambda_i\circ\Xi(\omega) \> - \max_{\Xi \in \mO}\sum_i p_i\< M_i, \Lambda_i\circ\Xi(\omega') \>\right]\\
  &\leq& \inf_{\{p_i,\Lambda_i\},\{M_i\}} \max_{\Xi \in \mO}\sum_i p_i\< M_i, \Lambda_i\left(\Xi(\omega)-\omega'\right) \>\\
  &\leq& \min_{\{p_i,\Lambda_i\}_{i=0}^{N-1},\{M_i\}_{i=0}^{N-1}} \max_{\Xi \in \mO}\sum_i p_i\< M_i, \Lambda_i\left(\Xi(\omega)-\omega'\right) \> \label{eq:complete monotone finite ensemble}\\
  &=&  \max_{\Xi \in \mO}\min_{\{p_i,\Lambda_i\}_{i=0}^{N-1},\{M_i\}_{i=0}^{N-1}}\sum_i p_i\< M_i, \Lambda_i\left(\Xi(\omega)-\omega'\right) \> \label{eq:complete monotone positive state2}
 \ea
where the second inequality is obtained by setting $\Xi$ as the identity for the second term, the third inequality is because we restricted the minimization over the $N$-element sets where $N\geq2$ is an arbitrary integer, and the equality is due to the minimax theorem because of the convexity and compactness of the sets of channels, measurements, and $\mO$ and the linearity of the objective function with respect to $\Xi$, $p_i\Lambda_i$, and $M_i$.
 
Suppose now that there does not exist $\Lambda \in \mO$ such that $\Lambda(\omega)=\omega'$. Since each $\Xi \in \mO$ preserves the normalization of states, we have $\< U, \Xi(\omega)-\omega' \> = 0$
which implies that $\Xi(\omega)-\omega' \notin \mC$ and so there exists an effect $E$ such that $\< E, \Xi(\omega)-\omega' \> < 0$. Take $N=2$ in \eqref{eq:complete monotone finite ensemble} and define the measurement $\{M_0, M_1 \} = \{E, U-E\}$ and the channel ensemble defined by $p_0=1$, $\Lambda_0=\mathrm{id}$ and $p_1=0$, $\Lambda_1=\Lambda'$ where $\mathrm{id}$ denotes the identity map and $\Lambda'$ is an arbitrary channel. We then get
\ba
\sum_i p_i\< M_i, \Lambda_i\left(\Xi(\omega)-\omega'\right) \>  = \< E, \Xi(\omega)-\omega' \> < 0
\ea
for any $\Xi$, which contradicts \eqref{eq:complete monotone positive state2} and so there exists $\Lambda\in \mO$ such that $\omega'=\Lambda(\omega)$.
\end{proof}
\endgroup

\begingroup
\renewcommand\thethm{\ref{cor:ensembles}}
\begin{cor}
Let $\{\sigma_i\}_{i=0}^{N-1}$ and $\{\sigma'_i\}_{i=0}^{N-1}$ be two collections of states. Then, there exists $\Lambda \in \mO$ such that $\sigma'_i = \Lambda(\sigma_i) \; \forall i$ if and only if either of the following conditions is satisfied:
 \begin{enumerate}[(i)]
 \item $N\geq 2$ and for all $N$-outcome measurements $\mbM$ and all probability distributions $\{p_i\}_{i=0}^{N-1}$ it holds that $\tilde{p}_{\rm succ}(\{p_i, \sigma_i\},\mbM)\geq \tilde{p}_{\rm succ}(\{p_i,\sigma'_i\},\mbM)$, 
 \item $N \geq 1$ and for all $(N\!+\!1)$-outcome measurements $\mbM=\{M_i\}_{i=0}^N$ it holds that $\tilde{p}'_{\rm succ}(\{p_i, \sigma_i\},\mbM)\geq \tilde{p}'_{\rm succ}(\{p_i,\sigma'_i\},\mbM)$, where $\{p_i\}_{i=0}^{N-1}$ is any fixed probability distribution such that $p_i > 0 \; \forall i$, which in particular can be taken to be $p_i = \frac{1}{N}$.
 \end{enumerate}
\end{cor}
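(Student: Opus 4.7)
My plan is to adapt the structure used in the proof of Theorem~\ref{thm:monotones_meas} (and the state-transformation Theorems~\ref{thm:state_monotones1}--\ref{thm:state_monotones2}) essentially verbatim, with the modifications needed to handle ensembles rather than a single state. The forward (``only if'') direction in both (i) and (ii) is the easy part: if $\sigma'_i = \Lambda(\sigma_i)$ with $\Lambda \in \mO$, then for any $\Xi \in \mO$ appearing inside the maximization defining $\tilde{p}_{\rm succ}(\{p_i,\sigma'_i\},\mbM)$ (or $\tilde{p}'_{\rm succ}$), the composition $\Xi \circ \Lambda$ lies in $\mO$ by closure under concatenation, so the optimum for the primed ensemble is bounded above by the optimum for the unprimed ensemble.

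For the converse in case (i), I would assume the inequality holds for every probability distribution $\{p_i\}_{i=0}^{N-1}$ and every $N$-outcome measurement $\mbM$, and rewrite
\begin{equation*}
0 \leq \min_{\{p_i\},\mbM} \Bigl[\max_{\Xi\in\mO} \sum_a p_a \<M_a,\Xi(\sigma_a)\> \;-\; \max_{\Xi'\in\mO}\sum_a p_a \<M_a,\Xi'(\sigma'_a)\>\Bigr]
\end{equation*}
by taking $\Xi' = \mathrm{id}$ in the second term and swapping min and max via Sion's theorem, exactly as in Eq.~\eqref{eq:complete monotone positive}. This yields an element $\Xi^\star \in \mO$ such that $\sum_a p_a \<M_a,\Xi^\star(\sigma_a)-\sigma'_a\> \geq 0$ for all admissible $\{p_i\},\mbM$. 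Arguing by contradiction, if $\Xi^\star(\sigma_{a^\star}) \neq \sigma'_{a^\star}$ for some index $a^\star$, then since $\Xi^\star$ is normalization-preserving we have $\<U,\Xi^\star(\sigma_{a^\star})-\sigma'_{a^\star}\>=0$, so this nonzero vector does not lie in $\mC$, and the hyperplane separation theorem provides an effect $E_{a^\star}\in\mC\*$ with $\<E_{a^\star},\Xi^\star(\sigma_{a^\star})-\sigma'_{a^\star}\> < 0$. I would then exhibit a concentrating choice: take $p_{a^\star}=1$ and $p_a=0$ otherwise, set $M_{a^\star}=E_{a^\star}/\|E_{a^\star}\|^\circ_\Omega$, $M_b = U - M_{a^\star}$ for one fixed $b\neq a^\star$, and $M_a = 0$ for the remaining indices; this is a valid $N$-outcome measurement (which requires $N\geq 2$) and forces the weighted sum to be strictly negative, yielding a contradiction.

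Case (ii) follows the same recipe but uses the slack provided by the $(N+1)$-th (inconclusive) outcome in place of the freedom over $\{p_i\}$. After the analogous minimax manipulation with $\Xi'=\mathrm{id}$, one obtains $\Xi^\star \in \mO$ such that $\sum_{a=0}^{N-1} p_a \<M_a,\Xi^\star(\sigma_a)-\sigma'_a\>\geq 0$ for every $(N+1)$-outcome $\mbM$. Given a hypothetical bad index $a^\star$ and separating effect $E_{a^\star}$ as above, I would define $M_{a^\star}=E_{a^\star}/\|E_{a^\star}\|^\circ_\Omega$, $M_a=0$ for $a\neq a^\star, N$, and absorb the remainder into $M_N = U - M_{a^\star}$. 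Because the inconclusive outcome is excluded from the sum, this measurement is valid for any $N\geq 1$, and the resulting value equals $p_{a^\star}\<M_{a^\star},\Xi^\star(\sigma_{a^\star})-\sigma'_{a^\star}\>$, which is strictly negative thanks to the standing assumption $p_{a^\star} > 0$. The contradiction closes the argument.

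The only real obstacle I anticipate is making sure the measurement construction stays within the admissible class in each regime, in particular justifying why case (i) genuinely needs $N\geq 2$ (without a second outcome to absorb $U-M_{a^\star}$ there is no $N$-outcome completion when one insists on testing a single index) while case (ii) reduces this hurdle to $N\geq 1$ precisely because the inconclusive slot plays the role of the ``balancing'' effect. The minimax step is routine given that $\mO$, the set of $N$-outcome (resp.\ $(N+1)$-outcome) measurements, and the probability simplex are all convex and compact, and the payoff is bilinear in $\Xi$ and in the pair $(p_a M_a)$; I would simply invoke Sion's theorem as in the preceding proofs.
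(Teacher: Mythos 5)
Your proposal is correct and follows essentially the same route as the paper's proof: the ``only if'' direction via closure of $\mO$ under concatenation, and the converse via setting the second optimizer to the identity, swapping min and max with Sion's theorem, using normalization-preservation plus hyperplane separation to extract a witnessing effect $E\in\mC\*$, and then the same concentrated choice of probability distribution and two-effect measurement (with the second effect $U-E/\norm{E}^\circ_\Omega$ absorbed into a spare outcome in case (i), requiring $N\geq 2$, or into the inconclusive outcome in case (ii), where $p_{a^\star}>0$ guarantees strictness). Your discussion of why the two cases need $N\geq 2$ versus $N\geq 1$ matches the paper's reasoning exactly.
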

\begin{proof}
The ``only if'' part follows analogously to Thm.~\ref{thm:state_monotones1}. For the other implication, assume the desired operation $\Lambda \in \mO$ does not exist, and consider the case (i) first. Notice that if $\tilde{p}_{\rm succ}(\{p_i, \sigma_i\},\mbM)\geq \tilde{p}_{\rm succ}(\{p_i,\sigma'_i\},\mbM)$ for all $N$-outcome measurements and all probability distributions, then
 \ba
  0&\leq&\min_{\{M_i\},\{p_i\}} \left[ \max_{\Lambda \in \mO} \sum_i p_i\< M_i, \Lambda(\sigma_i) \> - \max_{\Theta \in \mO} \sum_i p_i\< M_i, \Theta(\sigma'_i) \> \right]\\
  &\leq& \min_{\{M_i\},\{p_i\}} \max_{\Lambda \in \mO} \sum_i p_i\< M_i, \Lambda(\sigma_i)-\sigma'_i \>\\
  &=&  \max_{\Lambda \in \mO} \min_{\{M_i\},\{p_i\}} \sum_i p_i\< M_i, \Lambda(\sigma_i)-\sigma'_i \> \label{eq:complete monotone positive ensemble}
 \ea
by Sion's minimax theorem. But since for any $\Lambda \in \mO$ we have
\ba
\< U, \Lambda(\sigma_i)-\sigma'_i \> = 0 \; \forall i
\ea
and $\Lambda(\sigma_i)-\sigma'_i$ are not uniformly 0, this means that there must exist an index $i^\star$ such that $\Lambda(\sigma_{i^\star})-\sigma'_{i^\star} \notin \mC$ and so there exists an effect $E$ such that $\< E, \Lambda(\sigma_{i^\star})-\sigma'_{i^\star} \> < 0$. Choosing an arbitrary index $i^\diamond \in \{0,\ldots,N-1\} \setminus \{i^\star\}$, where such a choice is guaranteed to exist because $N \geq 2$ by assumption, we see that the choice of measurement $\mbM = \{M_i\}_{i=0}^{N-1}$ as $M_{i^\star} = E/\norm{E}_{\Omega}^\circ$, $M_{i^\diamond} = U-E/\norm{E}_{\Omega}^\circ$, $M_{i} = 0 \; \forall i \in \{0,\ldots,N-1\} \setminus \{i^\star, i^\diamond\}$ together with the probability distribution $\{p_i\}$ defined as $p_{i^\star} = 1$, $p_i = 0 \; \forall i \neq i^\star$ contradicts Eq. \eqref{eq:complete monotone positive ensemble}.

Similarly, in case (ii) we get that
 \ba
  0&\leq& \max_{\Lambda \in \mO} \min_{\{M_i\}_{i=0}^{N}} \sum_i p_i\< M_i, \Lambda(\sigma_i)-\sigma'_i \>
 \ea
 for a fixed probability distribution, 
 and choosing the measurement $\mbM = \{M_i\}_{i=0}^N$ as $M_{i^\star} = E/\norm{E}_{\Omega}^\circ$, $M_{i} = 0 \; \forall i \in \{0,\ldots,N-1\} \setminus \{i^\star\}$, with $M_{N} = U-E/\norm{E}_{\Omega}^\circ$ completes the proof.
\end{proof}
\endgroup

\section{Duality in conic optimization}\label{app:duality}

For completeness, we include a derivation of the dual form of the optimization problems which we employ throughout the manuscript. This section is based on standard arguments found e.g. in Refs.~\cite{boyd_2004,hiriart-urruty_2001}.

Consider first some real complete normed vector spaces $\mW, \mW'$ and the optimization problem whose optimal value is given by
\begin{equation}\begin{aligned}
    \inf \lset \< A, x \> \sbar \Lambda(x) = y, \; x \in \mK \rset
\label{eq:primal general}
\end{aligned}\end{equation}
where $A \in \mW\*, y \in \mW'$ are given, $\Lambda: \mW \to \mW'$ is some linear function, and $\mK \subseteq \mW$ is a closed and convex cone. All of the optimization problems considered in this work can be expressed in this form (and indeed so can any convex optimization problem over a closed and convex set), which we will see explicitly. 

We will refer to the above as the \textit{primal} problem, and to the set $\lset x \in \mW \sbar \Lambda(x) = y, \; x \in \mK \rset$ as the feasible set. Define the Lagrangian
\begin{equation}\begin{aligned}
     L(x; Q, Z) \coloneqq \< A, x \> - \< Z, \Lambda(x) - y \> - \< Q, x\>
 \end{aligned}\end{equation}
 where $Q \in \mW\*, Z \in \mW'\*$ are the so-called Lagrange multipliers. Notice now that for every $x$ such that $\Lambda(x) - y \neq 0$, there must exist a choice of $Z \in \mW'\*$ such that $\<Z, \Lambda(x) - y \> < 0$; similarly, by the strongly separating hyperplane theorem \cite{rockafellar2015convex}, for any $x \notin \mK$ there will exist a choice of $Q \in \mK\*$ such that $\< Q, x \> < 0$, where $\mK\* = \lset Y \in \mW\* \sbar \<Y, k \> \geq 0 \; \forall k \in \mK \rset$ is the cone dual to $\mK$. This allows us to write
\begin{equation}\begin{aligned}
    \sup_{\substack{Q \in \mK\*\\Z \in \mW\*}} L(x; Q, Z) = \begin{cases} \<A, x\> & \text{ if } \Lambda(x) = y \text{ and } x \in \mK\\ \infty & \text {otherwise},\end{cases}
\end{aligned}\end{equation}
which in particular means that
\begin{equation}\begin{aligned}
    p = \inf_{x \in \mW}  \sup_{\substack{Q \in \mK\*\\Z \in \mW'\*}} L(x; Q, Z).
\end{aligned}\end{equation}
The \textit{dual} problem is then defined by interchanging the minimization and maximization in the above:
\begin{equation}\begin{aligned}
    d \coloneqq \sup_{\substack{Q \in \mK\*\\Z \in \mW'\*}} \inf_{x \in \mW} \, L(x; Q, Z).
\end{aligned}\end{equation}
Noticing that
\begin{equation}\begin{aligned}
    \inf_{x \in \mW} L(x; Q, Z) &= \inf_{x \in \mW} \< A - \Lambda\*(Z) - Q, x \> + \<Z, y \>\\
    &= \begin{cases} \< Z, y \> & \text{ if } A - \Lambda\*(Z) - Q = 0\\ -\infty & \text{ otherwise} \end{cases}
\end{aligned}\end{equation}
since $\< A - \Lambda\*(Z) - Q, x \>$ is a linear function of $x$, we can equivalently write
\begin{equation}\begin{aligned}
    d &=  \sup \lset \< Z, y \> \sbar A - \Lambda\*(Z) - Q = 0,\; Q\in \mK\* \rset \\
    &= \sup \lset \< Z, y \> \sbar A - \Lambda\*(Z) \in \mK\* \rset.
\end{aligned}\end{equation}
We often refer to this form as the dual form of the primal optimization problem $p$.

It is not difficult to see that $p \geq d$ in general, a fact often called weak Lagrange duality. Crucially, Slater's theorem (see e.g. Ref.~\cite{boyd_2004,ponstein_2004}) states that if there exists a feasible solution $x$ such that $x$ is in the (relative) interior of $\mK$ --- called a strictly feasible solution --- then $p = d$. We refer to this property as Slater's condition, and the equivalence between the primal and dual problems as strong Lagrange duality.

Consider first the generalized robustness of states, which can be expressed as
\begin{equation}\begin{aligned}
    R_\mF(\omega) + 1 = \min \lset \< U, \sigma \> \sbar \sigma - \omega \in \mC,\; \sigma \in \cone(\mF) \rset.
\label{eq:robustness state app}
\end{aligned}\end{equation}
Note that \eqref{eq:primal general} is reduced to \eqref{eq:robustness state app} by choosing $\mW=\mV\oplus\mV,\ \mW'=\mV,\ \mK={\rm cone}(\mF)\oplus \mC,\ \Lambda(x_1\oplus x_2)=x_1-x_2,\ A=U\oplus 0$ and $y=\omega$; analogous forms can be obtained for the other considered measures. 
Writing the Lagrangian as $L(\omega; X, Z) = \< U, \sigma \> - \< X, \sigma - \omega \> - \< Z, \sigma \>$, following the steps above we straightforwardly obtain the dual as
\begin{equation}\begin{aligned}
    \max \lset \< X, \omega \> \sbar X \in \mC\*, U - X \in \mF\* \rset
\end{aligned}\end{equation}
as announced in Eq.~\eqref{eq:gen rob state dual}.
Analogously, the standard robustness $R^\mF_\mF$ can be obtained by changing the constraint $\sigma - \omega \in \mC$ to $\sigma - \omega \in \cone(\mF)$.

In the case of the robustness of measurement, for any $N+1$-outcome measurement $\mbM = \{M_i\}_{i=0}^{N}$ we can write
\begin{equation}\begin{aligned}
    R_{\mE_{\mF}}(\mbM) = \min \lset \lambda \sbar M_i + N_i \in \mE_\mF \; \forall i,\; N_i \in \mC\* \; \forall i,\; \lambda U - \sum_i N_i = 0_{\mV\*} \rset
\end{aligned}\end{equation}
which gives the Lagrangian as
\begin{equation}\begin{aligned}
    L(\{N_i\}, \lambda; \{\sigma_i\}, \{\delta_i\}, \eta) &= \lambda - \sum_i \< M_i + N_i, \sigma_i \> - \sum_i \< N_i, \delta_i \> - \< \lambda U - \sum_i N_i, \eta \>
    \\ &= \lambda ( 1 - \<U, \eta \> ) + \sum_i \<N_i, \eta - \sigma_i - \delta_i \> - \sum_i \< M_i, \sigma_i \>.
\end{aligned}\end{equation}
Optimizing over the Lagrange multipliers $\sigma_i \in \mE_\mF\*$, $\delta_i \in \mC$, and $\eta \in \{0_{\mV\*}\}\* = \mV$, we get the dual as in Eq.~\eqref{eq:dual_robmes}--\eqref{eq:dual_cond4}.

In the case of the robustness of channels, we have the problem as
\begin{equation}\begin{aligned}
    R_{\mO_\mF} (\Lambda) = \min \lset \lambda \sbar J_\Xi \in \cone(\mO_\mF^J),\; J_\Xi - J_\Lambda \cgeq 0,\; \Tr_{\mV'} J_\Xi - (1+\lambda) \mbI_{\mV} = 0_{\mV} \rset.
\end{aligned}\end{equation}
Writing the Lagrangian as
\begin{equation}\begin{aligned}
    L(J_\Xi, \lambda ; X, Y, Z) &= \lambda - \< Z, J_\Xi \> - \< Y, J_\Xi - J_\Lambda \> - \< X, (1+\lambda) \mbI_\mV - \Tr_{\mV'} J_\Xi \>\\
    &= \lambda ( 1 - \< X, \mbI_{\mV} \> ) - \< Z + Y - X \otimes \mbI_{\mV'}, J_\Xi \> + \< Y, J_\Lambda \> - \< X, \mbI_{\mV} \>
\end{aligned}\end{equation}
where we used that $\Tr_{\mV'} (\cdot) \* = \cdot \otimes \mbI_{\mV'}$, an optimization over the Lagrange coefficients $X \in \mV, Y \cgeq 0$, and $Z \in \mO_\mF^J \*$ gives the desired dual problem \eqref{eq:dual1 obj channel}--\eqref{eq:dual1 cond3 channel}.


\bibliographystyle{apsrmp4-2}
\bibliography{myref}

\end{document}